\newtheorem{theorem}{Theorem}[section]
\newtheorem{lemma}[theorem]{Lemma}
\newenvironment{proof}[1][Proof]{\begin{trivlist}
\item[\hskip \labelsep {\bfseries #1}]}{\end{trivlist}}
\newenvironment{definition}[1][Definition]{\begin{trivlist}
\item[\hskip \labelsep {\bfseries #1}]}{\end{trivlist}}
  \long\def\remark#1{%
      \ifvmode
         \marginpar{\raggedright\hbadness=10000
         \parindent=8pt \parskip=2pt
         \def\baselinestretch{0.8}\tiny
         \itshape\noindent #1\par}%
      \else
          \unskip\raisebox{-3.5pt}{\rlap{$\scriptstyle\diamond$}}%
          \marginpar{\raggedright\hbadness=10000
         \parindent=8pt \parskip=2pt
         \def\baselinestretch{0.8}\tiny
         \itshape\noindent #1\par}%
      \fi}
\newcommand{\qed}{\nobreak \ifvmode \relax \else
      \ifdim\lastskip<1.5em \hskip-\lastskip
      \hskip1.5em plus0em minus0.5em \fi \nobreak
      \vrule height0.75em width0.5em depth0.25em\fi}
\begin{document}

\thesiscopyrightpage                 

\thesiscertificationpage             

\thesistitlepage                     


\thesisdedicationpage                

\begin{thesisacknowledgments}        
The work described in this dissertation is the result of the advice, 
contributions, and collaborations of many friends and colleagues.

First, I would like to thank my advisor, Professor Lenore Cowen, for many
years of mentorship.
Her encouragement, patience, and motivation have been essential to this work.
I thank my dissertation committee members, Professor
Donna Slonim, Professor Benjamin Hescott, Professor Yu-Shan Lin, and Professor
Bonnie Berger.

My collaborators during the course of my doctoral work have also been inspiring.
I thank Anoop Kumar, Matt Menke, Raghavendra Hosur, Shilpa Nadimpalli, Andrew 
Gallant, Po-Ru Loh, Michael Baym, Jian Peng, Jisoo Park, and Mengfei Cao for 
their contributions, be they in code, conversation, or collegiality.

I thank Professor Norman Ramsey, Professor Sinaia Nathanson, and 
Dean Lynne Pepall, along with my GIFT colleagues, for teaching me how to teach.
I also thank my teaching assistants, Sarah Nolet, Joel Greenberg, Andrew 
Pellegrini, and Michael Pietras, for helping me teach, Nathan Ricci for the
constant feedback and guest lectures, and Professors Carla
Brodley and Diane Souvaine for the opportunities.

I thank the Tufts University Computer Science Department, especially Gail 
Fitzgerald, Jeannine Vangelist, and Donna Cirelli, for so much support over so 
many years.

I owe a debt of gratitude to Michael Bauer, Erik Patton, Jon Frederick, George
Preble, and Eric Berg for keeping our systems running despite my best efforts
to the contrary. 

Much of the material in Chapter 2 of this dissertation has been published 
as ``Touring Protein Space with Matt'', with Anoop Kumar, Matt Menke, and
Lenore Cowen,
in the journal \emph{ACM Transactions on Computational Biology and 
Bioinformatics.}
Much of the material in Chapter 3 of this dissertation has appeared 
as ``SMURFLite: combining simplified Markov random fields with simulated 
evolution improves remote homology detection for beta-structural proteins into 
the twilight zone'', with Raghavendra Hosur, Bonnie Berger, and Lenore Cowen,
in the journal \emph{Bioinformatics}.
Some of the material in Chapter 4 of this dissertation has appeared as an
experience report, ``Experience Report: Haskell in Computational Biology'',
with Andrew Gallant and Norman Ramsey, in 
the Proceedings of the \emph{International Conference on Functional 
Programming}.

Finally, I would not have reached this point without the love and support of my
parents, Anne and Norman Daniels, and my wife, Rachel Daniels.
\end{thesisacknowledgments}

\begin{thesisabstract}

Given the amino acid sequence of a protein, researchers often infer its
structure and function by finding homologous, or evolutionarily-related, 
proteins of known structure and function.
Since structure is typically more conserved than sequence
over long evolutionary distances, recognizing remote protein homologs
from their sequence poses a challenge.

We first consider all proteins of known three-dimensional structure, and 
explore how they cluster according to different levels of homology. 
An automatic computational method reasonably approximates
a human-curated hierarchical organization of proteins according to their
degree of homology.

Next, we return to homology prediction, based only on the one-dimensional
amino acid sequence of a protein. 
Menke, Berger, and Cowen proposed a Markov random field model to predict 
remote homology for beta-structural proteins, but their formulation was
computationally intractable on many
beta-strand topologies.

We show two different approaches
to approximate this random field, both of which make it computationally
tractable, for the first time, on all protein folds. 
One method simplifies the random field itself, while the other retains 
the full random field, but approximates the solution through 
stochastic search. 
Both methods achieve
improvements over the state of the art in remote homology detection
for beta-structural protein folds.

\nopagebreak
\end{thesisabstract}   

\tableofcontents                   
\listoftables                      
\listoffigures                     
\chapter{Introduction}

\label{chapter:introduction}

\section{Proteins}

Proteins are the molecular machines that are essential to the process of life.
For example, transmembrane proteins allow molecules to move into and out of the 
cell.
Hemoglobin ferries iron through the blood, while immunoglobulin provides
for defense against pathogens.
Actin contracts our muscles, and myelin insulates our nerves.

It is well known that DNA encodes the genetic information that determines how
we develop and function.
Portions of this DNA are transcribed into RNA, and then 
a complex piece of cellular machinery called the ribosome translates this RNA 
into amino acids, the building blocks of proteins.
Proteins are the machines for which the
DNA is the blueprint.
Chains of amino acids fold into intricate, low-energy
forms, and these structures {\emph do things}.

It is the structure of
a protein that allows it to perform its function, and while this structure 
is determined by the
amino acid sequence that derives from DNA, the relationship between sequence
and structure is not simple.


\subsection{Primary Structure}

Proteins are composed of linear chains of molecules called \textit{amino acids}.
An amino acid is a molecule comprising an \textit{amine} group, a 
\textit{carboxyl} group, and one of twenty possible \textit{sidechains} (see
Figure~\ref{amino_acid}).
Each of these components is attached to a carbon atom, known as the 
$\alpha$-carbon.

\begin{figure}[htb!]
\begin{center}
  \fbox{\includegraphics[width=5in]{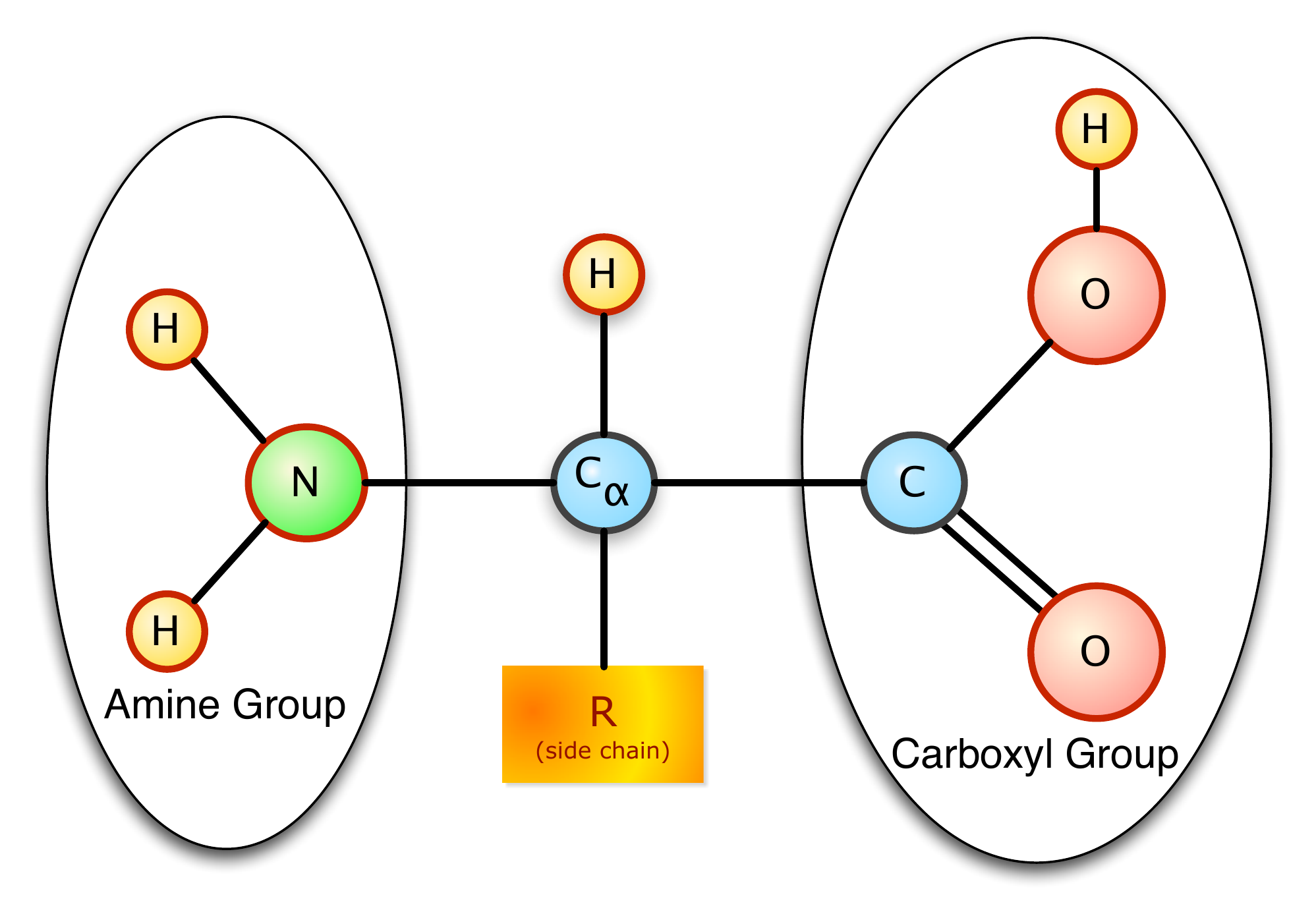}}
   \caption[The general structure of  an amino acid showing the hydrogen(H), 
   nitrogen(N), oxygen(O) and carbon(C,$\text{C}_{\alpha}$) atoms.]{The general 
   structure of  an amino acid showing the hydrogen(H), nitrogen(N), oxygen(O) 
   and carbon(C,$\text{C}_{\alpha}$) atoms.
   The sidechain is one of twenty possible ``decorations;'' amino acids differ
   only in their sidechains. }
   \label{amino_acid}
 \end{center}
\end{figure}

Amino acids bind to one another via a \textit{peptide bond}, which forms when
the carboxyl group of one amino acid gives up an oxygen and hydrogen to bind
with the amine group of another amino acid, which gives up a hydrogen. 
This results in a free water molecule. 
In addition, as multiple amino acids form \textit{polypeptide chains}, the 
unbound amine group at one end is known as the \textit{N-terminal} end of the
resulting protein, while the unbound carboxyl group at the other end is known
as the \textit{C-terminal} end (Figure~\ref{polypeptide}).

\begin{figure}[htb!]
\advance\leftskip-0.3in
\begin{center}
  \fbox{\includegraphics[width=5in] {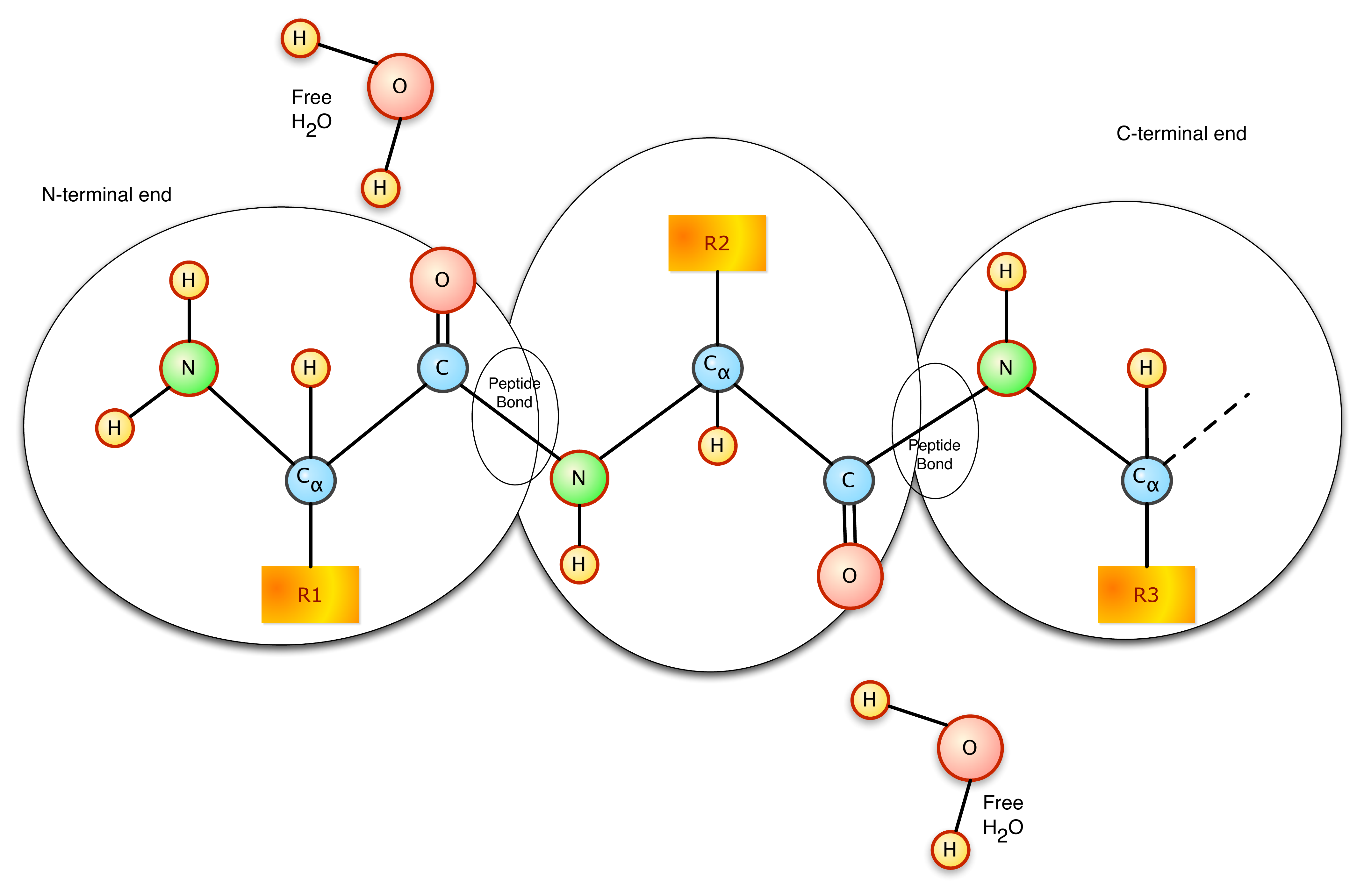}}
   \caption[Peptide bonds and the protein backbone]{Amino acids join by peptide 
   bond to form the backbone.  
   Individual amino acids are highlighted with ovals.
   The sidechain connected to the $\text{C}_{\alpha}$  is different for each 
   amino acid. 
    When a peptide bond forms, a water molecule forms from the hydrogen given
    up by the nitrogen end of one amino acid, and the oxygen and hydrogen given
    up by the carbon end of the other.}
   \label{polypeptide}
 \end{center}
\end{figure}

The peptide linkages, along with the $\alpha$-carbon atoms, form the 
\textit{backbone} of the protein.
Ultimately, the protein folds into a globular form, generally representing a
lowest-energy conformation.
It is useful to describe the structure of proteins at several levels of
organization.

The \textit{primary structure} of a protein is simply its sequence of amino
acids.
In principal, any of the 20 standard amino acids can occur in any
position of an amino acid chain; for a protein of length $n$, there are $20^{n}$
possible protein sequences.
Of course, the subset of those sequences that will fold into a compact, 
three-dimensional structure is much smaller; the subset of \emph{those} that 
would fold into a compact, three-dimensional structure that exists in nature is
smaller still.
However, determining which protein sequences nature allows is not 
trivial.

\subsection{Secondary Structure}

Local interactions among amine and carbonyl groups result in \textit{hydrogen
bonds} between amino acids that are not immediately adjacent in sequence.
A hydrogen bond is the electrostatic attraction between a hydrogen atom in
one amino acid and an oxygen or nitrogen atom in another.
In particular, we can describe the \textit{secondary structure} of a protein
according to the shape of the angles of the backbone.
The most common type of secondary structure is the $\alpha$-helix, in which the
protein backbone coils into a twisted shape, stabilized by hydrogen bonds 
(Figure~\ref{helix}).
The most common $\alpha$-helices have hydrogen bonds between residues
four positions apart in sequence.
Other, less common helical structures include the $3_{10}$ helix, in which
residues three apart in sequence form hydrogen bonds, and the $\pi$ helix, in 
which residues five apart in sequence form hydrogen bonds.

Another secondary structure is the $\beta$-strand, which in combination form 
$\beta$-sheets.
$\beta$-strands occur when the backbone is stretched out; typically, this 
conformation is stabilized by hydrogen bonds between adjacent strands
(Figure~\ref{sheet}), resulting in $\beta$-sheets.
The hydrogen bonds in $\beta$-sheets may occur between residues that are very
far apart from each other in the amino acid sequence.
$\beta$-strands in a sheet may be parallel or anti-parallel to one another with
respect to the direction of the amino acid sequence.

The remainder of local backbone conformations, consisting of turns, bulges,
loops, bridges, etc., have been classified into several different subcategories,
but is often grouped together into a third category of secondary structure,
commonly referred to as a \emph{coil}.

\begin{figure}[htb!]
\advance\leftskip-0.3in
\begin{center}
  \fbox{\includegraphics[width=5in] {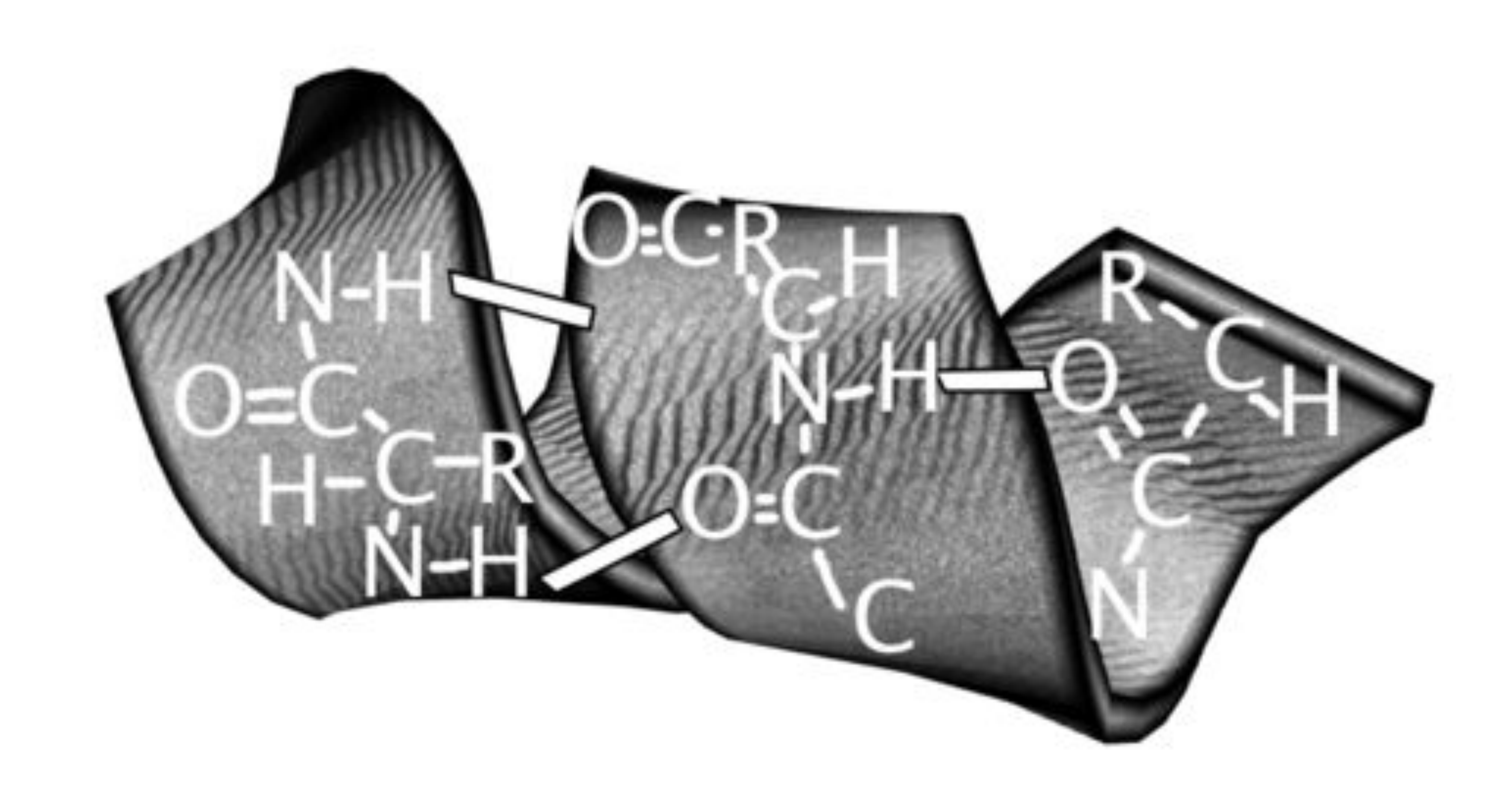}}
   \caption[$\alpha$-helix secondary structure.]{$\alpha$-helix secondary 
   structure.
   Hydrogen bonds between residues 4 positions apart in sequence cause the
   helical shape.
   Other, less common helix structures include the $3_{10}$ helix, in which
   residues 3 apart in sequence form hydrogen bonds, and the $\pi$ helix,
   in which residues 5 apart in sequence form hydrogen bonds.}
   \label{helix}
 \end{center}
\end{figure}

\begin{figure}[htb!]
\advance\leftskip-0.3in
\begin{center}
  \fbox{\includegraphics[width=5in] {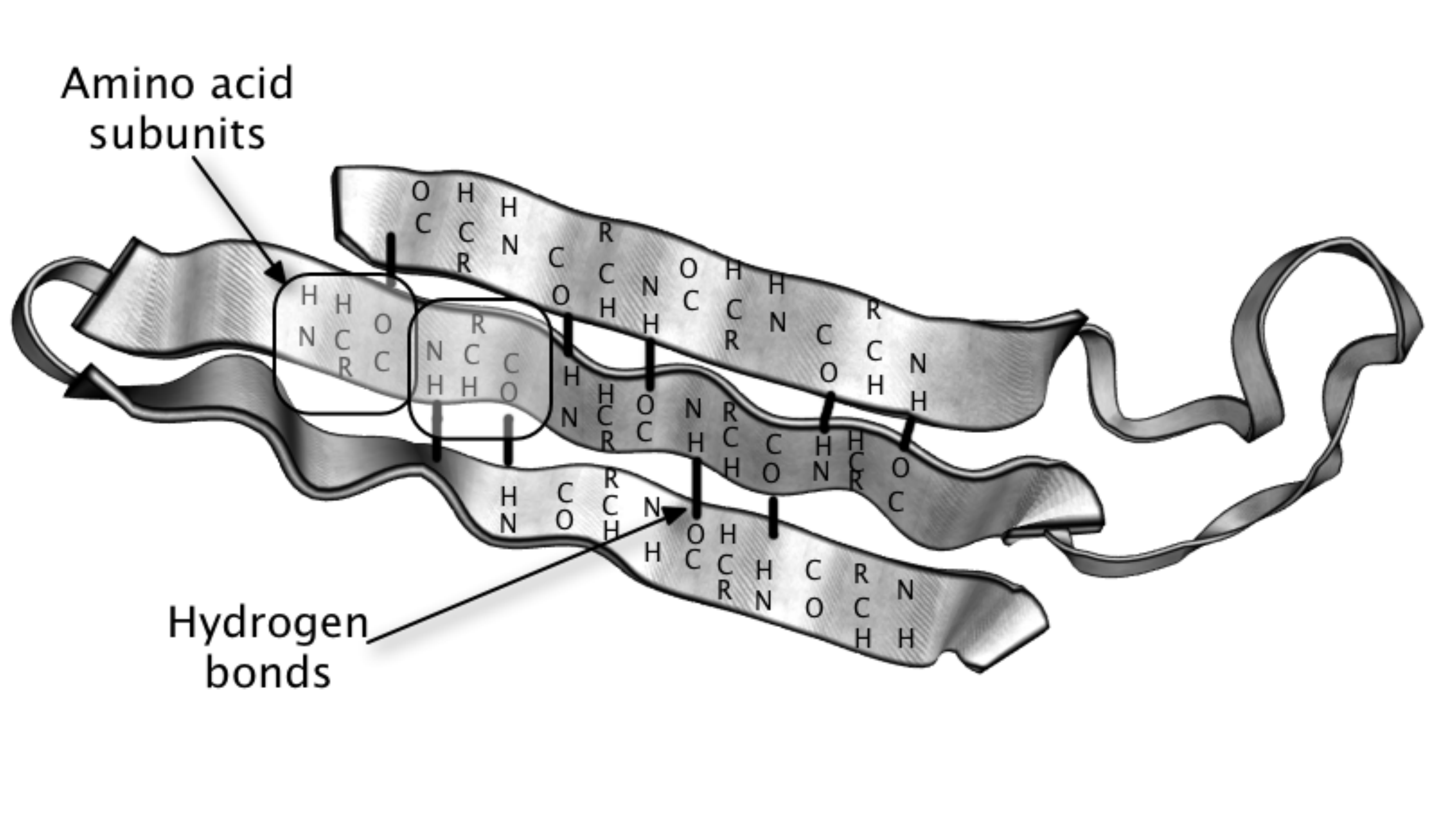}}
   \caption[$\beta$-sheet secondary structure.]{$\beta$-sheet secondary 
   structure.
   Hydrogen bonds between residues that may be quite far apart in sequence
   cause this pleated, sheet-like shape.
   Antiparallel $\beta$-strands are shown here; parallel $\beta$-strands also 
   exist.
   }
   \label{sheet}
 \end{center}
\end{figure}

\subsection{Supersecondary and Tertiary Structure}

We can mark secondary structural elements of the complete structure of a protein
backbone as it is folded in three-dimensional space, and consider the pattern
of where the $\alpha$-helices and $\beta$-strands lie.
For example, $\beta$-strands can be organized into $\beta$-barrels 
(Figure~\ref{barwin}), sandwiches,
or propellers; $\alpha$-helices can be organized into 2- or 4-helix bundles, and
there are other patterns of strand topologies that involve mixed collections of
$\alpha$-helices and $\beta$-strands.
The topologies of the various strand positions are known as 
\emph{super-secondary structure}.

\begin{figure}[ht!]
\begin{center}
  \fbox{\includegraphics[width=5in] {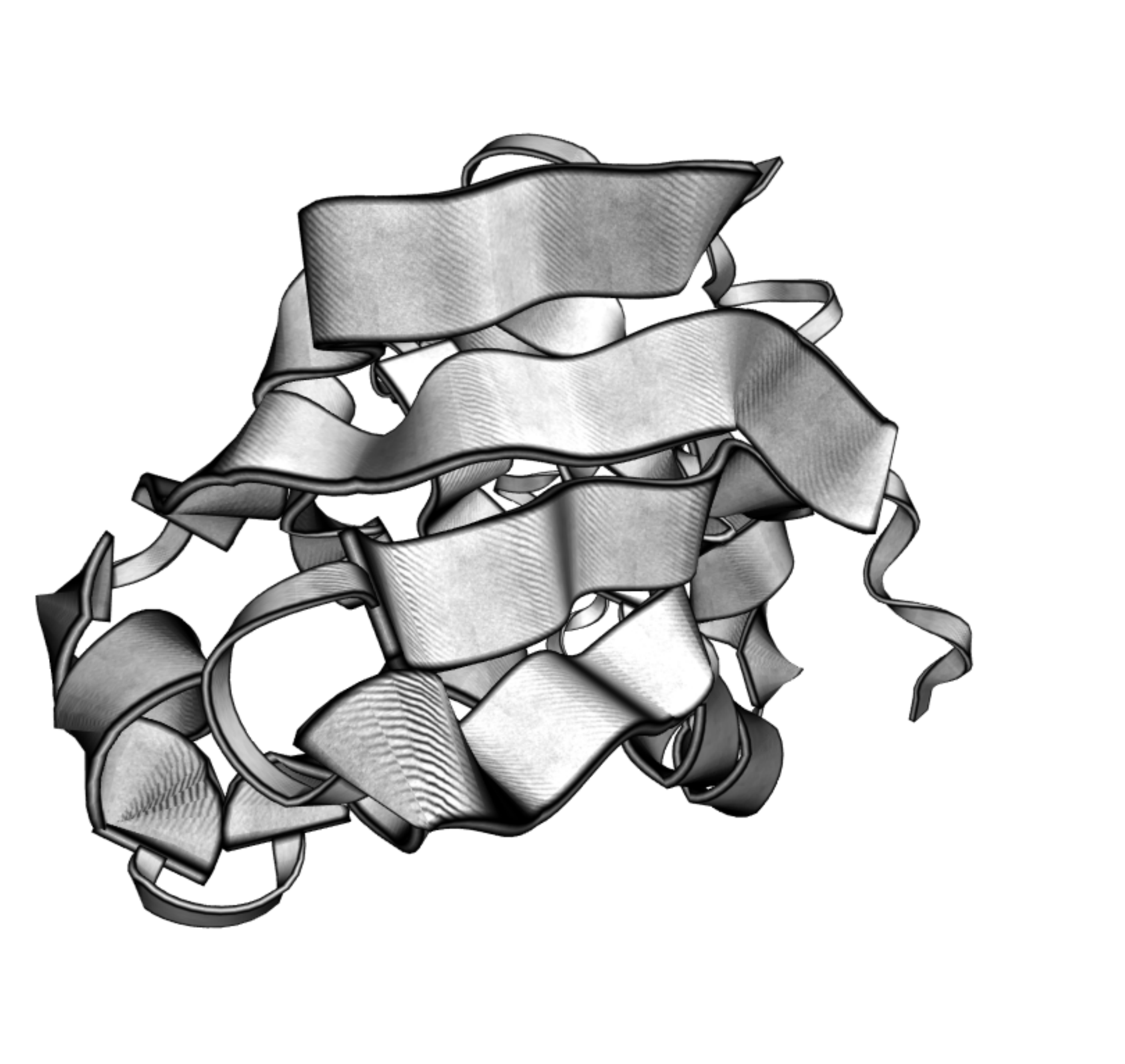}}
   \caption[Super-secondary structure ``cartoon'' of Barwin (PDB ID 
   1BW3)]{Super-secondary structure ``cartoon'' of Barwin (PDB ID 1BW3).
   Barwin, an endoglucanase, has eight $\beta$-strands forming a closed 
   ``barrel''
   shape, as well as four $\alpha$-helices.}
   \label{barwin}
 \end{center}
\end{figure}

The \textit{tertiary structure} of a protein is the fully-specified
three\--dimensional position of every atom.
The orientation of the backbone atoms in three-dimensional space forms
three distinguishing dihedral angles:
$\phi$ between the carbon-1-nitrogen and $\alpha$-carbon-carbon-1
atoms in an amino acid, $\psi$ between the nitrogen-$\alpha$-carbon and carbon-1-nitrogen atoms,
and $\omega$ between the $\alpha$-carbon-carbon-1 and the nitrogen and 
$\alpha$-carbon of the next amino acid (see
Figure~\ref{angles}).
The angle $\omega$ is usually $0^\circ$, and occasionally $180^\circ$.
The
side chain of each amino acid must then pack into a low-energy state
in such a way that it does not interfere with the other amino acids in the
protein.
The tertiary structure represents, in most cases, a global minimum energy state,
also known as the \textit{native state}.
Many proteins have now had their tertiary structure determined by X-ray 
crystallography, or by nuclear magnetic resonance (NMR) spectroscopy.
A protein whose structure has been determined experimentally is said to have a
\emph{solved} structure.
However, the difficulty of experimentally solving the structure of any 
particular protein of interest can vary.
X-ray crystallography's limiting factor is that not all proteins can be
put into solution and crystallized, while NMR's limiting factor is primarily 
computational.
The Protein Data Bank (PDB)~\cite{Bernstein:1977un, Berman:2000hl} is a publicly available 
database that contains the atomic coordinates of all proteins whose tertiary 
structure has been solved.

\begin{figure}[ht!]
\begin{center}
  \fbox{\includegraphics[width=5in] {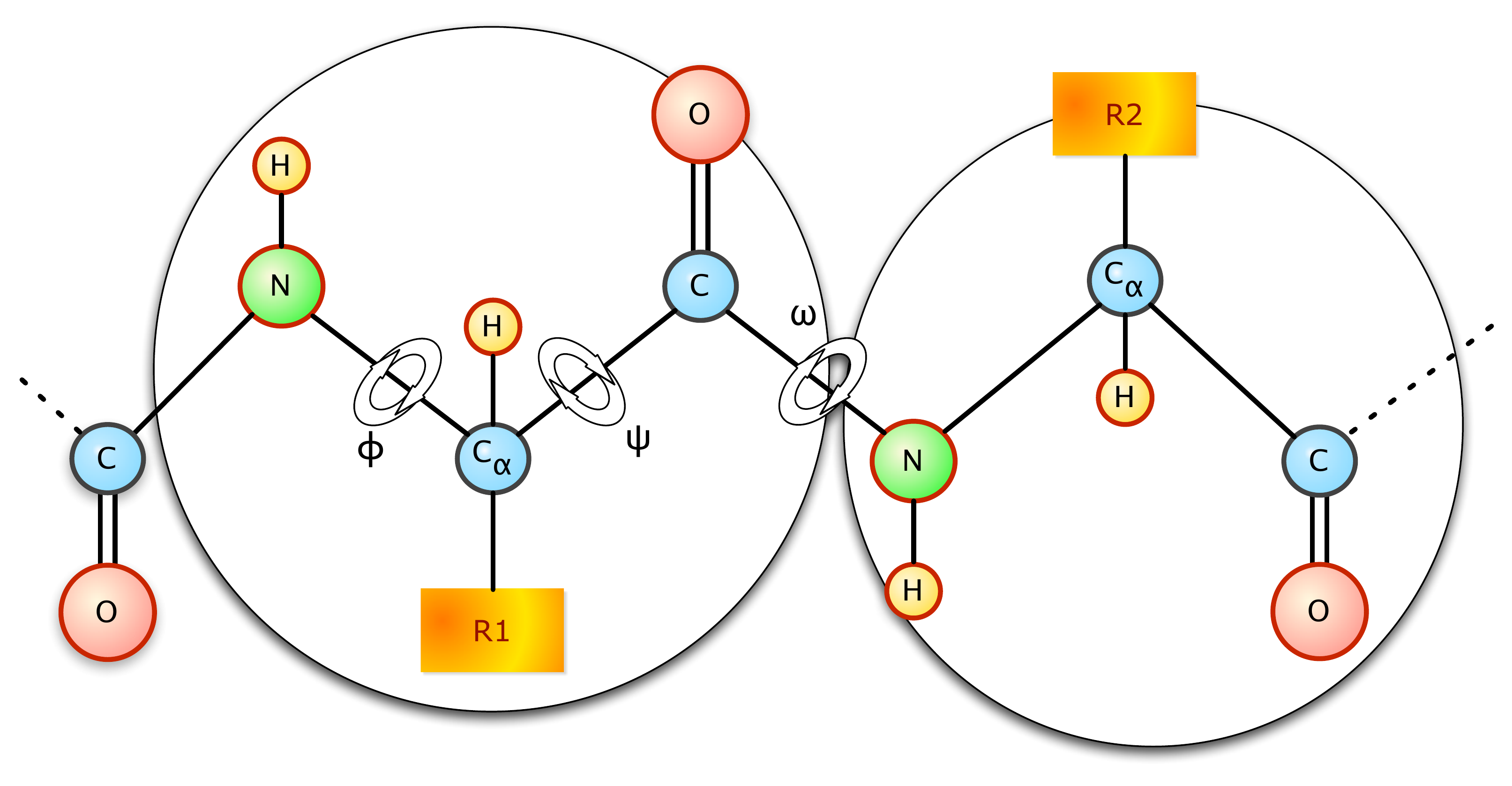}}
   \caption[Backbone angles]{The orientation of the backbone atoms in 
   three-dimensional space 
   forms three dihedral angles:
     $\phi$ between the carbon-1-nitrogen and $\alpha$-carbon-carbon-1
    atoms, $\psi$ between the nitrogen-$\alpha$-carbon and carbon-1-nitrogen 
    atoms,
    and $\omega$ between the $\alpha$-carbon-carbon-1 and the nitrogen and 
    $\alpha$-carbon of the next amino acid.}
   \label{angles}
 \end{center}
\end{figure}

Finally, \textit{quaternary structure} describes how multiple tertiary 
structures interact; these may be multiple duplicate protein chains (for 
example, a homodimer is a complex of two identical protein chains, while a 
heterodimer is a complex comprising two different protein chains).
In this work, we focus on individual chains, rather than quaternary structures.

\subsection{Protein Data Sets}

In order to make sense of the evolutionary, structural, and functional 
relationships among proteins, biologists have created several organizational
schemes.
Structural Classification Of Proteins 
(SCOP)~\cite{Murzin:1995uh, Andreeva:2004ic}
and CATH (which stands for Class, Architecture, Topology,
and Homologous superfamily)~\cite{Orengo:1997vy, Pearl:2003wb, Greene:2007iu}
are hierarchical schemes that
place proteins in a tree based primarily on structural, but also on evolutionary
and functional similarities.
In this work, we will primarily rely on SCOP, since it has been used in many
homology detection 
studies~\cite{Elofsson:1999tj, Wistrand:2004ia, Soding:2005ff}.

SCOP organizes all protein sequences of known structure (with some time delay)
into a four-level hierarchy.
The top level of the SCOP hierarchy is \emph{class}, which distinguishes the
primary secondary-structural composition of proteins: mainly-$\alpha$,
mainly-$\beta$, mixed $\alpha$ and $\beta$, cellular-membrane proteins, among
others.
The second level of the SCOP hierarchy is \emph{fold}, which organizes proteins
by overall structural motif, or supersecondary structure.
Proteins in the same fold are not necessarily evolutionarily related.
Below fold is \emph{superfamily}, which organizes proteins that share
evolutionary relationships, as well as similar structure and function.
Below the superfamily level is the \emph{family} level of SCOP.
Proteins in the same family have clear evolutionary relationships, and a
significant level of sequence similarity.
Figure~\ref{scop-hierarchy} illustrates the SCOP hierarchy.

\begin{figure}[htb!]
\advance\leftskip-0.3in
\begin{center}
  \fbox{\includegraphics[width=5in] {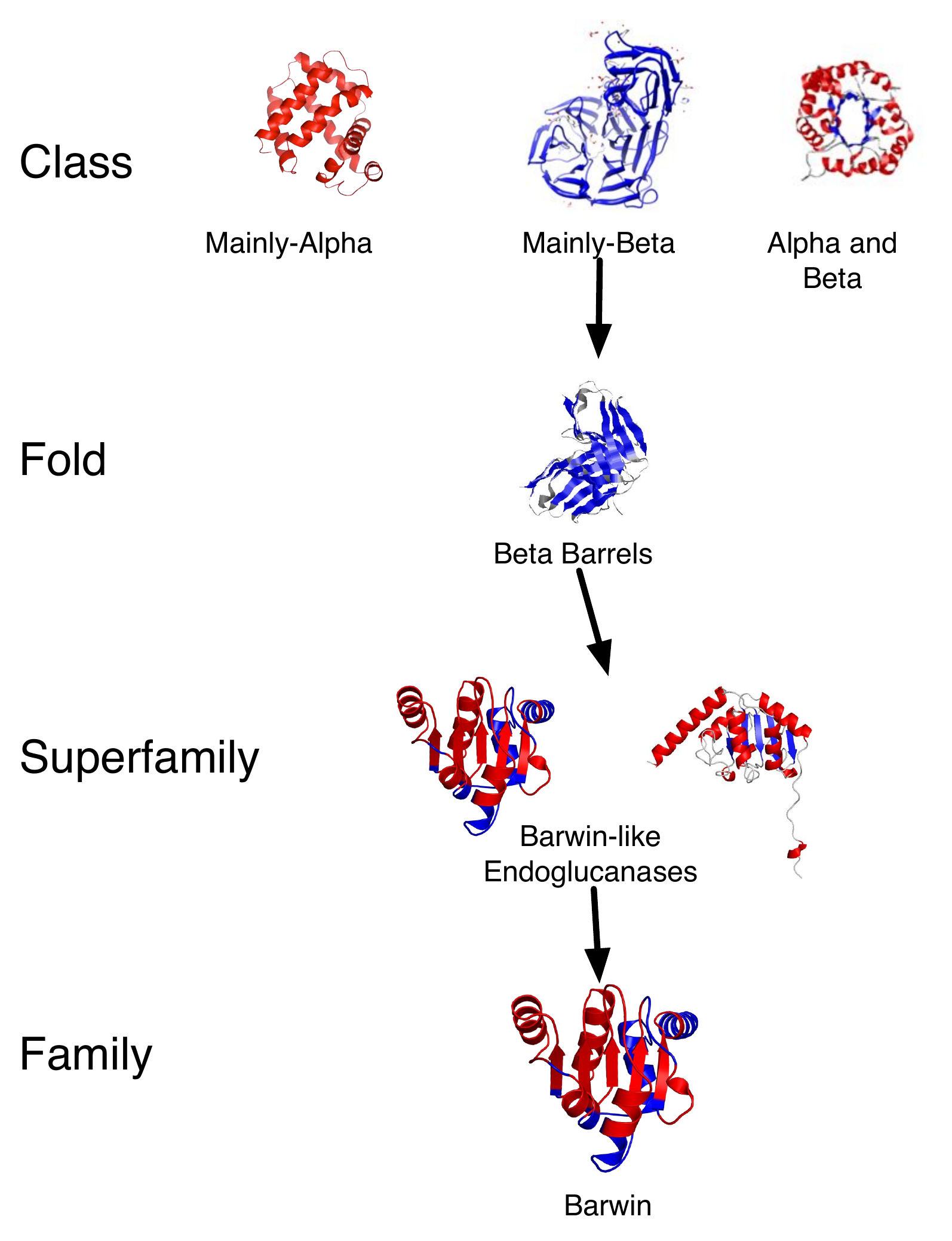}}
   \caption[The SCOP hierarchy of protein structure.]{The SCOP hierarchy of 
   protein structure.
   \emph{Class} organizes proteins in large part according to supersecondary
   structural content.
   \emph{Fold} organizes proteins by supersecondary structural motifs.
   \emph{Superfamily} organizes proteins by structural, functional, and
   evolutionary similarity, while \emph{Family} organizes them by sequence
   similarity as well.
   }
   \label{scop-hierarchy}
 \end{center}
\end{figure}

\subsection{Protein Folding}

The process by which a protein, as its amino acid sequence is emitted by the
ribosome, forms its stable tertiary structure is called \textit{folding}.
In 1969, molecular biologist Cyrus Levinthal noted~\cite{Levinthal:1969vj} that 
even given a coarse
(tripartite) discretization of bond angles, a protein of merely 100 amino acids
(a short chain by most standards) could take $3^{300}$ distinct 
three-dimensional conformations (tertiary structures).
Given the accepted view that proteins
typically fold to globally minimum energy states, Levinthal noted that a
protein would take the lifetime of the universe to find a minimum energy state
by sampling the entire fold space.
However, in practice, proteins fold in
microseconds or milliseconds.
This apparent paradox became known as Levinthal's
Paradox; the solution to the paradox must be that nature does not explore the
entire fold space.

The rapidity of protein folding is thought to be because proteins fold along
\textit{folding funnels}, which prune much of the 
possible fold space very quickly~\cite{Dill:1997vn, Matagne:1998uu, Tsai:1999we, Dinner:2000vu}. 
It is even conjectured~\cite{Rost:2002wx,Cuff:2009cz} that only those proteins that exhibit fold 
funnels that allow them to fold quickly have evolved; protein sequences that
would not quickly find stable native states would be selected against during
the course of evolution.

Physics-based approaches to computationally solving the protein folding problem 
try to solve the various force field equations 
(including hydrophobic, electrostatic, and van der Waals forces) to find a
minimum-energy state~\cite{BornbergBauer:1999wu, Heun:1999wr}.
However, even the most simplified models can prove computationally intractable.
In 1998, Berger and Leighton~\cite{Berger:1998vv} proved that the seemingly simple
HP (hydrophobic-hydrophilic) lattice model of protein folding is NP-hard.

For some purposes, such as understanding the molecular motion of proteins such
as ion channels (which control the flow of ions through a cellular membrane) or
flagellin (which forms the moving filament in bacterial flagella), just knowing 
the native state is not enough, and molecular dynamics simulations are 
necessary.

The current state of the art in full tertiary structure prediction via
molecular dynamics modeling relies on huge computational infrastructures; we 
describe two of them.
The first, Anton, is a
supercomputer purpose-built for protein simulations by the D.E.
Shaw Research~\cite{Shaw:2007cr}.
The second, Folding@home, is a worldwide distributed-computing system developed
at Stanford~\cite{Jayachandran:2006ur}; 
it uses spare CPU and GPU cycles on desktop, laptop, and video
game systems around the world.
Both of these systems can compute a few
milliseconds of simulation time per day.

Fortunately, however, it is not always necessary to determine tertiary 
structure to the level of precision achieved by experimental methods.
Computational biology methods that use statistical energy functions and
secondary or supersecondary structure prediction have made significant progress
in the last ten years~\cite{Moult:2006tn}.
In particular, \emph{approximately} predicting the tertiary structure--or
predicting the supersecondary structure--may be adequate when the end goal
is function prediction or homology detection.

\section{Protein Homology}

An alternative to experimentally predicting the structure of a protein is to try
to determine, based on \emph{sequence} similarity, that a protein of interest
is sufficiently closely related, in evolutionary terms, to some other protein
of solved structure that it is likely to fold into a similar shape.
However, while the task gets easier the closer it becomes to that of 
determining
sequence, the quality of the results worsens; protein sequence is less well
\textit{conserved} than structure; that is, fairly significantly different
protein sequences may nonetheless share quite similar structures and functions~\cite{DunbrackJr:2006dx}.

Biologists say that two proteins are \emph{homologous} when they are derived
from a common ancestor.
Often, homologous proteins share common structure.
When two protein sequences are similar, it is relatively easy to determine that
they are homologous.
However, homologous proteins may differ significantly in terms of sequence
identity.
Sequence analysis methods have long allowed for the detection of homologous
proteins, provided sequence divergence is not too great.
The problem of detecting homologous proteins when sequence similarity is low is
known as \emph{remote} homolog detection.
The purpose of this thesis is to develop novel methods for remote
homology detection.
Now, we will survey existing methods for homology detection.

\subsubsection{BLAST}

Altschul, et al. developed the Basic Local Alignment Search Tool (BLAST)~\cite{Altschul:1990dw}
algorithm as a faster alternative to dynamic programming-based methods such
as the Smith-Waterman~\cite{Smith:1981up} algorithm.
BLAST uses a number of heuristics to reduce the time required to perform an
alignment, at the possible expense of some accuracy.
BLAST also relies on an indexed database of sequences to be searched.

BLAST allows for fast search through databases to find potential homologs.
The protein-specific version of BLAST is called BLASTP.
BLASTP uses a \emph{substitution matrix} to score alignments; the most commonly
used substitution matrix is BLOcks of amino acid SUbstitution Matrix
(BLOSUM)~\cite{Henikoff:1992tk}.
A BLOSUM score $s(i,j)$ for two residues $i$ and $j$ is given by:
\begin{equation}
  s(i,j) = \frac{log \frac{P_{i,j}}{f_{i}f_{j}}}{\lambda}
\end{equation}
where $P_{i,j}$ is the probability of observing residues $i$ and $j$ aligned
in homologous sequences, and $f_{i}$ is the observed background frequency of
residue $i$, and $\lambda$ is a scaling factor chosen to produce integer
values for the scores~\cite{Henikoff:1992tk}.

Different variants of the BLOSUM matrices exist; for a chosen threshold $L$,
only sequences within a sequence identity threshold of $L\%$ are clustered into
a single representative sequence; those sequences are then aligned and the
alignment used to compute the BLOSUM$L$ matrix.
Thus, BLOSUM80 is intended for use in less divergent sequence alignments,
while BLOSUM50 is intended for use in more divergent sequence alignments.
BLOSUM62 is a commonly used default scoring matrix for protein sequence
alignment tools such as BLAST~\cite{Altschul:1997tl}.

There are newer BLASTP variants,
such as PSI-BLAST~\cite{Altschul:1997tl} and DELTA-BLAST~\cite{Boratyn:2012er} 
that improve
sensitivity by replacing BLOSUM with a \emph{Position-specific scoring matrix}
(PSSM) that scores mismatches differently depending on where they occur in the
alignment. 
PSI-BLAST determines its PSSM by iterative search: First, it performs a
standard BLASTP search, and computes a PSSM from the resulting alignment.
It then repeats this process, searching with the PSSM created by the previous
iteration, and computing a new PSSM.
In contrast, DELTA-BLAST uses pre-determined PSSMs derived from the Conserved
Domains Database (CDD)~\cite{MarchlerBauer:2005uv}, essentially groups of proteins already
determined to be homologous.

BLAST and its derivatives, such as PSI-BLAST and DELTA-BLAST, are effective at
identifying homologous protein sequences for a query sequence when those
homologous sequences share a reasonable amount of sequence identity with the
query sequence~\cite{Rost:1999taa}.
However, we wish to be able to identify homologous proteins--those that share
structural, functional, and evolutionary relationships--even when they do not
share a great deal of sequence similarity.
Since protein structure is more highly conserved than 
sequence~\cite{Dalal:1997wl}, we would like to incorporate
information that is not simply derived from sequence alignments.

\subsection{Structural Alignment}

Just as we can align the sequences of two or more proteins in order to compare
them, we can also align the \emph{structures} of two or more proteins.
Clearly, protein structure alignment requires knowing the tertiary 
structure--the three dimensional coordinates of all the atoms, or at very least
the backbone atoms--of the proteins to be aligned.

Structural alignment can be used to measure structural similarity, and from 
there infer functional and evolutionary relationships.
Structural alignment can also be used to measure the quality of a computational
protein structure \emph{prediction} versus a known, solved structure.
In general, protein structural alignment relies on some form of geometric
superposition, though a wide variety of algorithms exist for efficiently
computing this superposition.
In fact, computing the \emph{optimal} geometric superposition is known to be
NP-hard~\cite{WANG:1994jq}.
Several heuristic approaches have been developed for practical protein
structural alignment.
DALI~\cite{Shindyalov:1998wn}, for example, breaks the structures into hexapeptide fragments and 
calculates a distance matrix by evaluating the contact patterns between
fragments.
DALI then compares these distance matrices and applies a score-maximization
search to compute an alignment.
This approach is called ``aligned fragment pair'' alignment, and is also used by
MAMMOTH~\cite{Ortiz:2009wx}, which instead applies dynamic programming to 
compute the alignment.
Matt~\cite{Menke:2008wu} also uses aligned fragment pairs, but allows ``impossible''
translations and twists in order to better capture structural similarities at
the superfamily or even fold levels.
Hybrid aligners, which use both sequence and structure information, also exist.
DeepAlign~\cite{Wang:2012wq} incorporates not just atomic coordinates but also
secondary structural annotation and sequence information.
Our own Formatt~\cite{Daniels:2012jr} also combines sequence and structure 
information, to try to avoid ``register errors'' that trade significant 
sequence alignment errors for small structural gains.
Most of these methods are fundamentally solving a bi-criterion optimization
problem: we wish to align as much of the input proteins' structure as possible,
while at the same time minimizing the root mean square distance (RMSD) of the
resulting alignment, where RMSD is defined as the square root of the average 
distance between corresponding $\alpha$-carbon atoms between the backbones of 
the proteins in alignment~\cite{Kabsch:1976tp}.

When aligning more distantly-related proteins, structural alignment methods
often outperform purely sequence-based methods~\cite{Chothia:1986tm}.
For this reason, protein structural alignment is routinely used to produce
alignments of homologous proteins to form training sets for remote homology
detection techniques.

\section{Hidden Markov Models}

A Markov model represents a series of observations via a probabilistic 
finite-state automaton.
A Markov model on an alphabet $A$ is a triplet

\begin{equation}
  M=(Q,\pi,\alpha),
\end{equation}  
   where $Q$ is
a finite set of states, each state generates a character from $A$, $\pi$ is the
set of initial state probabilities, and $\alpha$ is the set of state transition
probabilities.
Markov models are so named because they uphold the \emph{Markov property},
which states that future states depend only on the current state of the system.
In other words, first-order Markov models are ``memoryless.''
A Markov model may take into account a \emph{fixed} number of past states (a 
$k_{th}$-order Markov model make take into account $k$ past states).

A hidden Markov model (HMM) represents a series (sometimes a time series) of
observations by a ``hidden'' stochastic process.
HMMs were originally developed for speech recognition~\cite{Viterbi:1967hq, Rabiner:1989vx}.
A HMM is on an alphabet $A$ is a 5-tuple
\begin{equation}
 M=(Q,V,\pi,\alpha,\beta),
 \end{equation}
  where $Q$ is again a finite set
of states, $V$ is a finite set of observations per state, $\pi$ is the set of
initial state probabilities, $\alpha$ is again the set of state transition
probabilities, and $\beta$ is the finite set of emission probabilities over the
alphabet $A$.
Hidden Markov models differ from ordinary Markov models in that,
while the emissions are observable, the states occupied by the finite state 
machine are not themselves observable.
There are three problems to be solved regarding HMMs, and correspondingly, three
algorithms to solve them.

The first problem is: \emph{Given an HMM $M$ and an observed sequence $S$, 
compute the most probable path through $M$ that generates $S$.}

This problem is solved by the \emph{Viterbi algorithm}~\cite{Viterbi:1967hq}, which
is typically implemented using dynamic programming.
The Viterbi algorithm solves the recurrence relation:
\begin{equation}
  \begin{array}{rcl}
  V_{1,k} &=& \mathrm{P}\big( s_1 \ | \ k \big) \cdot \pi_k \\
  V_{t,k} &=& \mathrm{P}\big( s_t \ | \ k \big) \cdot \max_{x \in Q} \left( a_{x,k} \cdot V_{t-1,x}\right)
  \end{array}
\end{equation}

where $V_{t,k}$ is the probability of the most probable state sequence emitting 
the first $t$ observations with $k$ as its final state, and $s_i \in S$ is the
$i^{th}$ observation in $S$. 
The corresponding path through the model can be retrieved by remembering what
series of transitions among states $x \in Q$ were chosen when solving the
recurrence relation.

The second problem is: \emph{Given an HMM $M$ and a sequence of observations 
$S$, compute $P(S|M)$, the probability of observing the sequence $S$ emitted by 
the model $M$.}

This problem is solved by the \emph{forward algorithm}, which relies on dynamic 
programming as well.
In essence, the forward algorithm sums the probabilities over all possible state
paths that can emit $S$.
The recurrence relation for the forward algorithm is nearly identical to that
for the Viterbi algorithm, except that it sums, rather than choosing the
maximum from, the probabilities at each step:

\begin{equation}
  \begin{array}{rcl}
  V_{1,k} &=& \mathrm{P}\big( s_1 \ | \ k \big) \cdot \pi_k \\
  V_{t,k} &=& \mathrm{P}\big( s_t \ | \ k \big) \cdot \sum\limits_{x \in Q} \left( a_{x,k} \cdot V_{t-1,x}\right)
  \end{array}
\end{equation}

The third problem is: \emph{Given a set of sequences of observations, $O$,
and a model $M$, determine the transition probabilities $\alpha$ and emission
probabilities $\beta$ that maximize the $P(O|M)$, the likelihood of observing
the set of sequences given the model.}

Typically, a solution to this problem is \emph{estimated} by the 
\emph{Baum-Welch algorithm}~\cite{Baum:1970tv}, which is
an expectation-maximization algorithm.
A more computationally efficient but less accurate alternative is the Viterbi
Training algorithm (not to be confused with the Viterbi algorithm), also known
as \emph{segmental k-means}~\cite{Rabiner:1989vx}.
A simulated annealing search approach to Baum-Welch can also be used to avoid
local optima~\cite{Baldi:1994vc}.

A further explanation of the above algorithms can be found in \cite{Rabiner:1989vx}.

Despite their origins in the field of speech recognition, hidden Markov models 
have been
used in a variety of areas within the realm of computational biology.
In the context of DNA sequence analysis, HMMs have been 
used~\cite{Dasgupta:2002wj} to detect ``CpG islands,'' regions of the genome where 
cytosine and guanine are predominant and adjacent in sequence.
CpG islands are useful for determining the start of transcription 
sequences--the markers that indicate the regions of the genome that code for
protein sequences.
Hidden Markov models were first used to search for DNA sequences in genome
databases by Churchill~\cite{Churchill:1989wj} in the late 1980s.
Later, Krogh et al.~\cite{Krogh:1994hv} used HMMs to model protein evolution.

\subsection{Profile Hidden Markov Models}

With respect to homology detection, \emph{profile} hidden Markov models have
been popular.
In particular, profile HMMs have been used to model families of protein
sequences, in order to predict whether newly-discovered sequences belong to
those families.
Profile hidden Markov models attempt to represent the evolutionary processes
underlying the differences among closely-related proteins.
In addition, HMM-derived clusterings of proteins have been published, such as
Pfam~\cite{Finn:2006ez}, PROSITE~\cite{Hulo:2006du}, and 
SUPERFAMILY~\cite{Wilson:2007cm}.

HMMER~\cite{Eddy:1998ut} and SAM~\cite{Hughey:1996ub} are two popular software 
tools for
homology detection in proteins (though both are also widely used in
nucleotide sequence analysis, as well).
Much of the work in this dissertation is based on HMMER; we chose it as it is
open-source and more actively maintained.

\newcommand\txprobj[3][]{a#1_{{#2}_{j-1}{#3}_j}}
\newcommand\txprobjj[3][]{a#1_{{#2}_{j-1}{#3}_j}}
\newcommand\alignwidth{\ensuremath C} 
\newcommand\pairedwith[1]{{\pi(#1)}}

HMMER models three types of events that may occur during the evolution of a
protein: \emph{insertion}, \emph{deletion}, and \emph{substitution} of an amino
acid at a particular position.
These three possible events become the three hidden states of the HMM.
Substitution events are modeled using a \emph{match} state, which also 
represents amino acids that are conserved, or have not changed, between 
proteins.
In essence, mutated amino acids can be represented as substitutions
using a substitution matrix, and since the most probable substitution in such a
matrix is the identity function, conserved amino acids can also be represented
using the same matrix.
Insertion and match states are both considered \emph{emission} states, as each
corresponds to the presence of an amino acid at a particular position in a 
protein.
Each emission state comprises a table of emission probabilities: the likelihood
that any particular amino acid will be present (emitted) at that position.
Intuitively, for each match state, the most common amino acid seen in the
training data will be the most probable amino acid in the emission table for
that column of the alignment.

HMMER uses the ``Plan7'' hidden Markov model architecture, which forbids direct
transitions between insertion states and deletion states~\cite{Eddy:1998ut}.
``Plan7'' is a pun on ``Plan9,'' the architecture by Krogh, et
al.~\cite{Krogh:1994hv} that allowed all 9 possible transitions among match,
insert, and delete states;
``Plan7'' gets its name because there are exactly 7 possible transitions into
the states of any column of the alignment used for training.
See Figure~\ref{plan7} for an illustration of the Plan7 architecture.

\begin{figure}[htb!]
\advance\leftskip-0.3in
\begin{center}
  \fbox{\includegraphics[width=4in] {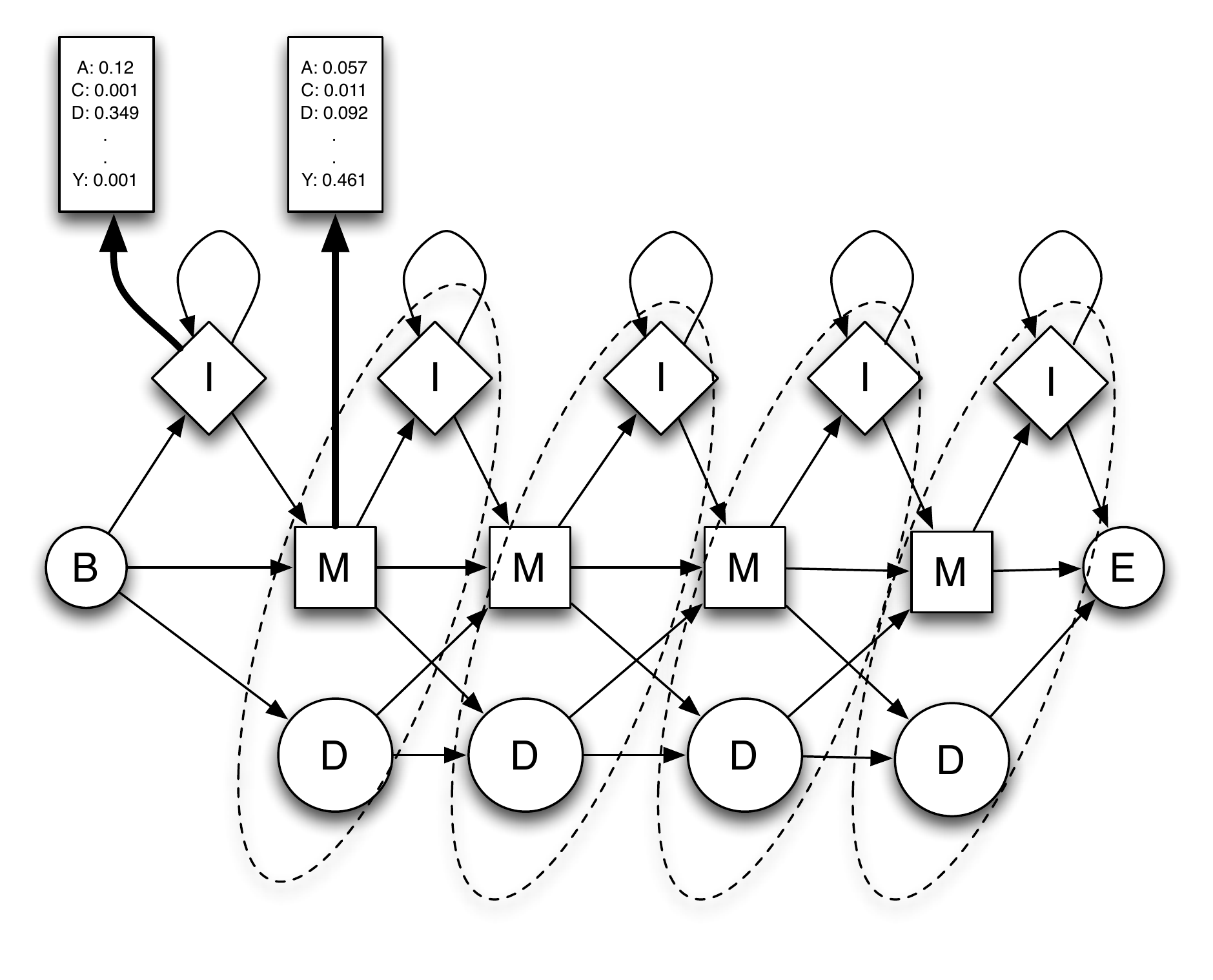}}
   \caption[The ``Plan7'' architecture for hidden Markov models, as implemented
   in HMMER.]{The ``Plan7'' architecture for hidden Markov models, as 
   implemented in HMMER.
   Dashed circles indicate \emph{nodes} of the model.
   A node groups a match, insertion, and deletion state, along with the
   emission probabilities for the match and insertion states.
   Note: this diagram simplifies the ``Plan7'' architecture; in reality, begin
   and end nodes are more complex, allowing for entire models to repeat.
   }
   \label{plan7}
 \end{center}
\end{figure}

HMMER trains a profile HMM (using a simulated annealing variant of the 
Baum-Welch algorithm)~\cite{McClure:1996uv} on a \emph{sequence 
profile}, which is an alignment of the protein sequences comprising some 
group--such as a SCOP superfamily or family--of putatively homologous proteins.
This alignment may be a sequence alignment or a structural alignment; in this
work we will focus on profiles derived from structural alignments.

An alignment used for training may of course contain \emph{gaps}.
A~gap in row~2, column~$j$ indicates that as proteins evolved, either 
protein~2 lost its amino acid in position~$j$, or 
other proteins gained an amino acid in position~$j$.
If~column~$j$ contains few gaps, 
it~is considered a \emph{consensus column},
and the few proteins with gaps may have lost amino acids via
\emph{deletions}.
Note that this model is directionless with respect to evolutionary change;
it does not distinguish between a residue being gained or lost over time.
If~column~$j$ contains \emph{mostly} gaps, 
it~is considered a \emph{non-consensus column},
and the few proteins without gaps may have gained amino acids via
\emph{insertions}. 

We refer to the amino acid sequence of a protein whose structure we do not know,
and wish to determine using homology detection, as a \emph{query sequence}.
Homology detection using a hidden Markov model involves \emph{aligning} a query
sequence to a hidden Markov model, or computing a \emph{path} through the model
that maximizes the likelihood of the model emitting the query sequence.
This alignment involves assigning successive amino acids in the query sequence
to successive nodes of the model. For a given node of the model, the match
and deletion states are mutually exclusive, as are the insertion and deletion
states. 
However, it is permissible for a path to assign amino acids to both the match 
and insert states of a node.
In addition, the match state consumes exactly one amino acid from the query
sequence, while the insert state may consume many.
The delete state consumes no amino acids from the query sequence.

Given a hidden Markov model, a protein whose query sequence has a higher 
probability is considered to be more likely to 
be homologous to the proteins in the alignment.
We~write a query sequence as $x_1, \ldots, x_{\scriptscriptstyle N}$,
where each $x_i$~is  an amino acid.
The number of amino acids, $N$, can differ from the number of columns
in the alignment,~\alignwidth. 

A~hidden Markov model carries emission probabilities on some states, and
transition probabilities on all edges between states.
Both the probabilities and the states are determined by the alignment:
\begin{itemize}
\item
For each column~$j$ of the alignment, the hidden Markov model has a
\emph{match state}~$M_j$.
The match state contains a table $e_{M_j}(x)$ which gives the
 probability that a homologous protein has amino acid~$x$ in
 column~$j$.
\item 
For each column~$j$ of the alignment, the hidden Markov model has an
\emph{insertion state}~$I_j$.
The insertion state contains a table $e_{I_j}(x)$ which represents the
probability that a homologous protein has gained amino acid~$x$ by
insertion at column~$j$.
\item
For each column~$j$ of the alignment, the hidden Markov model has a
\emph{deletion state}~$D_j$.
The deletion state represents the probability that a homologous protein
has lost an amino acid by deletion from column~$j$.
\end{itemize}
The probabilities $e_{M_j}(x)$ and $e_{I_j}(x)$ are \emph{emission probabilities}.
Each tuple of match, insertion, and deletion states is called a \emph{node}
of the hidden Markov model.

Each transition has its own probability:
\begin{itemize} 
\item
A~transition into a match state 
is more likely when column~$j$ is a consensus column.
Depending on the predecessor state, 
the probability of such a transition is 
$\txprobj M M$, $\txprobj I M$, or~$\txprobj D M$.
\item
A~transition into a deletion state 
is more likely when column~$j$ is a non-consensus column.
The probability of such a transition is 
$\txprobj M D$~or~$\txprobj D D$.
\item
A~transition into an insertion state 
is more likely when column~$j$ is a non-consensus column.
The probability of such a transition is 
$\txprobjj M I$~or~$\txprobjj I I$.
\end{itemize}

Due to the specific topology of the state-transition graph in the ``Plan7''
architecture, a reformulation of the Viterbi recurrence relations are warranted.
In particular, we need not consider all state transitions that would be possible
given a general topology, and instead, need consider only three possible
transitions at each node, which reduces the search space.
The variant of the Viterbi algorithm adapted for the ``Plan7'' architecture is
given by the recurrence relations:

\newcommand\vsum[2]{#2&{}+{}& #1}

\def\goo{18pt}
\def\gum{14pt}

\def\maxiquad{\hskip 1.2em\relax}
\begin{equation}
\begin{array}{@{}l@{}c@{}l}

V_{j}^{M}(i) &{}={}& \frac{e_{M_{j}}(x_{i})}{q_{x_{i}}} \times \max \left\{
  \begin{array}{l@{}c@{}l}
  V_{j-1}^{M}(i - 1) \times a_{M_{j-1}M_{j}}\\
  V_{j-1}^{I}(i - 1) \times a_{I_{j-1}M_{j}}\\
  V_{j-1}^{D}(i - 1) \times a_{D_{j-1}M_{j}}\\
  \end{array} \right.\\\\[\goo]
V_{j}^{I}(i) &=& \log\frac{e_{I_{j}}(x_{i})}{q_{x_{i}}} \times \max \left\{
  \begin{array}{l@{}c@{}l}
  V_{j}^{M}(i - 1) \times a_{M_{j}I_{j}}\\
  V_{j}^{I}(i - 1) \times a_{I_{j}I_{j}}\\
  \end{array} \right.\\\\[\gum]
V_{j}^{D}(i) &=& \max \left\{
  \begin{array}{l@{}c@{}l}
  V_{j-1}^{M}(i) \times a_{M_{j-1}D_{j}}\\
  V_{j-1}^{D}(i) \times a_{D_{j-1}D_{j}}\\
  \end{array} \right.\\

\end{array}
\end{equation}


\section{Other Homology Detection Methods}

\subsection{Threading Methods}

Threading is a methodology by which a query sequence is \emph{threaded} onto
a structural template, and the quality of the threading is evaluated by means
of an energy function or a statistical likelihood.
The idea of threading is based on the observation that the number
of unique protein folds found in nature is small with respect to the number of
distinct protein sequences, and that relatively few novel protein folds have
been found recently~\cite{Pearl:2003wb}.

THREADER~\cite{Jones:1992wx}, the original threading approach, aligns a protein 
sequence to a full tertiary structure model of a protein, and computes a score 
based upon a Boltzmann energy function and solvent potentials.
THREADER thus evaluates the propensity a sequence has for forming a particular
tertiary structure, but it cannot distinguish homologs (evolutionarily related
proteins that share structure and possibly function) from \emph{analogs} 
(proteins that happen to share similar structure but have no evolutionary 
relationship)~\cite{Orengo:1994fh, Jones:1997ve}.

GenTHREADER~\cite{Jones:1999im} improves upon THREADER, using an artificial 
neural network to compute a score
based upon multiple inputs: solvent and Boltzmann potentials like THREADER, but
also a sequence alignment score and length, and the lengths of the query
sequence and template.

Another popular and successful threading method is RAPTOR~\cite{Xu:2003p3417}, 
which relies on a template based on a 
\emph{contact map} to indicate which residues in a protein are in close
geometric proximity to one another, as well as the statistical propensities for
individual residues to be in such proximity.
RAPTOR then relies on linear programming to compute the optimal alignment
of a query sequence to this template, in order to minimize the statistical 
energy.
In Chapter~\ref{chapter:c3_smurflite}, we compare our results for remote
homology detection to those of RAPTOR.

Other threading methods include SPARKS X~\cite{Yang:2011id} and the
recently-developed RaptorX~\cite{Peng:2011wx}.

\subsection{Profile-Profile Hidden Markov Models}

Several recent efforts have improved upon profile hidden Markov models, by
aligning a profile HMM built from a training profile (much like HMMER) with
another profile HMM built from the query sequence.
HHPred~\cite{Soding:2005ff}, MUSTER~\cite{Wu:2008vh}, and HHblits~\cite{Remmert:2012cj} are three such
approaches.
Given a query sequence, HHPred relies on PSI-BLAST~\cite{Altschul:1997tl} to build a sequence
profile.
HHPred then builds a profile HMM on this profile, and uses a variant of the
Viterbi algorithm to align the \emph{query} HMM to candidate \emph{target} HMMs.
HHPred, along with other profile-profile hidden Markov model methods, relies on
the query profile to more faithfully represent the evolutionary variation in the
protein sequences that may be homologous to the query sequence.
In Chapter~\ref{chapter:c3_smurflite}, we compare our results for remote 
homology detection to those of HHPred.

\subsection{Markov random fields}

Some researchers have suggested generalizing HMMs to the more powerful
Markov random fields (MRFs).
Unlike HMMs, which model only local dependencies among neighboring residues,
MRFs can capture nonlocal interactions, such as the conservation of
hydrogen-bonded residues in paired $\beta$-strands.
SMURF (Structural Motifs Using
Random Fields)~\cite{Menke:2009, Menke:2010ti} used this $\beta$-strand 
information to recognize remote homologs
in the $\beta$-propeller folds better than HMM methods.
However, SMURF is limited by computational complexity,
because it uses multidimensional dynamic programming to
compute an optimal parse of a query sequence onto the MRF, and its
computational complexity is exponential in something called the interleave
number of a structure.
This interleave number is simply the number of
intervening $\beta$-strands (in sequence) between a pair of hydrogen-bonded,
paired $\beta$-strands.
$\beta$-propellers have a maximum interleave number of three,
and thus they are tractable for SMURF.
In contrast, some $\beta$-barrels and
sandwiches have an interleave number as high as 12, and thus, SMURF's
computational complexity becomes intractable on available computer systems.
Chapters~\ref{chapter:c3_smurflite}~and~\ref{chapter:c4_mrfy} explore two
alternative approaches to mitigate this
computational hindrance.

\section{Remote Homology Detection} 

While computing tertiary structure is computationally challenging, we can take
comfort in the fact that we do not always need tertiary structure to make useful
predictions as to function or evolutionary similarity.
In particular, supersecondary structure is often enough.
Since structure determines function, if we can classify a new protein of
unknown structure as sharing similar supersecondary structure to a group of
known proteins, we have evidence that the new protein shares a similar function
to those known proteins.

Protein sequence is much typically less conserved than
structure~\cite{Kolodny:2006jv, Wilson:2000ed}, so proteins 
of similar
structure and function, as seen at the SCOP superfamily level, may lack any
meaningful sequence similarity.
Simple sequence comparisons such as BLAST fail
to correctly identify these remote homologs.
However, if biologists sequence
the genome of an organism, they will wish to functionally annotate the proteins
coded for by its genes.
Anton or Folding@Home would require weeks or months of
computational time per gene to compute tertiary structure, and even a bacterium
has thousands of genes.
Supersecondary structure provides enough information to
make reasonable functional annotations, and we can compute it quickly enough to
scale to entire genomes.
Even threading approaches such as RAPTOR\cite{Xu:2003p3417} may require hours per
gene.
The methods we have developed are faster and more accurate than standard
threading approaches.

Threading methods such as RAPTOR~\cite{Xu:2003p3417} attempt to map a new protein sequence
onto templates built from individual solved proteins.
While a high-quality
threading hit may produce an accurate tertiary structure, this approach loses
the ability to take a larger evolutionary view of protein space.
Profile-based
methods--including SMURF--build knowledge about evolutionarily conserved
parts of the protein structure and sequence into their templates.
Rather than
matching a new protein sequence to a single best-fitting structure, we wish to
say that a new sequence belongs to a group of proteins that share evolutionary,
structural, and possibly functional similarities.

In order to predict that a new protein sequence shares structure and function
with a group of known proteins, we must be sure that these groups of proteins
are consistent.
In particular, since we use structure to infer function, we
wish to ensure that protein space is organized in a structurally consistent way.

\section{Outline of This Work}

In this dissertation, we present several approaches to remedying the complexity
of MRFs for remote homology detection, as well as an approach to improving the
quality of training data for remote homology detection.
Below is the outline of individual chapters in this dissertation.

We begin with a tour of
protein fold space (Chapter~\ref{chapter:c2_touring}), examining the structural
consistency of the SCOP protein structural hierarchy.
We also introduce a method for clustering protein structures such that 
manually-curated hierarchies such as SCOP~\cite{Murzin:1995uh} can be recreated with 
reasonable accuracy, based purely on automated structural alignments.
We also introduce a benchmark set, called MattBench, that we propose for use by
the developers of protein sequence or structural aligners.

In Chapter~\ref{chapter:c3_smurflite}, we discuss an approach to generalizing
Markov random fields to
the problem of remote homology detection in $\beta$-structural proteins.
We simplify the SMURF~\cite{Menke:2009, Menke:2010ti} 
Markov random field model by limiting the complexity of the
dependency graph, in order to bound the computational complexity of finding
an optimal parse of a query sequence to a model.
We combine these simplified Markov random fields with a model of ``simulated 
evolution'' to improve upon existing methods.

In Chapter~\ref{chapter:c4_mrfy}, we introduce an approach for remote homology
detection using the SMURF Markov random fields that does not require 
simplifying the dependency graph.
Instead, we introduce a stochastic search approach that quickly computes
approximate alignments to the Markov random field, and which should be 
generalizable to all protein folds.

Finally, in Chapter 5, we discuss the results and summarize the key findings of 
this dissertation followed by possible directions for future work.

\chapter{Touring Protein Space with Matt}

\label{chapter:c2_touring}


\section{Introduction}

Biologists have long relied on manual classification methods to organize the accepted gold-standard hierarchical classification systems for protein structural domains, SCOP~\cite{Murzin:1995uh, Andreeva:2004ic} and 
CATH.~\cite{Orengo:1997vy, Pearl:2003wb, Greene:2007iu}
Even now, when both SCOP and CATH have switched to hybrid manual/semi-automated
methods~\cite{Greene:2007iu}, these methods are still attempting to fit new protein
domain folds into an initial classification scheme that was derived manually. 
Expert biologists continue to modify the clustering structure based on
sequence, evolutionary, and functional information, not solely based on
geometric similarity of the placement of atoms in the protein backbone.

On the other hand, pairwise protein structural alignment programs superimpose 
protein domains to minimize a distance value based solely on geometric 
criteria~\cite{Gerstein:1998p1487}.
When computational biologists combine such a structural alignment with
hierarchical clustering, they obtain a fully automatic, unsupervised
partitioning of protein structural domains into hierarchical classification
systems~\cite{Tai:2008p1636}.
Such ``bottom up'' protein structure classifications, as they are
called in Valas et al.~\cite{Valas:2009p1477}, have been previously designed based
on VAST~\cite{Madej:1995wi,Gibrat:1996uy},
Dali~\cite{Holm:1996tv,Holm:1998um, Holm:2000vh}, and
others~\cite{Zemla:2007ds}, and have both practical and theoretical appeal. 
Practically,
researchers can assign new protein structures to clusters more quickly without
a human expert. Theoretically, a mathematical characterization of protein
similarity and dissimiliarity, if it proves biologically useful or meaningful,
is objective, uniformly applied, and gives a human-expert-independent map of
the known protein universe.

Unfortunately, multiple researchers have found that SCOP and CATH hierarchical 
classifications of protein structure both differ substantially from each
other~\cite{Hadley:1999p1580,Getz:2002we,Beck:2003p1743}, and also from the 
classification schemata that result from automatic bottom-up unsupervised 
clusterings of protein 
space~\cite{Gerstein:1998p1487, Hadley:1999p1580, Shindyalov:2000wg, 
Beck:2003p1743, Sam:2006te}, 
even when protein chains are broken up into the more modular units of 
``protein domains,'' as SCOP, CATH, and most automated schemes now 
do~\cite{Holm:1998um,Valas:2009p1477}.

Previous papers have characterized those protein domain clusters on
which SCOP and CATH
agree~\cite{Hadley:1999p1580,Getz:2002we,Beck:2003p1743}.  Previous
automatic methods seem to be able to match the
closest-homology {\em family\/} level of the SCOP hierarchy, but were
found to diverge considerably at the more distantly homologous {\em
superfamily\/} and at the quite remotely homologous {\em fold\/} levels of the 
SCOP
hierarchy~\cite{Gerstein:1998p1487, Hadley:1999p1580, Shindyalov:2000wg, 
Beck:2003p1743, Kolodny:2005p1181, Sam:2006te, Suhrer:2007hg},
with similar divergence from
CATH~\cite{Hadley:1999p1580,Harrison:2002wv,Beck:2003p1743}. This is unfortunate,
because, for example, the superfamily level of the SCOP hierarchy
clusters proteins that share similar topologies and are believed to have
evolved from a common ancestor~\cite{Murzin:1995uh}, allowing important
inferences to be made about
function~\cite{Sam:2006te,Valas:2009p1477}.
We focus on SCOP rather than CATH for the remainder of this chapter, though much
of what we say about SCOP could be applied to CATH.
Thus, the superfamily level of
the SCOP hierarchy has strong biological utility: if a fully automated, 
``bottom-up'',
distance-based clustering method cannot approximately replicate a particular
SCOP superfamily,
then such a method is not clearly meaningful or useful. 

This ties into a spirited debate among the computational proteins
community, about the central question of whether ``protein fold
space'' is {\em discrete\/} or {\em continuous}~\cite{Rost:2002wx}. A
continuous view comes from the theory that modern proteins evolved by
aggregating fragments of ancient
proteins~\cite{Rost:2002wx,Harrison:2002wv,Valas:2009p1477,Sadreyev:2009p2118}. A
discrete view comes from evolutionary process constrained by
thermodynamic stability of the structure~\cite{Sadreyev:2009p2118}. In
particular, if most mutations move the conformation of a stable folded
chain away from an ``island'' of thermodynamic structural stability,
then stabilizing selection will promote fold conservation, and
movements between folds will be uncommon~\cite{Choi:2006cv}. If geometric distance and
evolutionary relation approximately coincide, then an automatic method
that approximately matches SCOP at the superfamily level is
conceivable.

We present a bottom-up automatic hierarchical
classification scheme for protein structural domains based on the
multiple structure alignment program Matt~\cite{Menke:2008wu}. Matt,
which stands for ``multiple alignment with translations and twists'',
was specifically developed by our group to geometrically align more
distantly homologous protein domains. It accomplishes this by allowing
flexibility in the form of small, geometrically impossible bends and
breaks in a protein structure, in order to distort that structure into alignment
with another protein. Matt was shown to perform particularly
well compared to competing multiple and pairwise structure alignment
programs on proteins whose homology was similar to the SCOP
superfamily level~\cite{Menke:2008wu, Rocha:2009jm, Berbalk:2009tn}. 
Surprisingly, we find that our automatic classification
scheme based on a pairwise distance value derived from Matt, coupled
with a straightforward neighbor-joining algorithm to construct the
hierarchical clusters~\cite{Simonsen:2008p946} matches SCOP 
better than previous automatic methods, at the superfamily, and
even, to some extent, at the fold level. In comparison, the same
hierarchical clustering method using a pairwise distance value based
on DaliLite~\cite{Holm:2000vh}, a recent implementation of the Dali structural 
alignment algorithm, replicates previous findings and cannot mimic SCOP on the
superfamily level of the SCOP hierarchy. We thus conclude that perhaps
the threshold at which protein domain space is naturally discrete extends at 
least through the
superfamily level, and that perhaps the manually curated SCOP
hierarchy has {\em geometric\/} coherence at the superfamily level
(and in some parts of the fold hierarchy, see Section~\ref{touring-disc}) so 
these
clusters are intrinsic properties of the geometry of fold space, not
just human-generated categories.

A practical implication of our results may be that automatic methods
with a Matt-based distance value may ultimately help speed the
assignment of new protein structural domains to the appropriate place
in the SCOP hierarchy.  We note, however, that determining
where to place a new structure into an existing hierarchy is a much
simpler problem (analogous to ``supervised learning'') than creating
an entire cluster hierarchy from an automatic pairwise distances
from scratch (analogous to ``unsupervised learning''), and fairly
successful methods already exist to correctly place a new structure
into the existing SCOP
hierarchy~\cite{Getz:2002we,Cheek:2004vw,Chi:2006bh}. Thus the primary interest in this result may be that 
if a Matt distance value can ``recover'' SCOP superfamilies
to a great extent, this validates both automatic and hand-curated methods of classification, and the
entire concept of ``superfamily'' at the same time. Namely, at this level of
structural similarity, it appears we may not often have to choose between
evolutionary and geometric criteria for structural domain similarity.

A byproduct of our organization of protein space is that by looking at where 
agreement of our Matt clusters with SCOP is exact, we can construct a new set 
of gold-standard protein multiple structure alignments of distantly homologous 
proteins (and associated decoy sets) for which we can have confidence that the 
Matt structural alignment is meaningful. Thus, we introduce ``Mattbench,'' a 
set of structural alignments at two levels: superfamilies (consisting of 225 
alignments with between 3 and 15 proteins in each alignment set), and folds 
(consisting of 34 alignments with between 3 and 15 proteins in each alignment 
set). Mattbench is meant as an alternative to the 
SABmark~\cite{VanWalle:2005wp} benchmark set, which also attempts to mimic 
SCOP, but Mattbench's alignment sets only cover those subsets of SCOP 
superfamilies and folds where Matt finds geometric consistency. Thus while 
Mattbench is slightly less complete than SABmark in coverage, its alignments 
are likely to be more consistent, making it a better benchmark on which to test 
sequence alignment methods. Complete details on how Mattbench is constructed 
appear in Section~\ref{mattbench}; Mattbench itself can be downloaded from 
\url{http://www.bcb.tufts.edu/mattbench}.

Finally, we remark that this work, like most recent work that compares
different hierarchical classification systems, already presumes the
``structural domain'' as the basic structural unit (as do SCOP and
CATH), where many protein structures contain multiple structural
domains~\cite{Holm:1998um}. The problem of partitioning a protein into
its structural domains is far from
trivial~\cite{Veretnik:2004bn, Holland:2006wi} but there has been much
recent progress in computational methods that split a protein
structure automatically into domains and find the domain
boundaries~\cite{Holland:2006wi, Redfern:2007p1836}. In any case, that is
not the focus of our work, and we assume the protein has already been correctly 
split into domains as a preprocessing step.

\section{Methods}

\subsection{Representative Proteins}

From the 110,776 protein domains of known structure from ASTRAL version 1.75,
we construct a set of representative protein domains filtered to 80\% identity
(according to BLASTP~\cite{Altschul:1997tl}) and a minimum sequence length of
40 residues. This provides a reasonable first pass for identifying groups of
similar protein domains, and allows us to shrink the search space
significantly. The set of clusters is constructed by running a greedy,
agglomerative, minimum-linkage clustering algorithm based on this threshold of
80\% sequence identity. This produces 10,418 groups of proteins that share
significant sequence identity.

From each cluster, we identify a representative. First, we discard engineered
or mutant proteins, and any proteins whose X-ray crystallography resolution is
$> 5.0 \text{\AA}$, from any cluster that has alternative representatives that
meet our criteria. Next, treating each cluster as a (potentially, but not
necessarily, complete) graph whose nodes are the constituent proteins and whose
edge weights are the sequence identity values from the BLASTP alignments with
at least 80\% identity, we consider the weighted degree (sum of edge weights)
of each protein, and we favor the proteins with greatest weighted degree. We
break ties first by the date the structure was determined (preferring more
recent structures), then by the quality of the solved structure. The remaining
ties typically come from sequences with $\geq$ 99\% identity, and we break them
arbitrarily. The resulting set has 10,418 representative protein domains.

\subsection{Distance Values}

For these 10,418 representatives, we performed an all-pairs structural 
alignment using both DaliLite~\cite{Holm:2000vh}, the structural aligner used 
in the FSSP classification scheme~\cite{Holm:1998um} and 
Matt~\cite{Menke:2008wu}. 
In each case, a distance (or dissimilarity) measure is derived for each pair. 
For DaliLite, the Z-score proved to be a good measure, so we used it without 
further modification.

For Matt, we used a new distance value that is a modification of the $p$-value
score computed in Menke, et al.~\cite{Menke:2008wu}. 
Let $c$ be the length of the aligned core shared between the two proteins (in 
residues), $r$ be the RMSD (root mean square deviation) of the alignment, 
$l_{1}$ and $l_{2}$ be the lengths of the two protein domains being aligned (in 
residues), and $k_{1}$, $k_{2}$, and $k_{3}$ be the constants from the Matt 
$p$-value. We compute the distance between two Matt-aligned proteins as 
follows: 

 $\displaystyle d=\frac{1}{ k_{1} \times ( r - k_{2} \times \frac{c^{2}} 
 {\frac{l_{1} + l_{2}}{2}} + k_{3})}$

This value differs from the formula that Matt uses to compute a $p$-value only 
in that it squares the core-length term to place more weight on longer aligned 
cores ($c^{2}$ instead of $c$). We found this improved performance.

\subsection{Distance Threshold}

Based on each of the  Dali Z-score and Matt distances, we next learned the distance cutoffs that most closely mimicked the family, superfamily, and fold levels of the SCOP hierarchy as follows:
\begin{algorithm}
Initialize a training set $T$ and a set of already-chosen pairs $A$\;
    \For{$i = 1 \to 10000$}{
      Choose proteins $p$, $q$ such that $p \neq q$ and $p$ and $q$ are in the same SCOP grouping, and the pair $p,q \not \in A$\;
      Choose proteins $r$, $s$ such that $r \neq s$ and $r$ and $s$ are in different SCOP groupings, and the pair $r,s \not \in A$\;
      $A \leftarrow \left\{p, q\right\}$\; 
      $A \leftarrow \left\{r, s\right\}$\;
      $T \leftarrow dist(p,q)$ with label {\em true}\;
      $T \leftarrow dist(r,s)$ with label {\em false}\;
      Compute true positive rate $R_{tp}$,  true negative rate $R_{tn}$, positive rate $R_{p}$, and negative rate $R_{n}$ for $T$ based on the class labels $true$ and $false$\;
      Determine the value of $d_{p,q}$ that maximizes $\frac{R_{tp}+R_{tn}}{R_{p}+R_{n}}$\;
    }
\end{algorithm}


In other words, we set $d_{p,q}$ to be the value corresponding to the point on
the Receiver Operating Characteristic (ROC) curve that intersects the tangent
iso-performance line~\cite{Vuk:2006wv}, maximizing the sum $R_{tp} + R_{tn}$.
The area under the ROC curve measure (AUC) is a summary statistic that captures
how well the pairwise distance score can discriminate between structures that
share or do not share SCOP cluster membership.

We note that setting the pairwise distance cutoffs (determining the value of
$d_{p,q}$ in step 4) is the only ``supervision'' our algorithm uses in
constructing its clustering (see discussion below). 
We emphasize that once the three single scalar pairwise distance cutoff 
(corresponding to SCOP `family`, `superfamily`,
and `fold` levels of dissimilarity) are set, \emph{no further information} from 
SCOP is utilized to produce the clustering.

\subsection{Clustering and Tree-cutting}\label{touring-clustering}

Based on the distance functions, we computed values for all pairwise alignments
based on the Matt or DaliLite output, and represented this as a distance
matrix. We ran the ClearCut program~\cite{Simonsen:2008p946} in strict
neighbor-joining mode (-N option) to produce a dendrogram based on these Matt
or DaliLite distance values. We then recursively descended this tree to produce
family, superfamily, and fold-level groupings as follows. For a given subtree,
if all leaves (protein domains) in that subtree are within a threshold $t$ of
one another (where $t$ is the family, superfamily, or fold threshold), then
those leaves are all merged into a new grouping of that level. Otherwise, we
recursively descend into the two subtrees of that subtree's root until we reach
a subtree all of whose leaves fall within a given threshold (family,
superfamily, or fold; based on Matt distance or DaliLite Z-score as
appropriate) of one another. Thus, we are performing a total-linkage
clustering, but using the topology of the dendrogram to determine which protein
domains get left out of a given cluster.

We remark that Sam et al.~\cite{Sam:2006te} did an extensive study of
clustering and tree-cutting methods, and looked at their effect on performance
for several distance values. They tested 3 ``SCOP-dependent'' and 7
``SCOP-independent'' tree-cutting strategies. However, their
``SCOP-independent'' strategies all required as input the target number of SCOP
clusters to produce at each level. In contrast, our method discovers the number
of clusters as an organic function of the protein domain space, based only on a
globally learned dissimilarity cutoff; it is thus of independent interest that
we nearly replicate the number of SCOP clusters at each level (see Table~\ref{touring-clusters}).

\subsection{Jaccard Similarity Metric}

The Jaccard index, or Jaccard similarity coefficient, of two sets $A$ and $B$
is defined as $J(A,B) = {\frac{|A \cap B|}{|A \cup B|}}$. Based on the Jaccard
index of a cluster (e.g. family or superfamily or fold) produced by our
algorithm (a ``Matt family'' or ``DaliLite family'') and a SCOP grouping of the
same level, and looking at the identity of protein domains in the two
groupings, we can compare how alike they are. We can thus easily find the most
similar SCOP family to each Matt family, $S \rightarrow M$ and vice versa, $M
\rightarrow S$. This directional mapping is neither one-to-one nor onto, but
each cluster on the `source' side will be mapped to some most-similar cluster
on the `sink' side. The resulting directed graph allows us to explore the
distribution of Jaccard indices as well as the distribution of degrees of each
cluster. A perfect matching would correspond to every Jaccard index being 1.0,
and every cluster having degree 1. Clearly, we do not expect to achieve a
perfect matching, but this metric allows us to compare the quality of
clustering, relative to SCOP, of our algorithm using the Matt distance and the
DaliLite Z-score distance.

Each direction of the metric is produced as follows, using as an example the
comparison of Matt families to SCOP families. Consider the set of Matt families
and SCOP families as a bipartite graph, with the Matt families on one side of
the bipartition and the SCOP families on the other. Initially, the graph has no
edges. For each Matt family, find the most similar (by Jaccard index) SCOP
family. A weighted, directed edge is drawn from each Matt family to its most
similar SCOP family; the edge weight is equal to the Jaccard index, which
ranges from 0 to 1. This is performed until each Matt family has been matched
to a SCOP family. This process is repeated in the other direction, matching
each SCOP family to its most similar Matt family, and the same thing is done
for Matt and DaliLite at the superfamily and fold levels of the SCOP hierarchy.

Recall that in this analysis, as is standard~\cite{Hadley:1999p1580}, we are
considering only the protein domains that were identified as cluster
representatives within each group of protein domains that share 80\% sequence
identity.

\subsection{Benchmark Set} \label{mattbench}

Developers of protein sequence aligners--and structural aligners--typically test
their alignment quality on gold-standard benchmark sets such as 
HOMSTRAD~\cite{Mizuguchi:1998fp} and SABmark~\cite{VanWalle:2005wp}.
With the hierarchy of Matt-derived folds, superfamilies, and families 
constructed, we produced a benchmark set of protein alignments at two levels: 
superfamilies (consisting of 225 alignments), and folds (also referred to as 
the ``twilight zone'' of protein homology, consisting of 34 alignments). 
The ``twilight zone''~\cite{Rost:1999taa} is the region of low sequence identity
(between 20\% and 35\%) at which homology recognition based upon sequence 
alignment becomes difficult.
At the superfamily level, we generated the benchmark set as follows:

\begin{small}
\begin{enumerate}
\item Choose Matt superfamilies that contain at least three representative proteins.
\item For each Matt superfamily:
\begin{enumerate}
\item Identify the most similar SCOP superfamily (by Jaccard index) and take the intersection of it and the Matt superfamily. Call this set $S$.
\item run BLAST on all pairs of proteins in $S$, storing the maximum e-value as $E$.
\item For any pair of proteins $p,q \in S$ that share greater than 50\% sequence identity, remove the shorter one (breaking ties arbitrarily by alphabetic order of protein name). Call this set $S'$. Proceed if and only if $S'$ still has at least three proteins.
\item Run a Matt multiple alignment on $S'$, and store this alignment as the Mattbench alignment for $S'$
\end{enumerate}
\item For each Mattbench superfamily $S$, produce a decoy set $D$ as follows:
\begin{enumerate}
\item Consider every Matt representative protein $p \not \in S$. For each $p$:
\begin{enumerate}
\item discard $p$ if it is in the most similar (by Jaccard index) SCOP superfamily to $p$'s Matt superfamily
\item run BLAST on $p$ against every protein $s \in S$, storing the e-value $e_{s, p}$ and sequence identity $i_{s,p}$
\item run Matt on $p$ against every protein $s \in S$, storing the Matt distance $m_{s,p}$
\item discard $p$ if $\exists s$ such that $i_{s,p} \geq 50\%$
\item discard $p$ unless $\exists s$ such that $e_{s,p} < E$ (this is the $E$ stored as the maximum e-value above)
\item discard $p$ unless $\forall s, m_{s,p} > T_{superfamily}$ (the superfamily threshold used in Matt clustering)
\item if $p$ has not been discarded, add it to the benchmark decoy set $D$.
\end{enumerate}
\end{enumerate}
\end{enumerate} 
\end{small}

The ``twilight zone'' benchmark set is generated in an identical manner, except
that the Matt and SCOP fold levels are used, and the sequence identity cutoff
is 20\% rather than 50\%. The BLAST E-value criterion is the same used by
SABmark~\cite{VanWalle:2005wp} and ensures that each decoy is a useful decoy
rather than an obvious negative match. 
The Matt distance criterion is present because, if the decoy protein is within 
the threshold of some protein in that superfamily, the decoy is only {\em not} 
in that superfamily because of the overall topology of the cluster--that is,
because while the decoy may be similar to some protein in that cluster, it
is not similar enough to all of the proteins of that cluster to warrant 
inclusion. 
The purpose of the decoy set is to act as a set of likely false positives, that
a sequence aligner will find challenging to distinguish from the true positives.
Both benchmarks can be found at
\url{http://www.bcb.tufts.edu/mattbench}.

\section{Results}

\subsection{Pairwise Distance Comparisons}

\begin{small}
\begin{table}[htb!]

\caption{ROC Area for pairwise performance vs. SCOP 
\label{touring-pairwise-roc}}
\begin{tabular}{llllll} 
\hline\noalign{\smallskip}
 & Matt          & DaliLite  \\
\noalign{\smallskip}
\hline
\noalign{\smallskip}
Families       & 0.922         & 0.958 \\
Superfamilies  & 0.842         & 0.615 \\
Folds          & 0.840         & 0.871 \\
\hline
\end{tabular}\\{Note: While DaliLite slightly outperforms Matt at family and fold levels, Matt significantly outperforms DaliLite at the superfamily level.}
\end{table}
\end{small}

We first asked if a pairwise Matt or DaliLite distance cutoff could correctly
distinguish among pairs of proteins that were in the same SCOP cluster from
those that were not. Table~\ref{touring-pairwise-roc} shows the ROC area at the 
SCOP family, superfamily, and
fold level, for the Matt and DaliLite distance scores. Note that at the family
and fold levels, these values are very close (DaliLite outperforms Matt by a
small margin), but at the superfamily level, Matt significantly outperforms
DaliLite, achieving 0.842 ROC area vs. DaliLite's 0.615. Matt was developed to
better align structures at the superfamily level of homology, but the size of
the gap in ROC area is still surprising. We further remark that at the fold
level, DaliLite's seemingly competitive performance is somewhat illusory, since
it is shattering many SCOP folds, each into many tiny pieces (see below).

\subsection{Clustering Performance}

\begin{small}
  \begin{center}
\begin{table}[htb!]
\caption{Number of clusters at each level for each method
\label{touring-clusters}}
\begin{tabular}{llllll} 
\hline\noalign{\smallskip}
& SCOP        & Matt         & DaliLite  \\
\noalign{\smallskip}
\hline
\noalign{\smallskip}
   Families       & 3471        & 3498         & 3081      \\
   Superfamilies  & 1656        & 1716         & 2455      \\
   Folds          & 981         & 891          & 2277      \\
\hline
\end{tabular}\\{Note: Matt more closely matches the number of families, 
superfamilies, and folds in SCOP than DaliLite does. DaliLite clustering 
results in too few families, but too many superfamilies and folds with respect 
to SCOP.}
\end{table}
\end{center}
\end{small}

While the pairwise performance of Matt compared to DaliLite at the superfamily
level is impressive, pairwise similarity does not necessarily translate into
better clustering performance. Thus, we next explore Matt's clustering 
performance. First we give the simplest possible comparison: raw numbers of
clusters produced by Matt and DaliLite compared to SCOP at the three levels.
Recall that unlike the clustering algorithm explored 
by~Tai,~et~al.~\cite{Tai:2008p1636},
the number of clusters produced by our dendrogram and tree-cutting method is a
direct consequence of the pairwise distance threshold, and is not artificially
set to match SCOP (see Section~\ref{touring-clustering}).
Table~\ref{touring-clusters} shows that 
the Matt clustering
produces approximately the same number of clusters as SCOP at all three levels.
While DaliLite also produces approximately the same number of clusters at the
family level, at the superfamily and fold levels it produces many more clusters
than SCOP. We next explore exactly how both methods split and merge SCOP 
clusters in more detail.

\begin{small}
  \begin{center}
\begin{table*}[htb!]
\caption{Descriptive statistics for the family, superfamily, and fold 
levels \label{touring-stats}}
\begin{tabular}{llllllll} 
\hline\noalign{\smallskip}
Family & Max Deg. & $\mu$ Deg. & $\sigma$ Deg. & Min Sim. & $\mu$ Sim. & $\sigma$ Sim. \\
\noalign{\smallskip}
\hline
\noalign{\smallskip}
Matt $\rightarrow$SCOP & 30 & 3.63 & 5.470 & 0.005 & 0.611 & 0.373\\
DaliLite $\rightarrow$ SCOP & 45 & 3.902 & 6.919 & 0.001 & 0.598 & 0.380\\
SCOP $\rightarrow$ Matt & 15 & 1.873 & 2.160 & 0.127 &  0.712 & 0.336\\
SCOP $\rightarrow$ DaliLite & 12 & 1.983 & 1.823 & 0.001 & 0.655 & 0.347\\
\hline\noalign{\smallskip}
Superfamily & Max Deg. & $\mu$ Deg. & $\sigma$ Deg. & Min Sim. & $\mu$ Sim. & $\sigma$ Sim. \\
\noalign{\smallskip}
\hline
\noalign{\smallskip}
Matt $\rightarrow$ SCOP & 28 & 3.633 & 5.094 & 0.003 & 0.587 & 0.389\\
DaliLite $\rightarrow$ SCOP & 153 & 16.61 & 36.54 & 0.001 & 0.428 & 0.406\\
SCOP $\rightarrow$ Matt & 15 & 1.704 & 1.913 & 0.020  & 0.714 & 0.326\\
SCOP $\rightarrow$ DaliLite & 10 & 1.470 & 1.229 & 0.001  &0.713 & 0.324\\
\hline\noalign{\smallskip}
Fold & Max Deg. & $\mu$ Deg. & $\sigma$ Deg. & Min Sim. & $\mu$ Sim. & $\sigma$ Sim. \\
\noalign{\smallskip}
\hline
\noalign{\smallskip}
Matt $\rightarrow$ SCOP & 18 & 3.719 & 4.258 & 0.004  & 0.467 & 0.354\\
DaliLite $\rightarrow$ SCOP & 149 & 26.57 & 40.87 & 0.001  & 0.321 & 0.389\\
SCOP $\rightarrow$ Matt & 6 & 1.958 & 1.122 & 0.022 & 0.512 & 0.326\\
SCOP $\rightarrow$ DaliLite & 3 & 1.117 & 0.353 & 0.001  & 0.758 & 0.299\\
\hline
\end{tabular}\\{Note: $\mu$ Degree is the average number of clusters from the 
first scheme that map to a single cluster in the second, and $\sigma$ Degree 
gives the standard deviation. Similarly, we give min, $\mu$, and $\sigma$ of 
the Jaccard similarity.}
\end{table*}
\end{center}
\end{small}

The Jaccard index serves as a good indicator of how well Matt and DaliLite
match SCOP. As the raw numbers of clusters in Table~\ref{touring-clusters} 
suggest, DaliLite often
shatters SCOP superfamilies into multiple clusters. DaliLite also shatters SCOP
folds into many more shards on average than Matt. How can this be given the
very similar pairwise classification performance at the fold level? We defer
this question until Section~\ref{touring-disc}. 
We note that even at the family
level, Matt performs slightly better than DaliLite at both the average degree
and average Jaccard similarity metrics. The average number of Matt or DaliLite
families that match to a single SCOP family is between 3.5 and 4; however,
notice that a large majority of Matt or DaliLite families map to a single SCOP
family and the average is pulled up by a few outliers (see histograms in 
Figures~\ref{family-split}~and~\ref{family-split-scop}). 
Average degree values at the superfamily and fold levels stay nearly
constant for Matt, whereas DaliLite's average degree values rise to 16.61 for
the superfamily level and 26.57 at the fold level. In the other direction,
considering how many Matt or DaliLite clusters span multiple SCOP clusters, at
the family level the average degree for Matt and DaliLite are nearly identical
(between 1.8 and 2). At the superfamily and fold levels, we would expect
DaliLite to outperform Matt by virtue of the fact that it creates many smaller
clusters (see Table~\ref{touring-clusters}), and DaliLite does, but by a fairly 
small margin (1.4 to
1.7 at the superfamily level and 1.1 to 2 at the fold level). The distributions
are displayed in more detail in the histograms in 
Figures~\ref{family-split},\ref{family-split-scop},
\ref{superfamily-split},~\ref{superfamily-split-scop},
\ref{fold-split},~and \ref{fold-split-scop}.

\begin{figure}[htb!]
\begin{center}
  \fbox{\includegraphics[width=5in]{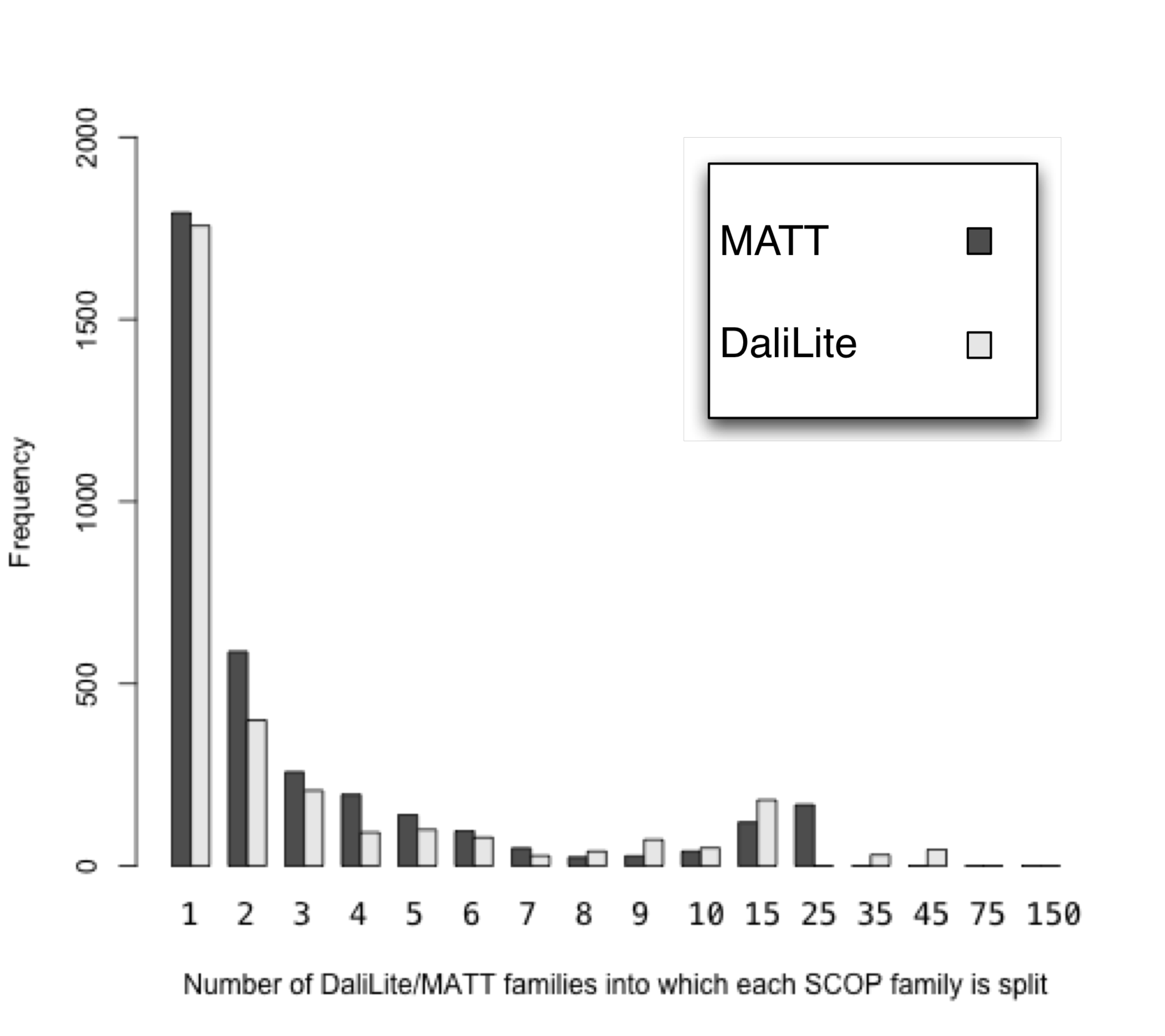}}
   \caption{Number of Matt vs. DaliLite families into which each SCOP family is 
shattered.}
   \label{family-split}
 \end{center}
\end{figure}

\begin{figure}[htb!]
\begin{center}
  \fbox{\includegraphics[width=5in]
  {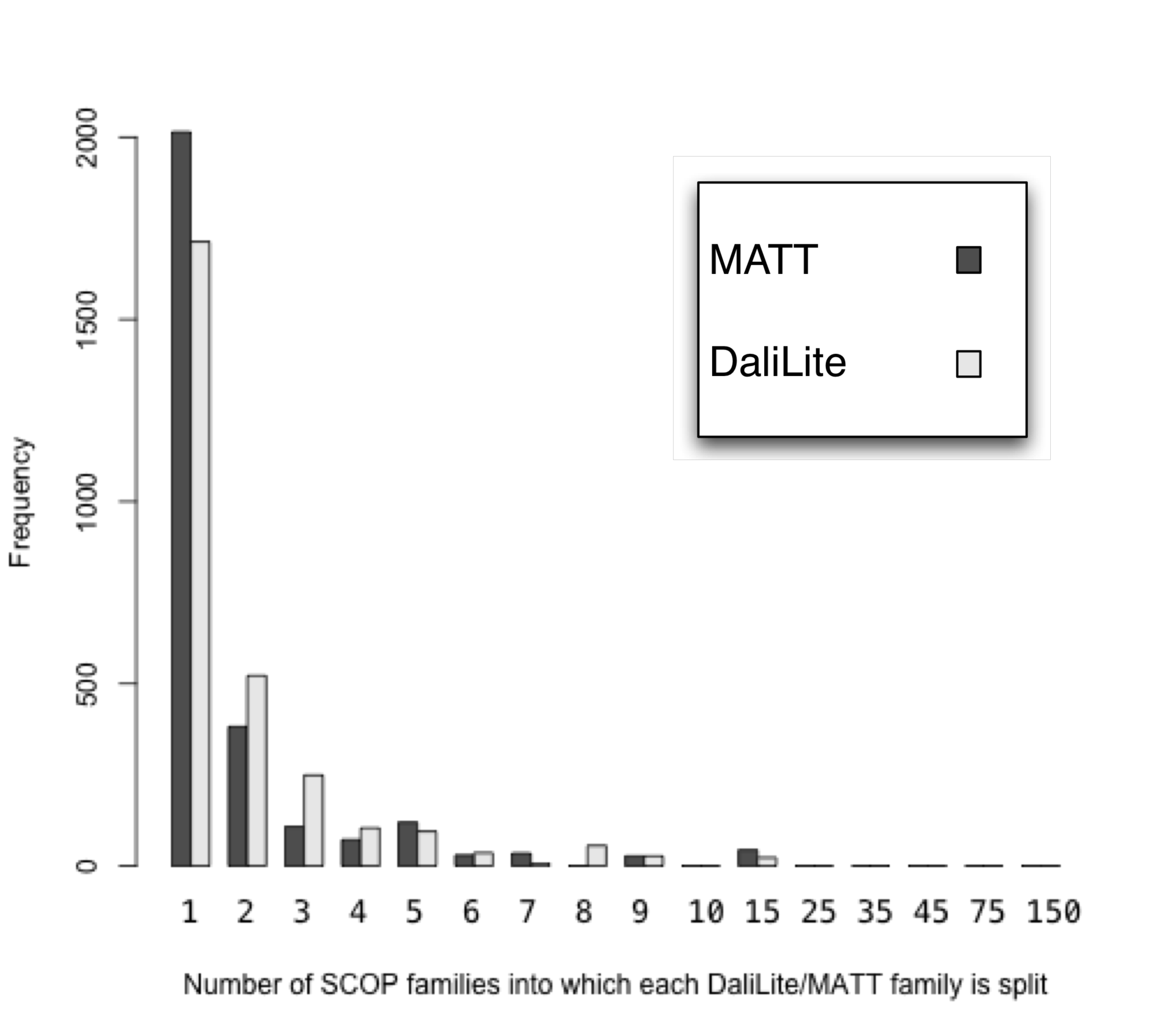}}
   \caption{Number of SCOP families into which each Matt or DaliLite family is 
shattered.}
   \label{family-split-scop}
 \end{center}
\end{figure}

\begin{figure}[htb!]
\begin{center}
  \fbox{\includegraphics[width=5in]
  {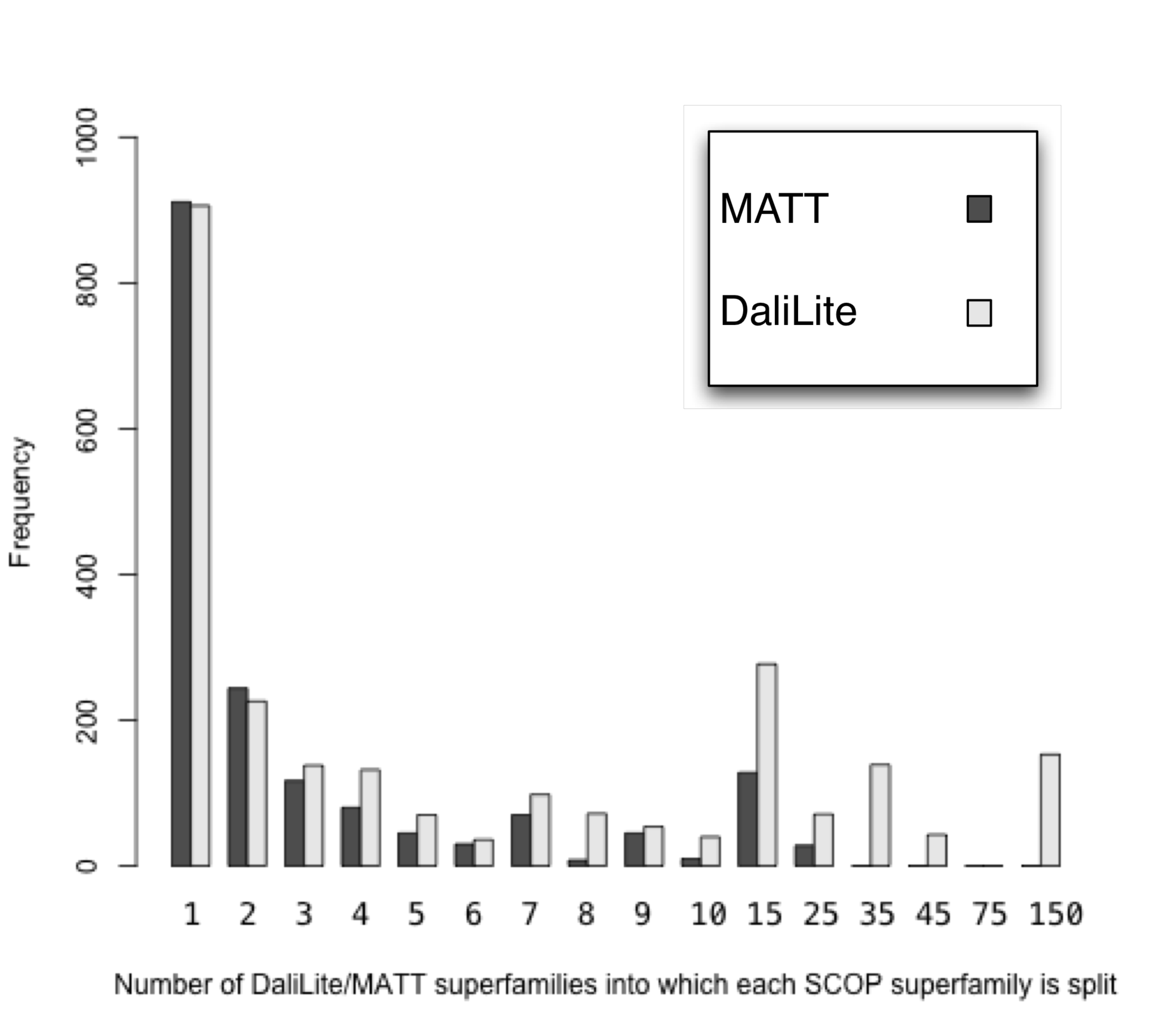}}
   \caption[Number of Matt vs. DaliLite superfamilies into which each SCOP 
   superfamily is shattered.]{Number of Matt vs. DaliLite superfamilies into 
   which each SCOP superfamily is shattered.
   Note the tail of the distribution, in which DaliLite breaks SCOP 
   superfamilies into many small pieces.}
   \label{superfamily-split}
 \end{center}
\end{figure}

\begin{figure}[htb!]
\begin{center}
  \fbox{\includegraphics[width=5in]
  {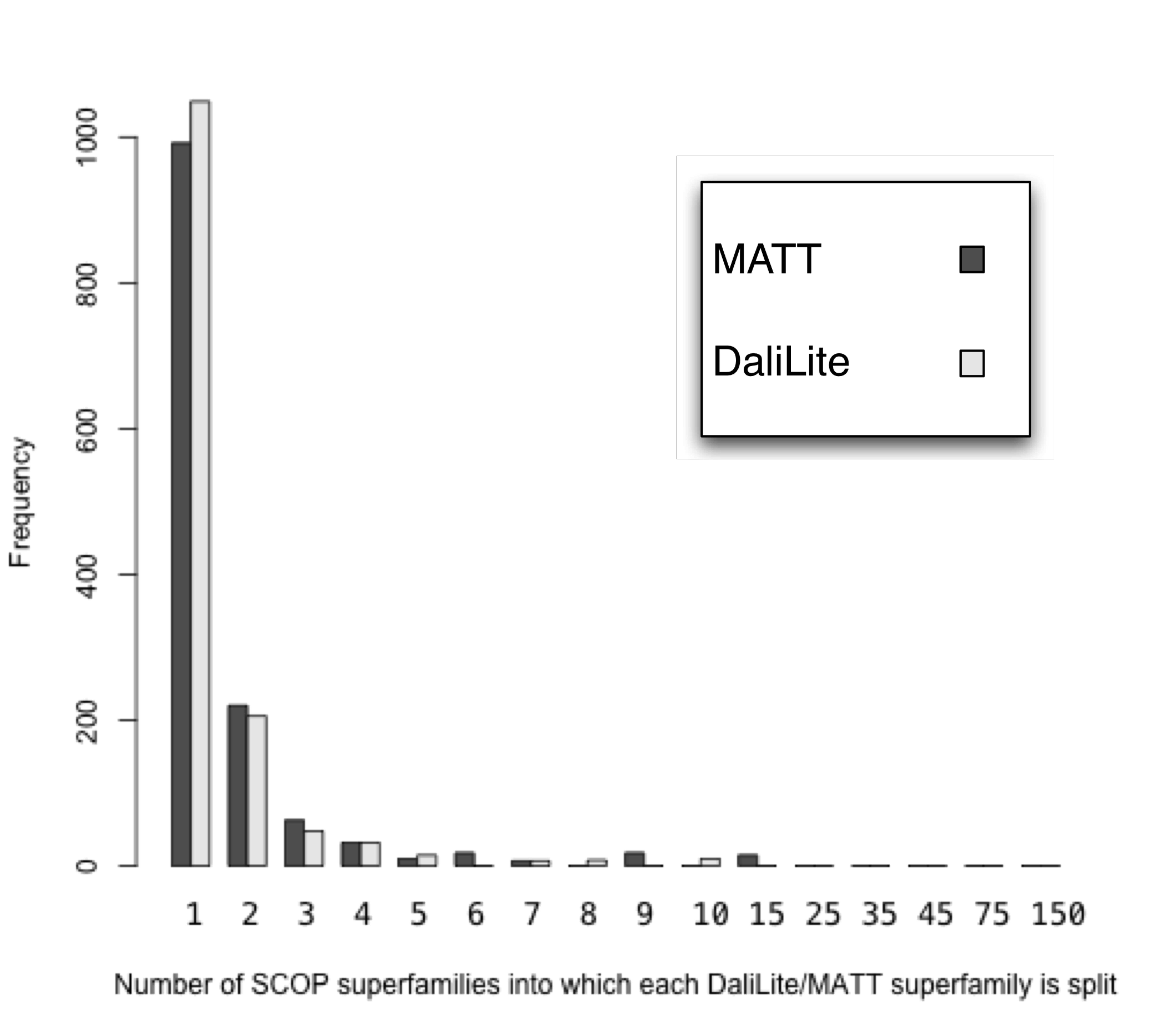}}
   \caption{Number of SCOP superfamilies into which each Matt or DaliLite 
   superfamily is shattered.}
   \label{superfamily-split-scop}
 \end{center}
\end{figure}

\begin{figure}[htb!]
\begin{center}
  \fbox{\includegraphics[width=5in]
  {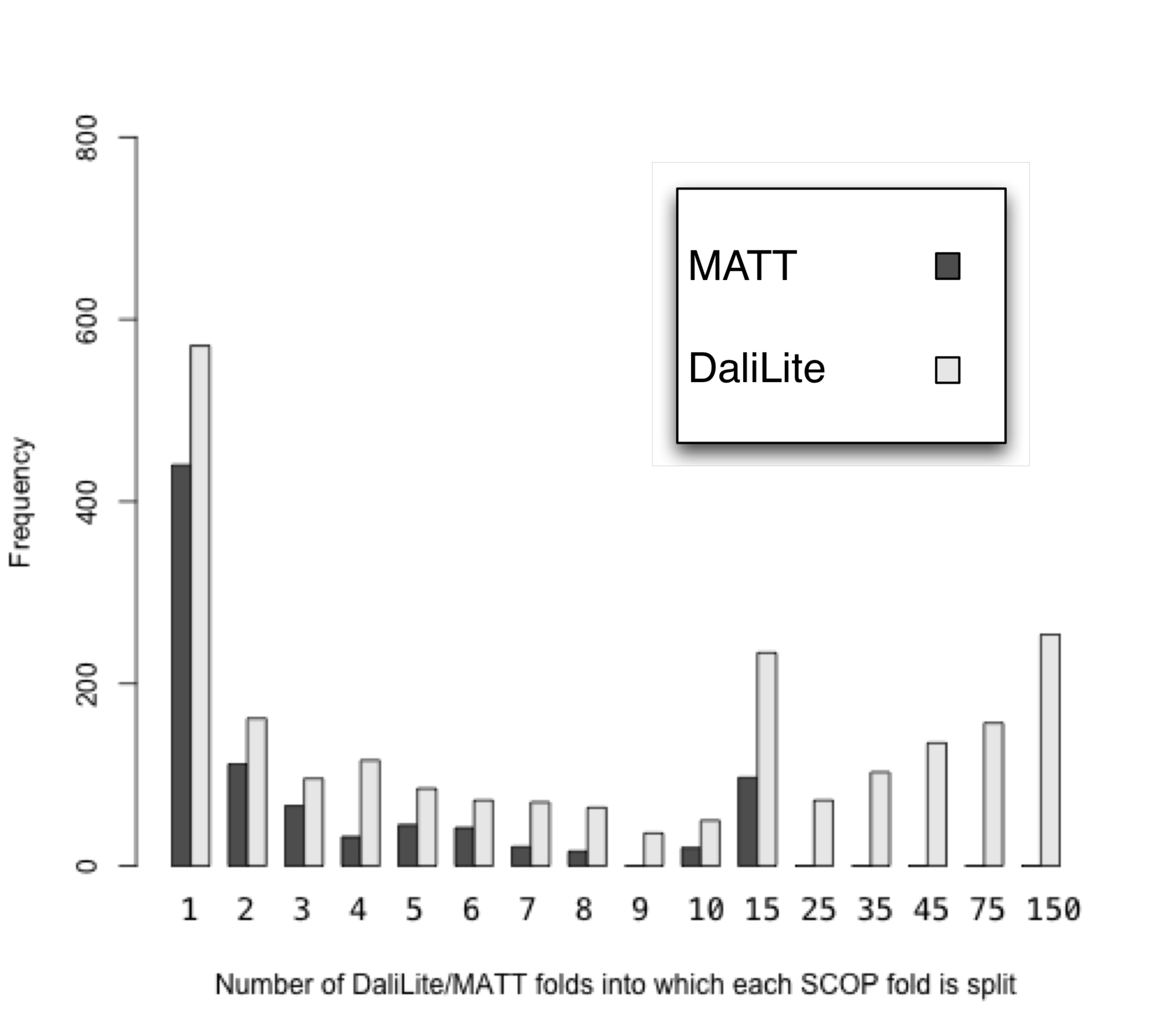}}
   \caption{Number of Matt vs. DaliLite folds into which each SCOP 
   fold is shattered.
   Note the tail of the distribution, in which DaliLite breaks SCOP 
   folds into many small pieces.}
   \label{fold-split}
 \end{center}
\end{figure}

\begin{figure}[htb!]
\begin{center}
  \fbox{\includegraphics[width=5in]
  {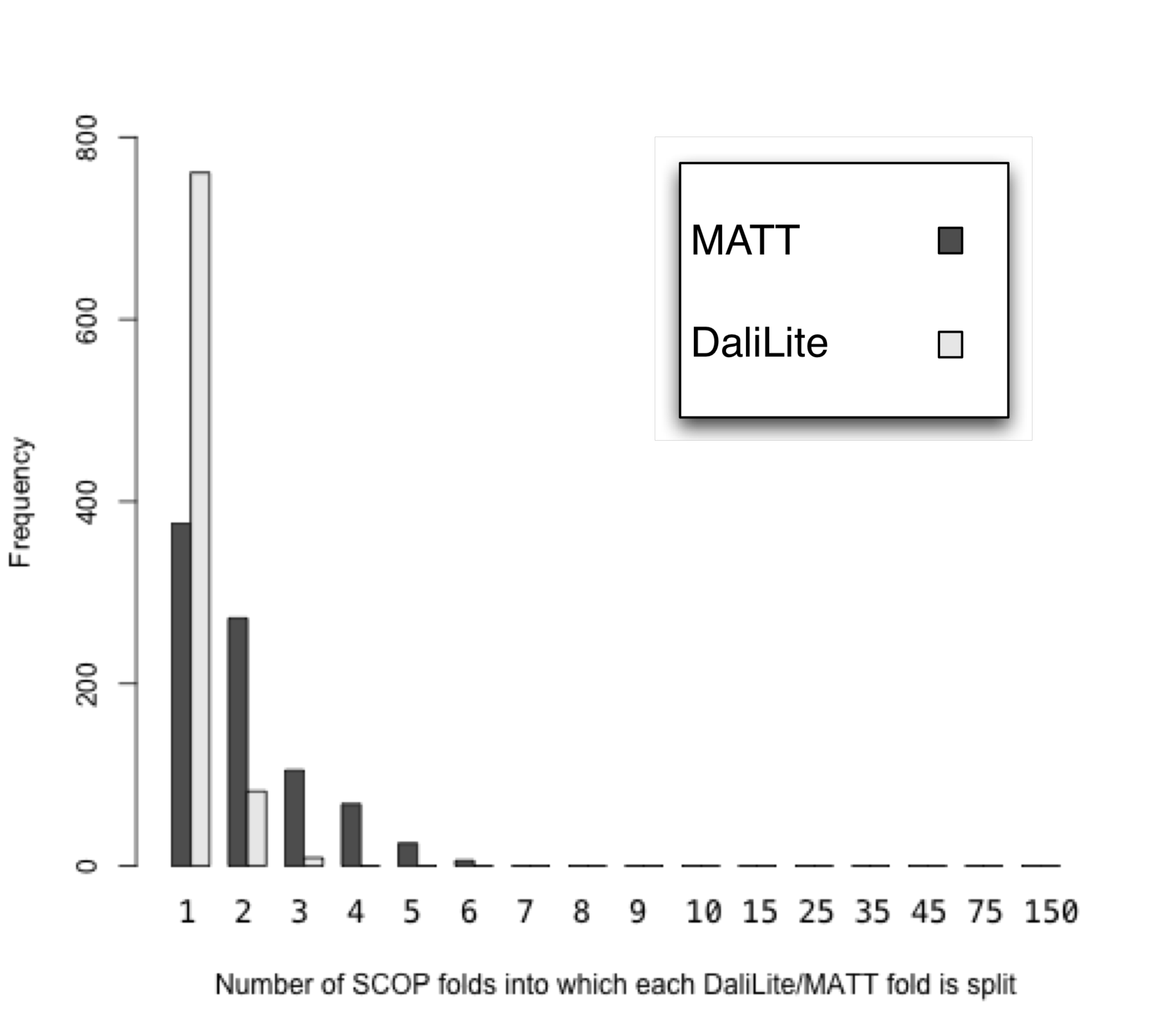}}
   \caption{Number of SCOP folds into which each Matt or DaliLite 
   fold is shattered.}
   \label{fold-split-scop}
 \end{center}
\end{figure}

\subsection{Specific Example}

We thought it would be illuminating to provide a pictorial example of a single
SCOP superfamily that Matt splits into two superfamilies. Consider the SCOP
superfamily ``DHS-like NAD/FAD-binding domain'' (SCOP ID 52467). 
There are 24 proteins
from this superfamily in our representative set. Matt places 17 of them in one
superfamily, but the remaining 7 in a different superfamily. 
Figure~\ref{touring-example}a gives an
example protein from the Matt superfamily of size 17, while 
Figure~\ref{touring-example}b gives an
example protein from the Matt superfamily of size 7. Both Matt superfamilies
contain the same single flat $\beta$-sheet of 6 or 7 strands, surrounded by
$\alpha$-helices. In addition, the proteins in the Matt superfamily of size 7
have a second short 3-4 strand $\beta$-sheet. The second short $\beta$-sheet is
physically on one end of the first $\beta$-sheet in 3-dimensional space, but
sometimes occurs between the second-to-last and last $\beta$-strands in the
first $\beta$-sheet in terms of linear (sequence) ordering, or else at the very
end. The second $\beta$-strand is also partially surrounded by $\alpha$-helices.

Because of the common central motif, it is very possible that these proteins
are evolutionarily related and thus belong in the same SCOP superfamily.
However, geometrically, the additional short $\beta$-sheet is significant
enough for Matt to place them in different superfamilies. Matt does, however,
place them in the same fold.

\begin{figure*}[htb!]
\centering
\mbox{
\subfigure[{\bf Example from Matt superfamily 252} Pyruvate oxidase from {\em Lactobacillus plantarum}, PDB ID 2ez9:a]{\includegraphics[width=2.5in]{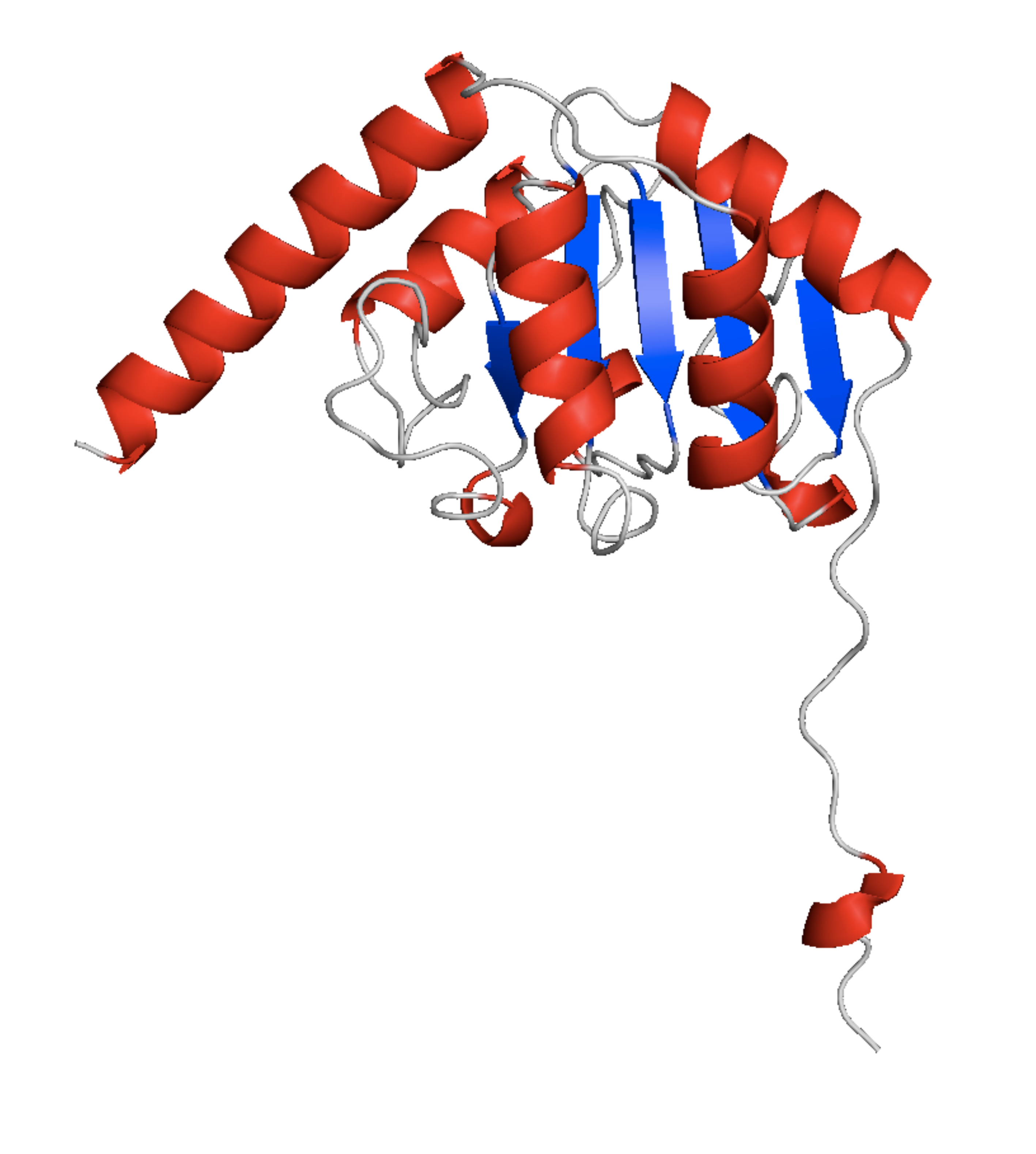} }
\quad
\subfigure[{\bf Example from Matt superfamily 212} AF1676, Sir2 homolog (Sir2-AF1) from Archaeon {\em Archaeoglobus fulgidus} , PDB ID 1m2k:a]{\includegraphics[width=2.5in]{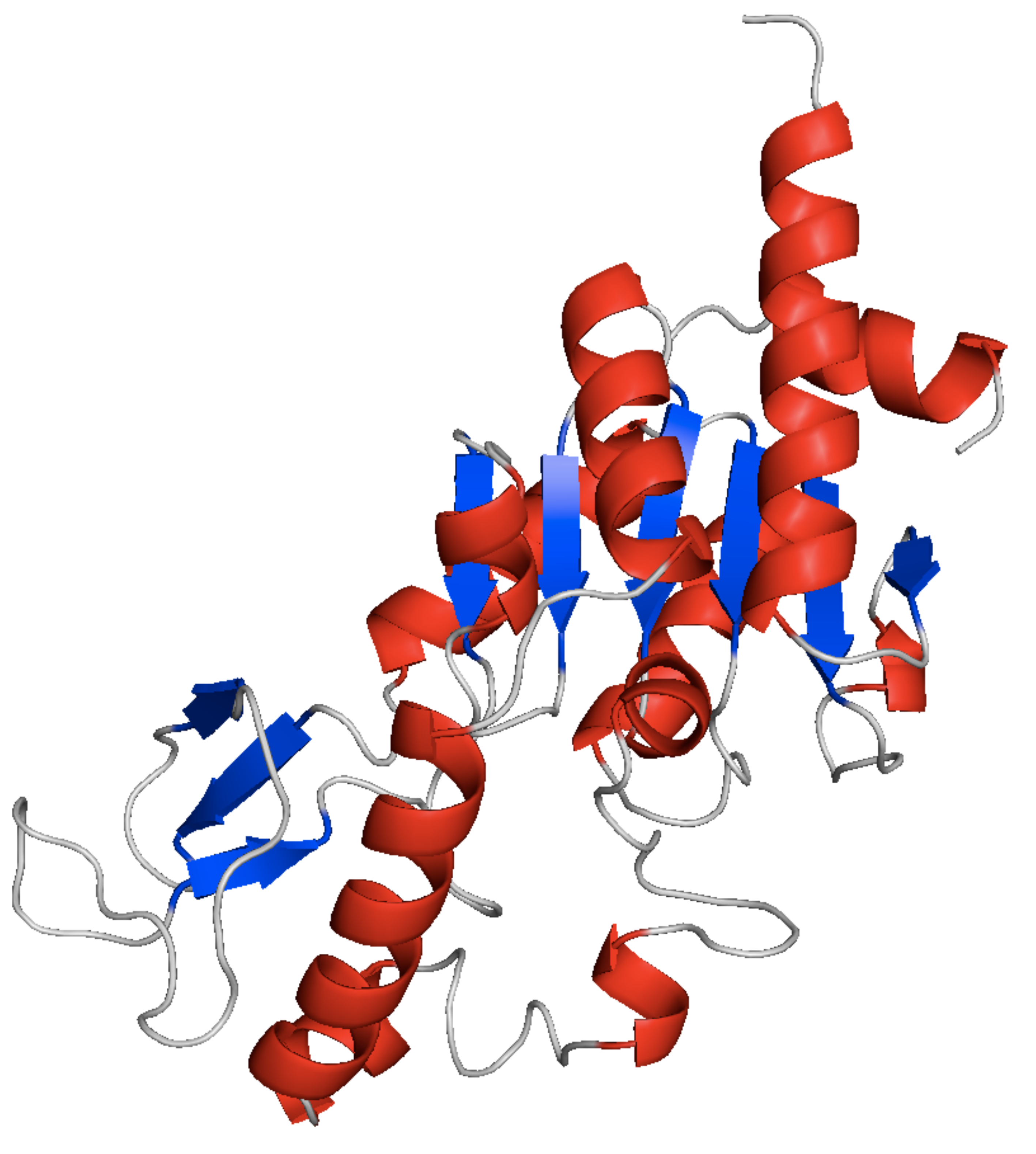}}}
\caption{Example of a SCOP superfamily split by Matt}\label{touring-example}
\end{figure*}

\section{Discussion}\label{touring-disc}

We have shown that using more modern structure alignment programs, we can
approximately match SCOP at the superfamily level. Of course, any mapping
between one set of clusters based on geometric equivalence, and another set of
clusters based on geometric as well as evolutionary equivalence, will be
imperfect--yet the Matt clusters at the superfamily level seem sufficiently
interesting that differences between Matt and SCOP could be illuminating.

As noted earlier, DaliLite tends to shatter SCOP folds into many more shards
than Matt. How can this be, given the very similar pairwise classification
performance at this level? One possibility is that the Matt-based distance
value is more stable in regions far beyond the specific thresholds we learned,
and that this leads to the topology of the resulting dendrogram (before
cutting) more faithfully representing the relationships between more- and 
less-closely related folds. 
In other words, DaliLite's Z-scores may result in more
`spoilers'--individual proteins with large distances to many other proteins in 
the same cluster--that break up clusters (due to our total-linkage requirement) 
than Matt's distance value.
While we have only compared Matt to DaliLite, comparisons to other aligners 
such as TM-Align~\cite{Zhang:2005do} would undoubtedly be interesting.
We focused on the comparison to DaliLite due to it being the aligner underlying
the FSSP database~\cite{Holm:1998um}.

What do Matt's clustering results mean for protein fold space at the ``fold''
level of structural homology? Here, while the Matt clustering clearly seems
more informative than that produced by DaliLite, performance is still uneven.
There seem to be some SCOP folds where the Matt split appears meaningful, and
others where it is more arbitrary. For example, a notoriously difficult SCOP
fold for multiple automatic methods is the enormous $\beta/\alpha$ TIM barrel
fold. SCOP places 33 separate superfamilies into this one fold, but both of our
clustering approaches seem to split it into multiple folds. For example,
DaliLite splits the TIM barrel SCOP fold into 106 separate folds. Matt splits
the TIM barrel SCOP fold into `only' 17 separate folds, which is better than
106, but inspection of the boundaries between these Matt fold classes shows
more continuity of shape, and the cuts appear to be somewhat arbitrary.

Thus, while touring protein space with Matt seems to lend support to a more
discrete view of protein space through the superfamily level, further study of
individual clusters may be warranted to determine the breakpoint distance at
which continuity takes over. Perhaps the degree of similarity of different
individual SCOP folds can be characterized, similarly to what Suhrer, et
al.~\cite{Suhrer:2007hg} did at the family level.

We have made the Mattbench benchmark set available at
\url{http://www.bcb.tufts.edu/mattbench}. 
We hope that developers of protein sequence
alignment tools will consider testing their performance on Mattbench, as well as
SABmark~\cite{VanWalle:2005wp} and
HOMSTRAD~\cite{Mizuguchi:1998fp}.

\chapter{Simplified Markov Random Fields and Simulated 
Evolution Improve Remote Homology Detection for Beta-structural Proteins}

\label{chapter:c3_smurflite}

\section{Introduction}

Many researchers use hidden Markov models (HMMs) to annotate proteins according to homology,
with popular systems such as Pfam~(\cite{Finn:2008iw}) and Superfamily~(\cite{Wilson:2007cm}) based on HMM 
methods integrated into UniProt. However, HMMs are limited in their power to recognize remote homologs
 because of their inability to model statistical dependencies between amino-acid residues that are close in space but
 far apart in sequence~(\cite{Lifson:1980vd, Zhu:1999wr, Olmea:1999ug, Cowen:2002p588,Steward:2002wz}).

For this reason, many have suggested (\cite{White:1994ty, Lathrop:1996gm, Thomas:2008uw, Gopalakrishnan:2009wx, Menke:2010ti, Peng:2011vk})
that more powerful Markov random fields (MRFs) be used. MRFs employ an auxiliary
 {\em dependency graph\/} which allows them to model more complex statistical
 dependencies, including statistical dependencies that occur between
 amino-acid residues that are hydrogen-bonded in $\beta$-sheets.
 
However, as the dependency graph becomes more complex, major design 
difficulties emerge. 
First, the MRF becomes more difficult to train.
Second, finding the optimal-scoring parse of the target to the model quickly 
becomes computationally intractable.

We have built a fully automated system, SMURFLite, that combines
the power of Markov random fields with Kumar and Cowen's 
Simulated Evolution (\cite{Kumar:2010wv}) (which offloads information
about pairwise dependencies in $\beta$-sheets into new, artificial
training data), in order to build the first MRF models that are computationally 
tractable for {\em all \/} $\beta$-structural
proteins, even those with limited training data. 
The SMURFLite system builds in part on the
SMURF MRF (\cite{Menke:2010ti}), which uses multidimensional dynamic
programming to simultaneously capture both standard HMM models and the
pairwise interactions between amino acid residues bonded together in
$\beta$-sheets. 

Unlike the full SMURF MRF, where the computational
requirements of the random field become prohibitive on folds with
deeply interleaved $\beta$-strand pairs, such as barrels, SMURFLite is
tractable on all $\beta$-structural proteins (see Figure~\ref{beta-interleave}).
SMURFLite enables researchers to trade modeling power for computational cost
by tuning an {\em interleave threshold\/}. 
The interleave threshold represents the maximum number of unrelated 
$\beta$-strands that can occur in linear sequence between the $\beta$-strands
hydrogen-bonded in a $\beta$-sheet while
still being retained as pairwise dependencies in the MRF. As the interleave
threshold increases, computation time increases, but so does the power of the 
MRF (see Figure~\ref{barwin_barrel}).

We first test SMURFLite on all propeller and barrel folds in the mainly-$\beta$
class of the SCOP hierarchy in stringent cross-validation
experiments. We show a mean 26\% (median 16\%) improvement in Area Under Curve 
(AUC) for
$\beta$-structural motif recognition as compared to
HMMER (\cite{Eddy:1998ut}) (a popular HMM method) and a mean 33\% (median
19\%) improvement as compared to RAPTOR (\cite{Xu:2003p3417}) (a
well-known threading method), and even a mean 18\% (median 10\%) improvement in AUC over
HHPred (\cite{Soding:2005ff, Soding:2005fa}) (a profile-profile HMM method), despite HHpred's use of extensive
additional training data. We demonstrate SMURFLite's ability to scale to whole
genomes by running a SMURFLite library of 207 $\beta$-structural SCOP
superfamilies against the entire genome of \textit{Thermotoga
maritima}, and make over a hundred new fold predictions (available at \url{http://smurf.cs.tufts.edu/smurflite}). The majority of these predictions are for genes that display
very little sequence similarity with any proteins of known structure,
demonstrating the power of SMURFlite to recognize remote homologs.

We offer an online server (\url{http://smurf.cs.tufts.edu/smurflite}) for
predicting remote homologs from our library of  207 mainly-$\beta$ 
superfamilies using SMURFLite. The online server
sets the interleave threshold (the parameter that determines the
complexity of the MRF) to 2; we have also
shown that increasing the interleave number for SMURFLite can
dramatically improve accuracy, but at a great computational cost. 
While the primary intent of using simulated
evolution in conjunction with simplified MRFs is to compensate for the
removal of highly-interleaved $\beta$-strand pairs required for
computational feasibility, we surprisingly find that simulated
evolution can still improve full-fledged SMURF in cases of sparse
training data. For instance, the 5-bladed $\beta$-propellers have only
three superfamilies in SCOP, two of which contain only one family.  We
find that for the 5-bladed $\beta$-propeller fold, combining SMURF and
simulated evolution improves AUC from 0.73 for full SMURF alone to
0.89.

\section{Methods}

\subsection{Summary of SMURF Markov random field framework} \label{mrf}

SMURF and SMURFLite rely on training data in the form of a multiple structure 
alignment with $\beta$-strand annotation.
This alignment is created using the Matt program (\cite{Menke:2008wu}).
$\beta$-strand annotation is done on a structure-by-structure basis, where the 
$\beta$-strand residue pairing is determined using the same algorithm 
implemented by the Rasmol (\cite{Sayle:1995ux}) visualization program.
Essentially, $\beta$-strands are detected by analyzing the $\psi$, $\phi$, and
$\omega$ angles, as well as the distance between a hydrogen from the amine 
group and an oxygen from the carboxyl group of the amino acid at which that
hydrogen is pointing, if any.
If this hydrogen and oxygen point at each other and are within $3 \text{\AA}$,
they are considered to be hydrogen-bonded.
If successive hydrogen bonds are to amino acids near (3-5 residues) in sequence,
an $\alpha$-helix is inferred; if those bonds are to distant amino acids in
sequence, a $\beta$-strand is inferred.
A postprocessing step annotates those $\beta$-strand residues that appear in 
more than half the structures in the alignment as $\beta$-conserved. 
As gaps in $\beta$-strands would complicate training, this post-processing step 
makes $\beta$-conserved
template strands contiguous in the alignment exactly as in \cite{Menke:2010ti}. 
Specifically, any gaps in a column, that otherwise comprises at least half
$\beta$-structural amino acids, are removed from the alignment.
Recall that up to half the sequences are allowed to \emph{not} 
participate in $\beta$-strands at any given position of the alignment, the
non-$\beta$-strand amino acids in those positions are still treated as if they
participate in $\beta$-strands.

The result at
this stage is a sequence alignment (resulting from the Matt structural
alignment) with conserved $\beta$-strand pairs annotated according to the 
residue positions and conformation (buried or exposed to solvent).

The pairwise probability portion of the MRF is based on the $\beta$~probability 
tables that were computed by collecting a set of amphipathic 
$\beta$-sheets
from the Protein Data Bank (PDB) (\cite{Berman:2000hl}) and tabulating the 
frequencies of
pairs of hydrogen-bonded residues in two tables, one for buried residues and one
for residues exposed to solvent (\cite{Bradley:2001tj}, \cite{Menke:2010ti}). 
The $\beta$-structural proteins chosen were filtered to 25\% sequence identity
to prevent over-representation of highly-sampled sequences.
Amphipathic $\beta$-sheets are those $\beta$-sheets that are ``confused'' as to 
their hydrophobicity, and thus have residues whose sidechains may alternate as 
to the direction in which they pack.
For each residue position, the most
likely conformation (buried or exposed) is chosen based on whether that residue
pairing is most probable from the buried or exposed $\beta$-pairing tables.

Given a trained MRF, SMURF and SMURFLite align a query sequence to the MRF. The
query phase of SMURF and SMURFLite computes the alignment of the sequence to the
states of the MRF that maximizes the combined score: 

\begin{equation}log \left(\text{HMM score}\right) + log \left(\text{pairwise score}\right) \nonumber \end{equation}

 In this combined score, the HMM score is the conditional probability
of observing the sequence given the HMM portion of the model, and the pairwise
score is the conditional probability of observing the paired $\beta$-strand
components of the sequence given the $\beta$-pair portion of the model. 
Let the sequence have residues $r_{1}..r_{n}$, and the MRF have match states 
$m_{1}..m_{l}$, deletion states $d_{1}..d_{l}$, and insertion states 
$i_{1}..i_{l}$. 
Suppose that $r_{1}..r_{k}$ and match states $m_{1}..m_{s}$ have been assigned. 
Then, the probability of assigning $r_{k}$ to the next match state 
$m_{j}=m_{s+1}$ is:
\begin{eqnarray}Pr\left[m_{j}|r_{k},u_{j-1}\right] = HMM\left[ m_{j},r_{k} 
  \right] \cdot \nonumber \\ transition\left[u_{j-1},m_{j}\right] \cdot 
  \nonumber \\ \beta  strand\left[r_{j},r_{k},m_{j},m_{k}\right] \nonumber 
\end{eqnarray}
where $u_{j-1}$ can be either $d_{j-1}$, $i_{j-1}$, or $m_{j-1}$ depending 
on whether the current state is a deletion, insertion, or match state. 
When the current state is a match state, the SMURFLite template replaces the 
$transition\left[u_{j-1},m_{j}\right]$ term with a value of $1$. 
The $\beta strand$ component is set to be identically $1$ unless the particular 
match state $m_{j}$ participates in a $\beta$-strand that is matched with a 
state $m_{k}$ earlier in the template. 
This component is the primary difference between our MRF and an ordinary 
HMM~(\cite{Menke:2010ti}). 

SMURFLite computes the maximum score of a sequence using multidimensional 
dynamic programming on the MRF. This dynamic programming resembles the classic 
Viterbi algorithm (\cite{Viterbi:1967hq}) used on HMMER's ``Plan7'' 
(\cite{Eddy:1998ut}) HMMs, except that some states are $\beta$-strand states, 
which are required to be match states, and which are paired with other 
$\beta$-strand nodes in the model. 
Because the pairwise component of the score can only be calculated for a given 
MRF node once it is determined what residue occupies the paired MRF node 
earlier in the sequence, each time the dynamic programming reaches a state in 
the MRF that corresponds to the first residue of the first $\beta$-strand in a 
set of paired $\beta$-strands, we need to keep track of multiple cases, depending on what residue in sequence is mapped to that state. 
SMURFLite uses a multidimensional array to memoize these possible subproblem 
solutions. 
A maximum gap size is set to the longest gap seen in the training data plus 20, 
for computational efficiency. 
When paired $\beta$-strands follow each other in sequence with no interleaving 
$\beta$-strands between them, the number of dimensions in the table for the 
dynamic programming is directly proportional to the maximum gap length. 
Let us call the last MRF state for the first of every pair of $\beta$-strands a ``split'' state and the first MRF state for the second of that pair a ``join'' state. 
Then, at every split state, the number of dimensions of the dynamic program 
will be multiplied by the maximum gap length, because the dynamic program must 
keep track of scores for each possible sequence position (up to the maximum gap 
length) that could be mapped to that state. 
At the corresponding join state, the number of dimensions will be reduced by 
the maximum gap length, because the scoring function can calculate all the 
pairwise probabilities of placing that residue into the join state, and then 
simply take the maximum of all ways to have placed its paired residue into the 
split state. 
However, when other $\beta$-strands are interleaved, the dynamic program must 
open additional multidimensional tables before clearing the previous ones from 
memory.
An example of this interleaving is shown in Figure~\ref{beta-interleave}.
Thus, the number of elements in the multidimensional table is never more than 
the sequence length times the maximum gap length raised to the power of the 
interleave number.

\subsection{Datasets} \label{datasets}

From SCOP (\cite{Murzin:1995uh}) version 1.75, we chose the folds ``5-bladed 
Beta-Propellers'', ``6-bladed Beta-Propellers'', ``7-bladed Beta-Propellers'', 
and ``8-bladed Beta-Propellers''.  We also chose superfamilies from all of the 
mostly-$\beta$ folds containing the word ``barrel'' in their description, 
whether open or closed, restricted to those superfamilies comprising at least 
four families (in order to facilitate leave-family-out cross-validation). These 
superfamilies were: ``Nucleic acid-binding proteins'' (50249), ``Translation 
proteins'' (50447), ``Trypsin-like serine proteases'' (50494), ``Barwin-like 
endoglucanases'' (50685), ``Cyclophilin-like'' (50891), ``Sm-like 
ribonucleoproteins'' (50182), ``PDZ domain-like'' (50156), ``Prokaryotic 
SH3-related domain'' (82057), ``Tudor/PWWP/MBT'' (63748), ``Electron Transport 
accessory proteins'' (50090), ``Translation proteins SH3-like domain'' (50104), 
``Lipocalins'' (50814) and ``FMN-binding split barrel'' (50475). 
Of these, we removed the superfamilies ``Lipocalins'' and ``Trypsin-like serine 
proteases,'' which were not structurally consistent enough to permit a multiple 
structure alignment for training HMMER or the SMURF variants, and which were 
broken into distinct superfamilies by~\cite{Daniels:2011dc}, with the result 
that 11 superfamilies containing barrels were selected. 
In addition, for the whole-genome search on \textit{Thermotoga maritima}, out 
of 354 total superfamilies within the SCOP class ``All beta proteins'', 288 
(81\%) of which contain at least two protein chains, 207 superfamilies (71\%) 
were structurally consistent enough to be aligned using the Matt 
(\cite{Menke:2008wu}) structural alignment program.  We built SMURFLite 
templates for these 207 superfamilies, and obtained from UniProt the protein 
sequences for \textit{Thermotoga maritima}, comprising 1852 genes.

\subsection{Training and testing process} \label{training}

For the $\beta$-propeller folds, strict leave-superfamily-out cross-validation 
was performed. 
The propeller folds are structurally highly consistent (\cite{Menke:2010ti}), 
and thus high-quality multiple structure alignments 
were possible using Matt (\cite{Menke:2008wu}) without descending to the superfamily level. 
For each propeller fold, its constituent superfamilies were identified.
For each superfamily, one pass of cross-validation was performed. 
Given a superfamily to be left out, a training set was established from the protein 
chains in the remaining superfamilies, with duplicate sequences removed. 
An HMM (in the case of HMMER and HHPred) or MRF (in the case of SMURF and 
SMURFLite) was trained on the training set (HMMER parameter settings are 
discussed below). 
Protein chains from the left-out superfamily were used as positive test 
examples. 
Negative test examples were protein chains from all other folds in SCOP classes 
1, 2, 3 and 4 (including propeller folds with differing blade counts), 
indicated as representatives from the non-redundant Protein Data Bank 
repository (nr-PDB) (\cite{Berman:2000hl}) database with non-redundancy set to 
a BLAST E-value of $10^{-7}$. 

The $\beta$-propellers are atypical of most $\beta$-structural SCOP folds, in 
that they structurally align well at the fold level of the SCOP hierarchy. 
For the $\beta$-barrel superfamilies, strict leave-family-out cross-validation 
was performed. 
The barrel superfamilies are distinguished by strand number and shear as well 
as other structural features (\cite{Murzin:1995uh}), and so like most 
$\beta$-structural motifs they do not align well structurally at the fold 
level. 
For this reason, the superfamily level was chosen for training.
This cross-validation was similar to that chosen for the $\beta$-propellers,
except that it was done at the superfamily level, and thus each pass of the
cross-validation involved leaving out a family and training on a structural 
alignment of representatives from the remaining families in that superfamily.

Each test example was aligned to the trained HMM (from HMMER and HHPred) and 
MRF, and was also threaded, using RAPTOR, against each individual chain in the 
training set (RAPTOR parameters are discussed below). The score reported for 
HMMER and HHPred was the output HMM score, and the score reported for SMURF and 
SMURFLite was the combined HMM and pairwise score from the MRF. For RAPTOR, the 
score reported for a test example was the highest score from all the scores 
resulting from threading that test example onto each chain in the training set. 
For each training set, the scores for each method were collected and a ROC 
curve (a plot of true positive rate versus false positive rate) computed. We 
report the area under the curve (AUC statistic) from this 
ROC curve~(\cite{Sonego:2008uy}).

\subsection{$p$-values}
SMURFLite computes the $p$-value for an alignment in a similar manner to HMMER, 
using an 
extreme value distribution (EVD) (\cite{Eddy:1998ut}). 
An EVD is fitted to the distribution of raw scores over a random sampling of 
5000 protein chains from across the SCOP hierarchy. 
The $p$-value is then simply computed as $1-cdf\left( x \right)$ for any raw 
SMURFLite score $x$, where $cdf$ is the cumulative distribution function for 
the EVD.

\begin{figure*}[h!tpb]
\centerline{\includegraphics[width=5.5in]{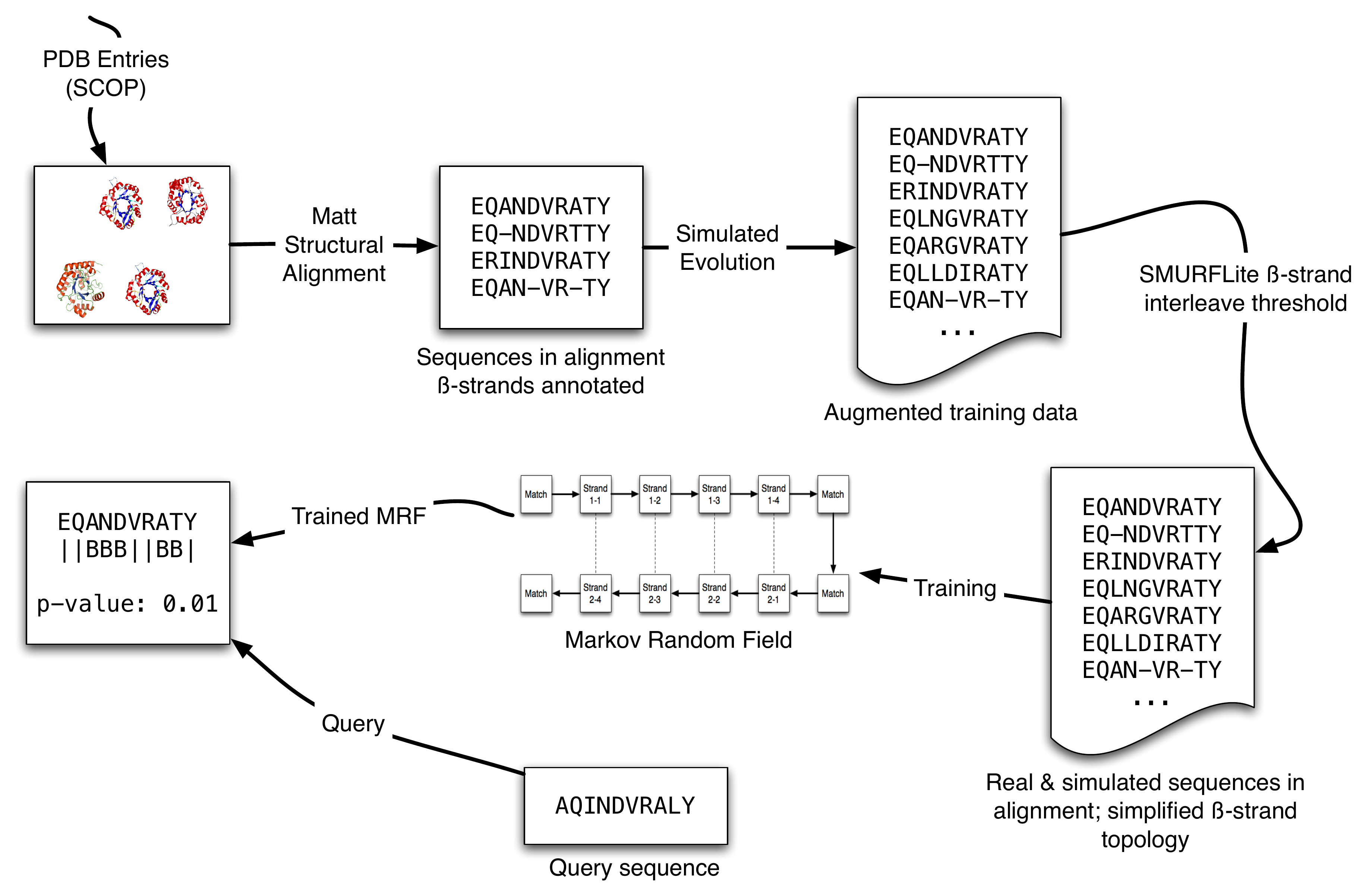}}
\caption{The SMURFLite pipeline, including simulated evolution and simplification of the $\beta$-strand topology}\label{pipeline}
\end{figure*}

\subsection{SMURFLite augmented training data}

Kumar and Cowen~\cite{Kumar:2009tp, Kumar:2010wv} showed that ``simulated
evolution,'' augmenting limited training data with additional sequences
produced by mutating the original sequences, improved the performance of HMMER
at recognizing the same-superfamily level of homology. Kumar and
Cowen~\cite{Kumar:2010wv} used two types of simulated evolution: point-wise and
pairwise. 
Here we add only pairwise mutations based on $\beta$-strand pairings, as
we expect long-range interactions between $\beta$-strands to be highly conserved
across similar structures. 
We postulated that the elimination of the $\beta$-strand pairs SMURFLite
must disregard because of computational complexity might be
compensated for by augmenting the training data with artificial
sequences based on likely mutations in those paired $\beta$-strands. This
training-data augmentation comes at insignificant runtime cost and is
done before $\beta$-strand pairs are removed from the template (but after
their interleave number has been identified, where we define
interleave number next below).
The mutation frequencies are given by the Betawrap and SMURF (\cite{Bradley:2001tj,
Menke:2010ti}) pairwise probability tables. 
Using the same algorithm as \cite{Kumar:2010wv}, we generate 150 new artificial 
training sequences from each original training sequence. 
For each artificial sequence, we mutate at a 50\% mutation rate per length of 
the $\beta$-strands. 
The parameters 150 and 50\% were recommended by \cite{Kumar:2010wv}; we also 
evaluated a 10\% mutation rate (a secondary peak according to their work) and 
performance was slightly worse (data available from the authors).

\begin{figure}[!tpb]
\centerline{\includegraphics[width=5.5in]{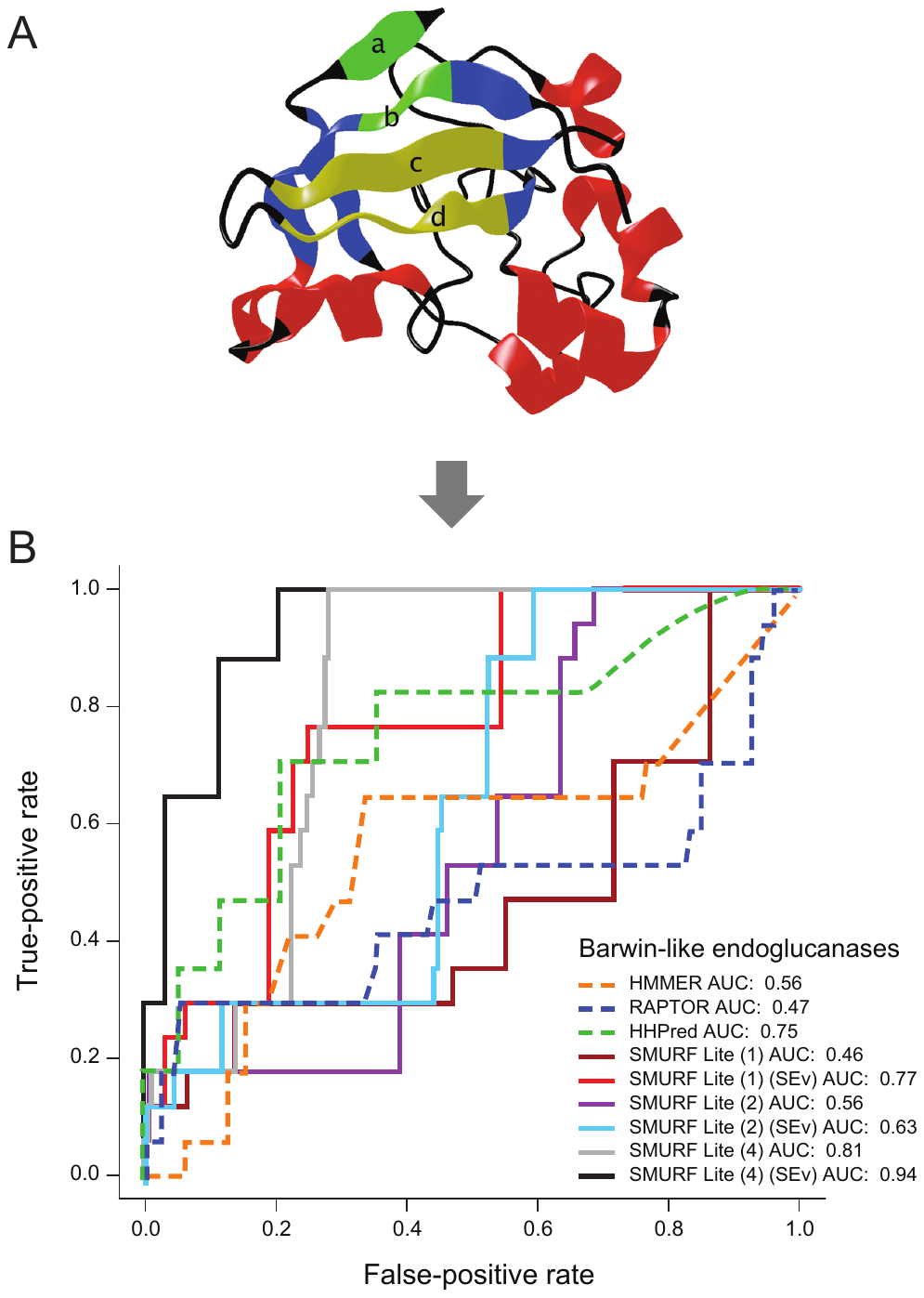}}
\caption[A closed $\beta$-barrel (PDB ID 1bw3, a Barwin domain) from the 
superfamily ``Barwin-like endoglucanases''  to illustrate interleaving of 
strand pairs.]{A closed $\beta$-barrel (PDB ID 1bw3, a Barwin domain) from the 
superfamily ``Barwin-like endoglucanases''  to illustrate interleaving of 
strand pairs. $\beta$-strands a and b, which close the barrel, have interleave 
4, while strands c and d, which are adjacent in sequence, have interleave 1. 
Strands b and c have interleave 2.
This is because, if we begin at the N-terminal end, the order of the 
$\beta$-strands is a, c, d, b.
}\label{barwin_barrel}
\end{figure}

\begin{figure}[ht!]
  \centering
  \subfigure[]{
  \includegraphics[height=2.0in]{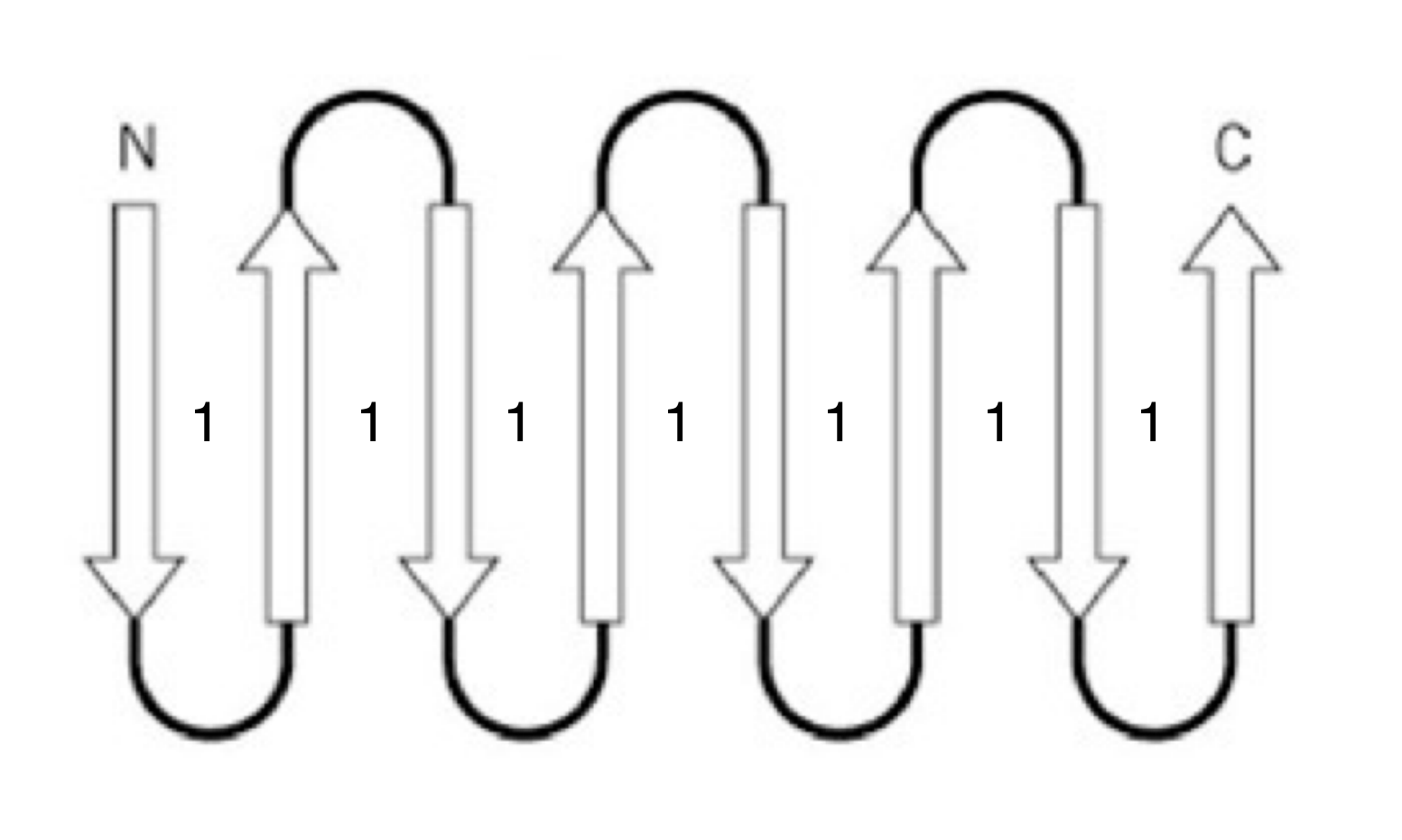}
  \label{beta-interleave-1}
  }
  \subfigure[]{
  \includegraphics[height=2.0in]{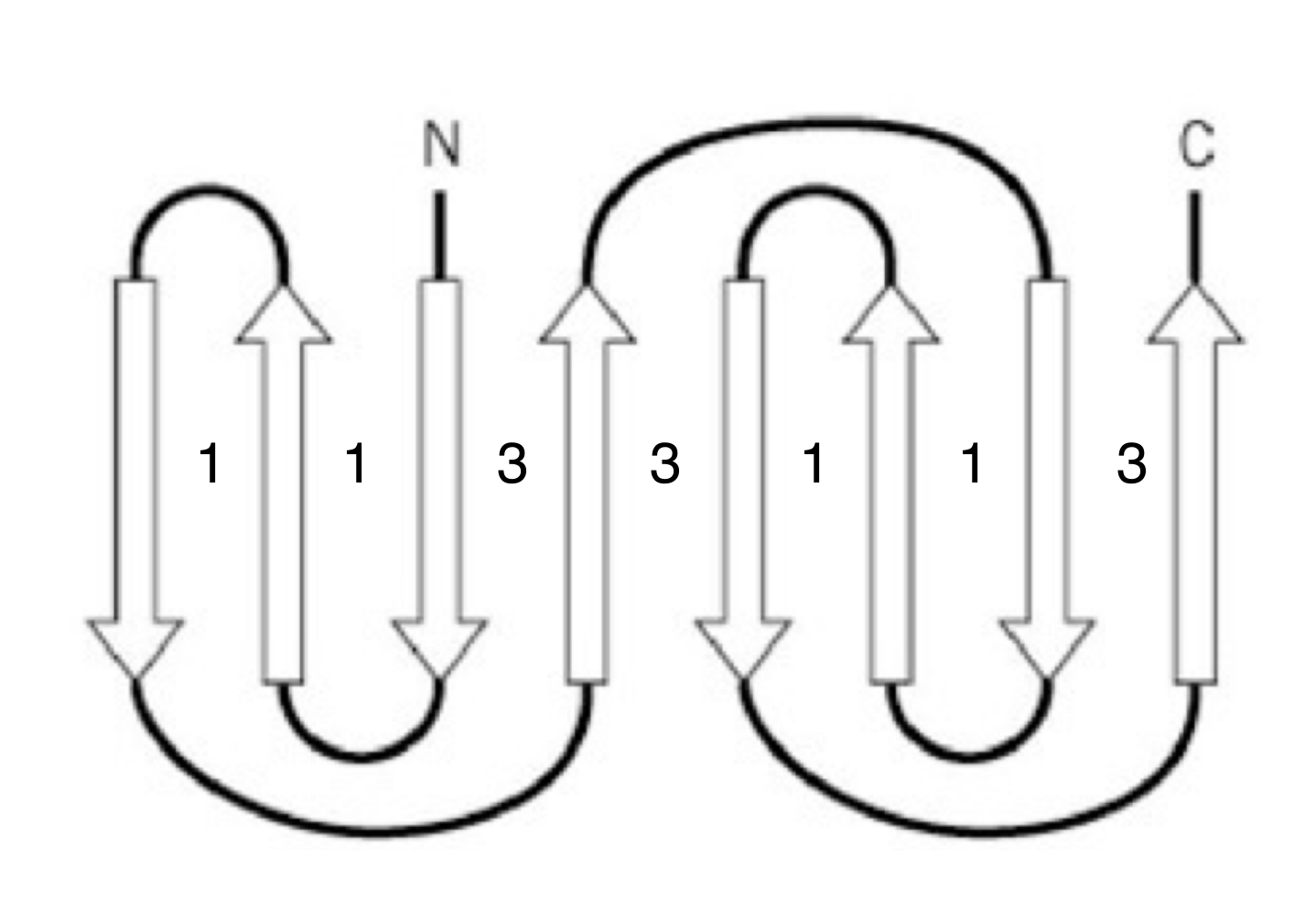}
  \label{beta-interleave-2}
  }
  \subfigure[]{
  \includegraphics[height=2.0in]{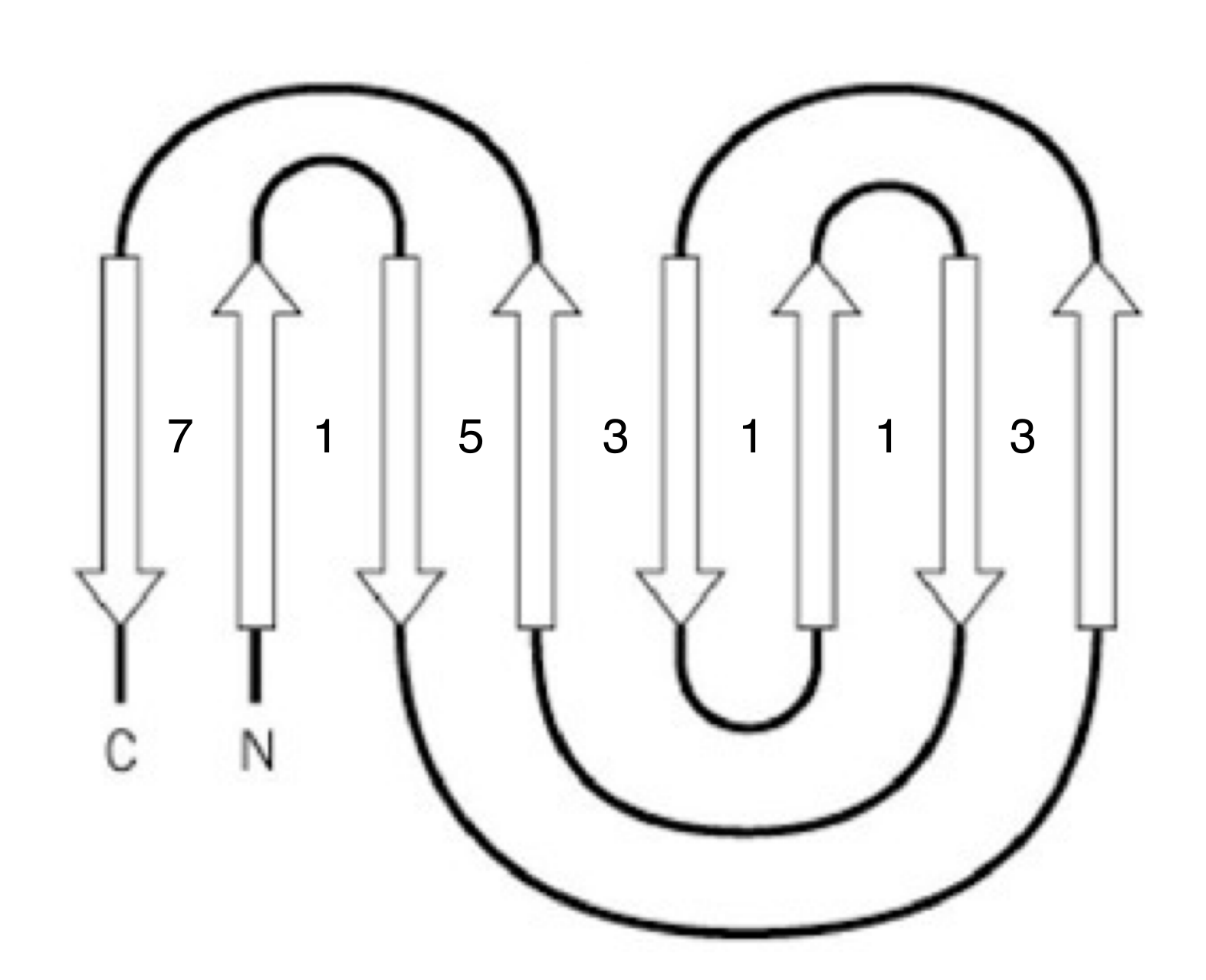}
  \label{beta-interleave-3}
  }

  \caption[Interleave number explained]{\textbf{(a)} An ``up-and-down''
   $\beta$-sheet.
  Unless the C-terminal and N-terminal ends are hydrogen-bonded together,
  the interleave is 1 for each pair of strands.\\
\textbf{(b)} A ``greek key'' $\beta$-sheet.
  The numbers between each pair of $\beta$-strands indicate the interleave.
  The maximum interleave in this instance is 3.\\
\textbf{(c)} A ``jelly roll'' $\beta$-sheet.
The numbers between each pair of $\beta$-strands indicate the interleave.
The maximum interleave in this instance is 7, between the C-terminal and
N-terminal strands.
}\label{beta-interleave}
\end{figure}

\subsection{SMURFLite simplified random field}

SMURFLite trains a MRF on a template built from a multiple structure
alignment. $\beta$-strands in the aligned set of structures are found by
the program SmurfPreparse which is part of the SMURF
package (\cite{Menke:2009, Menke:2010ti}). 
The program not only
outputs the positions of the consensus $\beta$-strands in the alignment,
it also declares a position buried or exposed based on which of the
two tables is the best fit to the amino acids that appear in that
position in the training data. 
SMURFLite then assigns an interleave
value to each $\beta$-strand pair, as follows:  Any pairwise interaction 
between $\beta$-strands whose interleave value equals or exceeds the SMURFLite
threshold is removed from the training data. 

Consider three $\beta$-strands: A, B, and C. Suppose strand A interacts with 
strand B and the
(A,B) pair has an interleave value of 4, while strand B also interacts
with strand C and that (B,C) pair has an interleave value of just
1. With a SMURFLite threshold of 2, the (A,B) pair would be discarded,
but the (B,C) pair would remain in the training data. Thus, interleave
numbers are properties of {\em pairs} of $\beta$-strands; a $\beta$-strand
may be involved in multiple pairings, each of which may have a
distinct interleave value. Discarding $\beta$-strand pairs whose
interleave value equals or exceeds the threshold guarantees that the
MRF will have no $\beta$-strand pairs greater than or equal to that
threshold, and thus bounds the computational complexity, which is
exponential in the maximum interleave value found in a training
template. 
Figure~\ref{simplified} illustrates which $\beta$-strand pairs would be removed
for two different topologies.

\begin{figure}[ht!]
  \centering
  \subfigure[]{
  \includegraphics[height=2.0in]{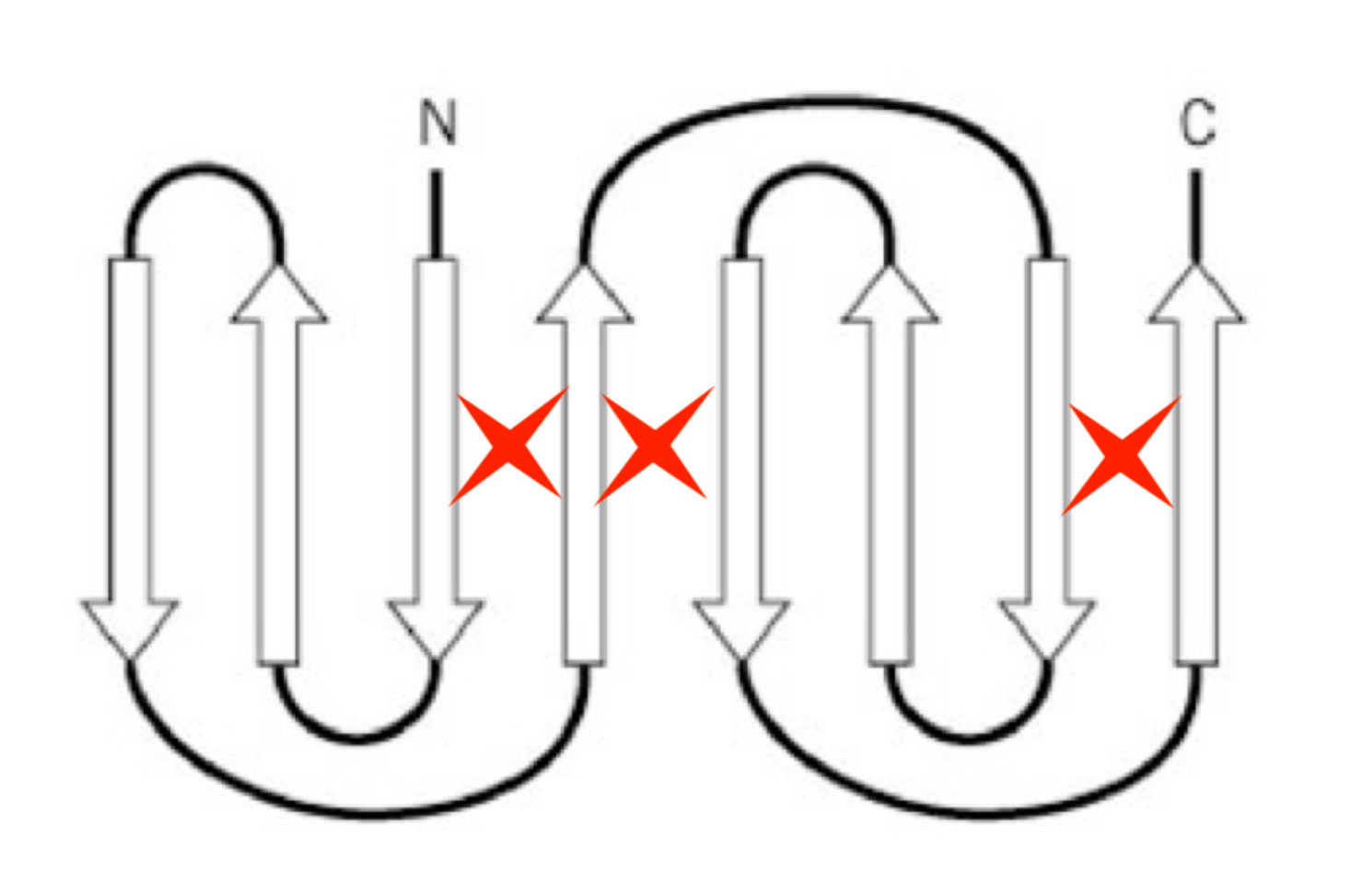}
  \label{simplified-2}
  }
  \subfigure[]{
  \includegraphics[height=2.0in]{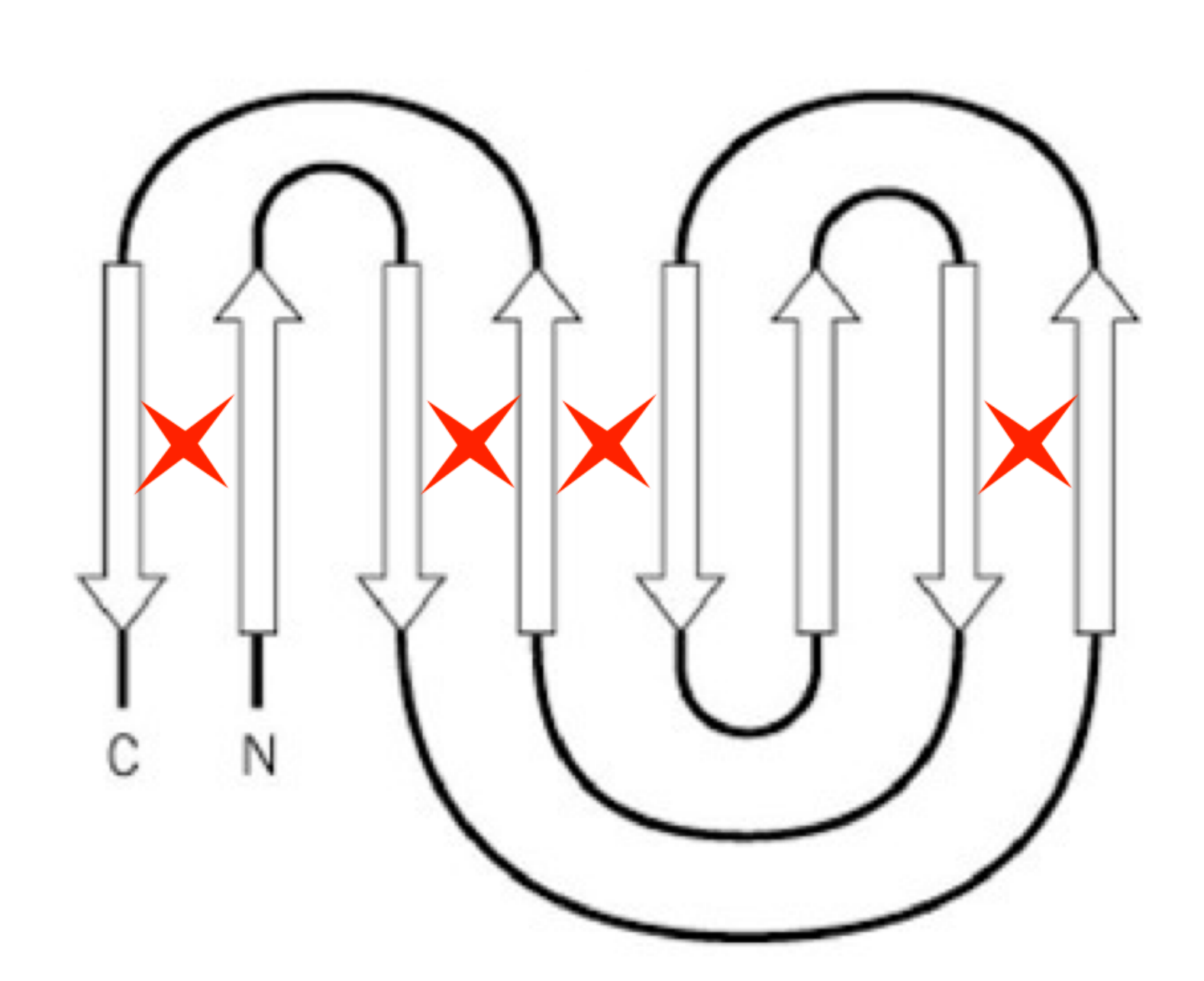}
  \label{simplified-3}
  }
  \caption[SMURFLite simplified Markov random fields]{\textbf{(a)} A ``greek 
  key'' $\beta$-sheet, indicating which
  $\beta$-strand pairs would be removed by SMURFLite with an interleave 
  threshold of 2.
\textbf{(b)} A ``jelly roll'' $\beta$-sheet, indicating which $\beta$-strand
pairs would be removed by SMURFLite with an interleave threshold of 2.
}\label{simplified}
\end{figure}

Note that SMURFLite with an interleave threshold of 0, which
will discard all $\beta$-strand pair information, is simply an HMM.

\subsection{HMMER implementation} \label{hmmer_imp}

SMURFLite was tested against HMMER version 3.0a2 with the ``--seqZ 1'' and
``--seqE 10000'' options applied to hmmsearch, and the ``--symfrac 0.2'' and
``--ere 0.7'' options applied to hmmbuild. The --seqZ 1 option ensures that
E-values are comparable regardless of the size of the sequence database, while
the --seqE 10000 option forces HMMER to return results for all query sequences.
The --symfrac 0.2 option requires that only 20\% of sequences need to be in
agreement to cause a match state in a given column (the default is 50\%). Given
the remote homology at which we were performing experiments, 50\% was an
unreasonably high threshold that led to few match states being found. This
option was also used by~\cite{Kumar:2009tp}. The --ere option sets the minimum
relative entropy per position target to 0.7 bits (the default is 0.59). Note
that HMMER versions 3.0a2 and 3.0 both use SAM sequence entropy
(\cite{Karplus:1998ub}) by default. This entropy weighting scheme has been
shown to be superior for remote homology detection tasks (\cite{Kumar:2009tp,
Johnson:2006tt}).

HMMER 3.0a2 was used despite having been superseded by version 3.0, because it
uniformly performs better on this task. This is because version 3.0 contains
computational optimizations that cause it to reject a sequence (with no score
provided) quickly if it does not appear to align well. These optimizations,
however, cause nearly all query sequences outside the family level of homology
to fail and return no score, with the result that HMMER version 3.0 never
surpasses an AUC of 0.5.

\subsection{RAPTOR implementation} \label{raptor_imp}

SMURFLite was tested against RAPTOR, which was run with the options ``-a nc''
indicating that the default threading algorithm described in the RAPTOR paper
(\cite{Xu:2003p3417}) was used. In addition, RAPTOR used the weighting
parameters ``weightMutation = 1.4009760151,'' ``weightSingleton = 1,''
``weightLoopGap = 16.841836238,'' ``weightPair = 0,'' ``weightGapPenalty = 1,''
``weightSStruct = 3.0137849223.'' RAPTOR uses both sequence and structural
features, and these options represent the recommended balance of these features
(\cite{Xu:2003p3417}).

\subsection{HHPred implementation} \label{hhpred_imp}
SMURFLite was tested against HHPred version 1.5.1. HHPred HMMs for each SCOP 
family were downloaded from the HHPred web site, and queried using hhsearch. 
The score of the best-scoring family HMM within each superfamily was used in 
computing ROC curves.

\subsection{Whole-genome search} \label{whole genome}

All 1852 protein sequences from \textit{Thermotoga maritima} were queried 
against $\beta$-structural templates constructed from the 
nr-PDB~(\cite{Berman:2000hl}) with non-redundancy determined by an E-value of 
$10^{-7}$, organized according to those 207 $\beta$-structural superfamilies 
from 
SCOP that were able to be aligned using the Matt structural alignment program, 
using SMURFLite with an interleave threshold of 2 and simulated evolution 
mutation rate of 50\% on the residues that participate in $\beta$-strands. 
We computed $p$-values and alignments for all $1852 \times 207$ possible hits.

\section{Results}

\subsection{SMURFLite Validation}

SMURFLite's ability to recognize $\beta$-propellers and barrels was compared to 
HMMER (\cite{Eddy:1998ut}),
RAPTOR (\cite{Xu:2003p3417}), and HHPred (\cite{Soding:2005fa}) in a stringent 
cross-validation experiment, as explained in Section~\ref{datasets}.

SMURFLite was tested on these 5 propeller folds and 11 barrel
superfamilies, with {\em interleave\/} thresholds of 1, 2, and 3, and
with and without simulated evolution on the
$\beta$-strands (\cite{Kumar:2010wv}). Here the interleave threshold is a
parameter of SMURFLite that trades off the computational complexity
with the ability of the MRF to capture complicated long-range
dependencies.


The balance between accuracy and computational efficiency is
determined by the interleave threshold at which SMURFLite is run. In
particular, we found that SMURFLite set to an interleave threshold of
3 or less was always fast. Thus, our first question is how SMURFLite
with and without simulated evolution performs on our test set when
the interleave threshold is set to 3 or less. We found that SMURFLite
became slow at an interleave threshold of 4, and essentially
intractable at an interleave threshold of 5 or above. While SMURFLite
with an interleave threshold of 1 or 2 requires roughly 1 second of
wall-clock time on a 12-core 2.4GHz AMD Opteron server, an interleave
threshold of 4 raises this run-time requirement to 7-10
minutes. Restricting the interleave threshold to 3 or less has
different impacts on the different folds in our test set. In
particular, the $\beta$-strands in the propeller folds never have an
interleave greater than 3, which means that full SMURF, as we know, is
tractable on these folds. However, we were still interested in how
simplifying the random field to an interleave of 2 or 1 would impact
performance, and also whether simulated evolution would help. In
contrast, the barrel superfamilies in our test set contain a maximum
$\beta$-strand interleave of between 4 and 8. Interestingly, none of
these barrels contained any $\beta$-strands with an interleave of 3 in
the consensus Matt (\cite{Menke:2008wu}) alignment, so our restriction of 
running SMURFLite
with an interleave threshold of 3 or less is equivalent, on the
barrels, to running SMURFLite with an interleave threshold of 2.
In other words, running the interleave-threshold filter at a threshold of
3 produced identical training data to running it at a threshold of 2.

\begin{figure*}[ht!]
\centering
\subfigure[Full structure of a 7-bladed $\beta$-propeller]{
\includegraphics[width=3.2in]{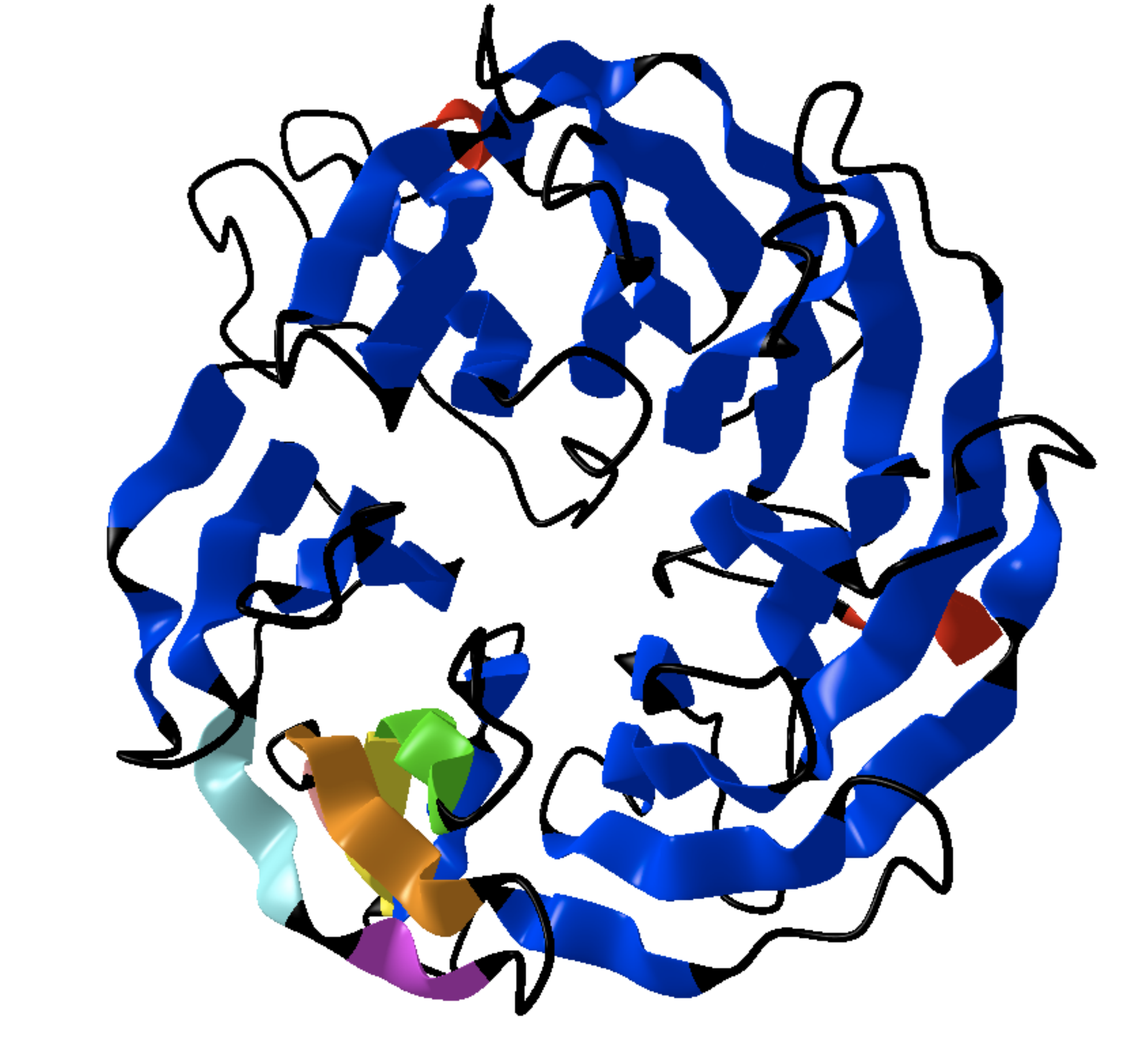}
\label{fig:subfig1}
}
\subfigure[The most complicated propeller blades have an interleave of 2. 
Detail of one blade from the structure above, with individual $\beta$-strands 
labeled a through g in sequential order. 
The interleave values are as follows: (a,c): 2; 
(b,c): 1; (d,e): 1; (f,c): 3; (f,g): 1.]{
\includegraphics[width=3.2in]{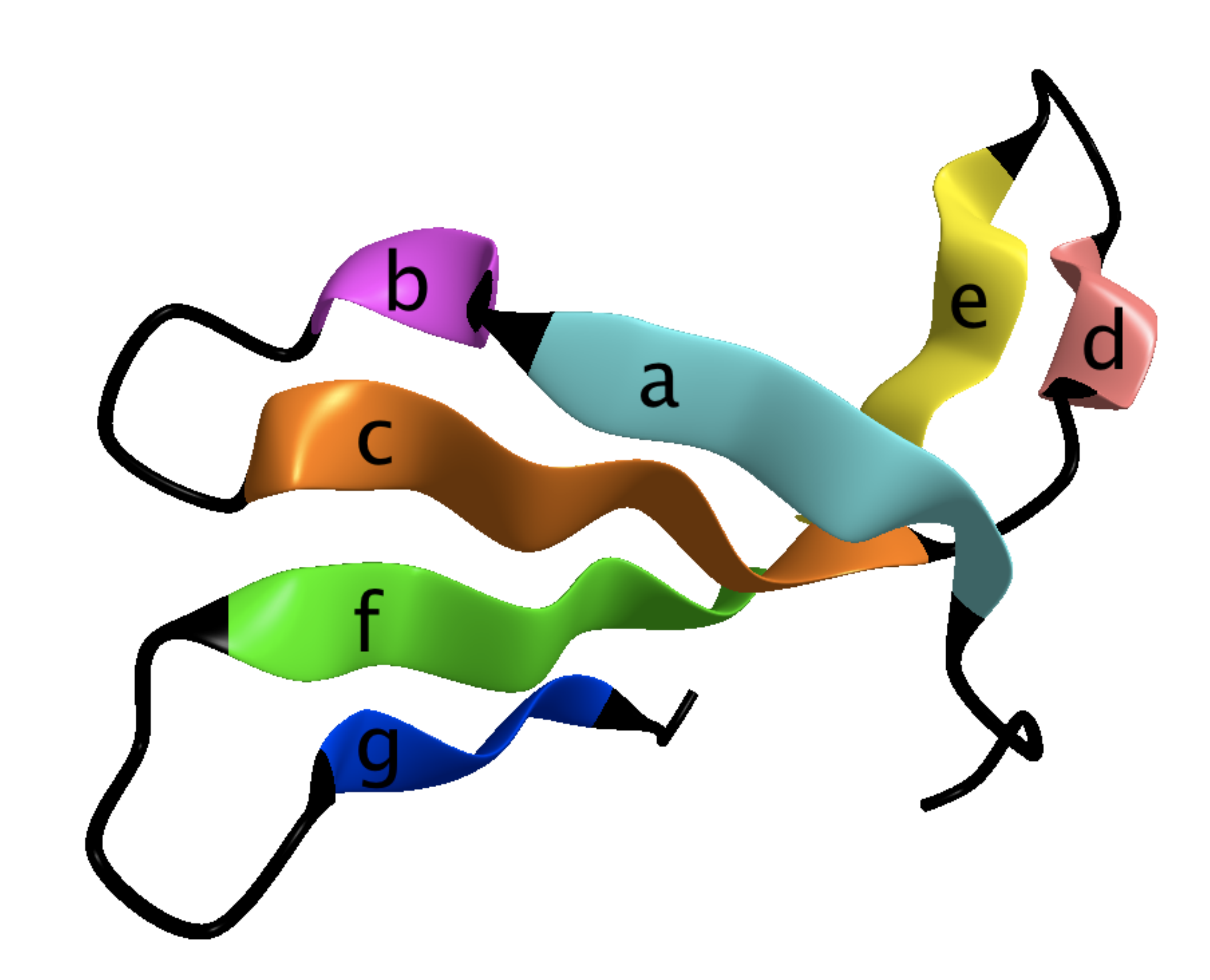}
\label{fig:subfig2}
}
\caption[$\beta$-propeller detail]
{A 7-bladed $\beta$-propeller, ``Quinohemoprotein 
amine dehydrogenase'' B chain from {\em Paracoccus denitrificans.} }
\label{propellers}
\end{figure*}

SMURFLite with interleave threshold 2 and simulated evolution performs
well on all propeller folds, with AUCs between 0.89 and 0.99. It
always performs better than HMMER, and better than RAPTOR and HHPred
except on the 7-bladed propellers (of which there are 39 non-redundant solved 
structures in 19 SCOP families), where HHPred achieves an AUC of
0.99 and RAPTOR achieves an AUC of 0.95 versus an AUC of 0.93 for
SMURFLite with interleave threshold 2 and no simulated evolution (see 
Table~\ref{propellertable}). Interestingly, on
the 5-bladed propellers (of which there are only 14 non-redundant solved 
structures in 7 SCOP families), adding simulated evolution seems to greatly
improve performance; even SMURFLite with an interleave threshold of 2
with simulated evolution outperforms full-fledged SMURF. While these results 
focus on the accuracy of the MRF score for the remote homolog decision problem, 
as opposed to the question of alignment quality, we note that SMURFLite with an 
interleave threshold of 1 or 2 produces highly similar alignments to full 
SMURF, particularly with respect to placing the ``blades'' of the 6-, 7-, and 
8-bladed propellers.

For all 11 $\beta$-barrel superfamilies, there is a maximum interleave
number that ranges from 4 (as in the ``Sm-like ribonucleoproteins'')
to 8 (as in the ``Cyclophilin-like'' superfamily). We find that for 6
of the 11 $\beta$-barrel superfamilies, SMURFLite with an interleave threshold 
of 2 and simulated evolution outperforms HMMER, RAPTOR, and HHPred. 
For two of the remaining superfamilies, HMMER performs best, for two of the
remaining superfamilies, RAPTOR performs best, and for one superfamily,
HHPred performs best (see Table~\ref{barreltable}).

As discussed above, SMURFLite begins to test the limits of
computational tractability when interleave numbers of 4 are
allowed. Since many barrel structures had $\beta$-strand pairs with
interleaves of 4, we wished to test if incorporating these more
long-range pairwise dependencies into our MRF would improve
performance. Some barrel superfamilies on which we tested have only
strand pairs of interleave 1 or 2, excepting a pair of $\beta$-strands
that close the barrel and thus have an interleave equivalent to the
number of strands in the barrel. Certainly, including that last strand
is beyond the computational power of SMURFLite. Other barrels, whether
open or closed, have more complex strand topology and interleaves of 3
or 4 are common even in the middle of the barrels. We chose to run
SMURFLite with an interleave of 4 on one of the barrel superfamilies
of moderately complex topology, the ``Barwin-like endoglucanase''
superfamily, of which an example appears in
Figure~\ref{barwin_barrel}. The ``Barwin-like endoglucanase''
superfamily contains ``Barwin,'' a protein that may be involved in a
common defense mechanism in plants (\cite{Svensson:1992}).

On the ``Barwin-like endoglucanase'' superfamily, we find an enormous
improvement in performance from capturing that last strand pair, with
AUC improving from 0.63 for SMURFLite with an interleave threshold of
2 and simulated evolution, to 0.94 for SMURFLite with an interleave
threshold of 4 and simulated evolution (see
Figure~\ref{barwinplot}). Note that both HMMER and RAPTOR fail
entirely on this superfamily, achieving an AUC of less than 0.5.

\rowcolors{2}{gray!25}{white}

\newcolumntype{H}[1] {%
>{\raggedright}%
p{#1}}

\begin{small}

\begin{center}
\begin{table*}[!t]
\caption{AUC on $\beta$-Propeller folds\label{propellertable}}
{\begin{tabular*}{\textwidth}{@{\extracolsep{\fill}}H{1.5cm}p{1.1cm}p{1.3cm}p{1.0cm}p{0.7cm}p{0.7cm}p{0.7cm}p{0.7cm}p{0.7cm}p{0.7cm}}\hline
 & HMMER & RAPTOR & HHPred & SL1 & SL1E & SL2 & SL2E & SL3 & SL3E\\
 \hline
5-bladed & - & - & - & 0.75 & {\bf 0.89} & 0.73 & {\bf 0.89} & 0.73 & {\bf 0.89}\\
6-bladed & 0.82 & 0.82 & 0.88 & 0.92 & 0.93 & {\bf 0.96} & 0.95 & {\bf 0.96} & {\bf 0.96}\\
7-bladed & 0.89 & 0.95 & {\bf 0.99} & 0.92 & 0.91 & 0.93 & 0.91 & 0.93 & 0.91\\
8-bladed & - & 0.64 & {\bf 0.99} & {\bf 0.99} & {\bf 0.99} & {\bf 0.99} & {\bf 0.99} & {\bf 0.99} & {\bf 0.99}\\
\hline
\end{tabular*}}\\{Note: for SMURFLite, the number (1,2,3) indicates the 
interleave threshold, and SEv is simulated evolution. A dash ('-') in a 
result entry indicates the method failed on these structures, i.e. an AUC of 
less than 0.6.
For issues of space, we abbreviate the SMURFLite entries. For example, SL1
indicates SMURFLite with an interleave threshold of 1, while SL3E indicates
SMURFLite with an interleave threshold of 3, augmented by simulated evolution.}
\end{table*}
\end{center}
\end{small}

\rowcolors{2}{gray!25}{white}

\begin{small}

\begin{center}
\begin{table*}[!t]
\caption{AUC on $\beta$-Barrel superfamilies\label{barreltable}}
\footnotesize{
\begin{tabular*} {\textwidth}{@{\extracolsep{\fill}}H{3.2cm}p{1.1cm}p{1.2cm}p{1.0cm}p{1.1cm}p{1.1cm}p{1.1cm}p{1.1cm}}\hline
 & HMMER & RAPTOR & HHPred & SMURF\-Lite~1 & SMURF\-Lite 1, SimEv & SMURF\-Lite~2 & SMURF\-Lite 2, SimEv\\
 \hline 
{\bf SMURFLite performs best}  & & & & & & \\
\hline
Translation proteins & - & - & 0.66 &{\bf 0.93} & 0.92 & {\bf 0.93} & {\bf 0.93}\\
Barwin-like endoglucanases & - & - & 0.75 & - & {\bf 0.77} & - & 0.63\\
Cyclophilin-like & 0.67 & 0.61 & 0.7 & 0.82 & {\bf 0.85} & 0.82 & 0.83\\
Sm-like ribonucleoproteins & 0.73 & 0.71 & 0.77 & 0.76 & 0.71 & 0.76 & {\bf 0.85}\\
Prokaryotic SH3-related domain & 0.81 & -  & - & {\bf 0.83} & 0.82 & {\bf 0.83} & {\bf 0.83}\\
Tudor/PWWP/MBT & 0.78 & 0.74 & 0.67 & 0.83 & 0.77 & {\bf 0.83} & 0.79\\
Nucleic acid-binding proteins & 0.75 & - & 0.67 & 0.76 & 0.89 & 0.76 & 0.92\\
 \hline
{\bf HHPred performs best} & & & & & & & \\
\hline
Translation proteins SH3-like & 0.83 & 0.81 & {\bf 0.86} & 0.62 & - & 0.62 & -\\
 \hline
{\bf RAPTOR performs best}  & & & & & & &\\
\hline
PDZ domain-like & 0.96 & {\bf 1.0} & 0.99 & 0.97 & 0.97 & 0.97 & 0.97\\
FMN-binding split barrel & 0.62 & {\bf 0.82} & 0.61 & - & - & - & -\\
 \hline
{\bf HMMER performs best}  & & & & & & &\\
\hline
Electron Transport accessory proteins & {\bf 0.84} & - & 0.77 & 0.63 & - & 0.63 & 0.66\\
\hline
\end{tabular*}}\\{Note: for SMURFLite, the number (1,2) indicates the interleave threshold, and SimEv is simulated evolution. A dash ('-') in a result entry indicates the method failed on these structures, i.e. an AUC of less than 0.6}
\end{table*}
\end{center}

\end{small}

\rowcolors{2}{white}{white}

\begin{figure}[h!tp]
\centerline{\includegraphics[width=9cm]{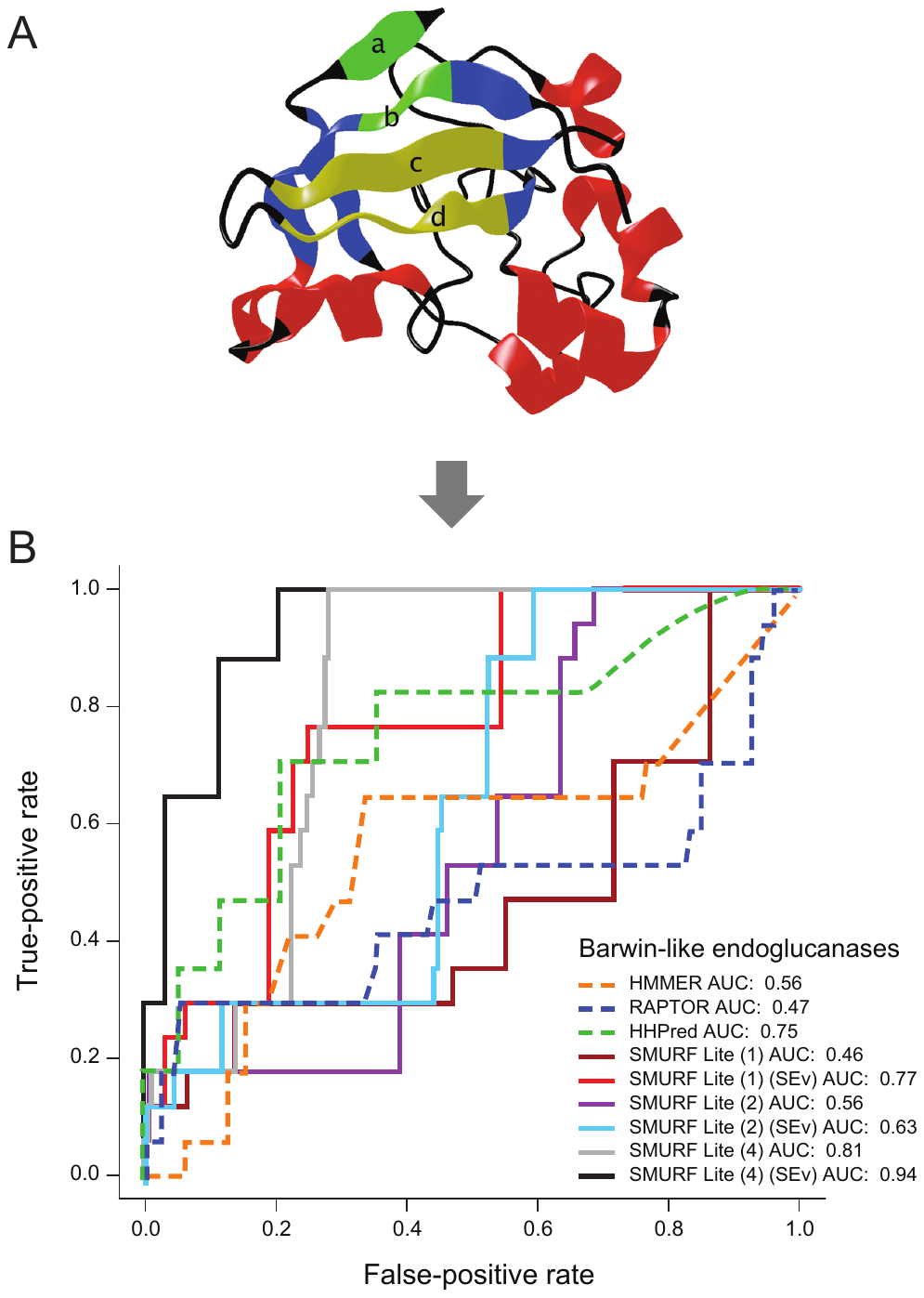}}
\caption[Performance of SMURFLite compared to other methods on the
``Barwin-like endoglucanases'' $\beta$-barrel superfamily according to the AUC
(Area Under Curve) measure.]{Performance of SMURFLite compared to other methods 
on the
``Barwin-like endoglucanases'' $\beta$-barrel superfamily according to the AUC
(Area Under Curve) measure. For SMURFLite, the number (1,2,4) indicates the
interleave threshold (indicating which strand pairs in the barrel participate
in the MRF; note that interleave 3 is omitted since it is identical to
interleave 2 for this fold), and SimEv indicates that simulated evolution was
also performed on the $\beta$-strands in the training data. As the interleave
threshold increases and the MRF becomes more powerful, performance tends to
improve. Including simulated evolution also improves
performance.}\label{barwinplot}
\end{figure}

\subsection{SMURFLite on Whole Genomes}

We considered all 1852 genes from the bacterium \textit{Thermotoga maritima}, a thermophilic organism that bears some similarity to Archaea and whose cell is wrapped in an outer membrane, or ``toga''~(\cite{Huber:1986ws}).
Out of 354 total superfamilies within the SCOP class ``All beta proteins'', 
288 (81\%) of which contain at least two protein chains, 207 superfamilies (71\%) were structurally consistent enough to be aligned using the Matt~(\cite{Menke:2008wu}) structural alignment program.  We built SMURFLite templates for these 207 superfamilies, and obtained from UniProt the protein sequences for each of 1852 genes.
We predict 139 of the 1852 genes from \textit{Thermotoga maritima} to
belong to one of the 207 $\beta$-structural SCOP superfamilies we
consider, with a $p$-value of less than 0.005. Of the 139 genes about which we
make predictions, 28 already have solved structures in the PDB,
however, since there is a substantial time lag before new PDB
structures are assigned to SCOP, only one of those structures was
already given a SCOP assignment (and thus only one of these 28
structures potentially informed SMURFLite training). Thus, determining
the correct SCOP assignments of the remaining 27 (an easy computational
problem given full structural information) allows us to estimate the
accuracy of SMURFLite predictions on these structures. Using the 
Matt~(\cite{Menke:2008wu}) structural
alignment program and the methodology of~(\cite{Daniels:2011dc}), we
computed SCOP superfamilies for all 27, and in 100\% of the cases,
SMURFLite's predictions matched the structural alignments and hence
SCOP superfamily assignments. We now survey the remaining 111
structures on which SMURFLite makes predictions, for which structural
information is not yet available. 8 of these 111 structures also had
their SCOP superfamilies predicted in the study of \cite{Zhang:2009ku}
and in all 8 cases, our predictions are in agreement with the other
study. We note that for most of these 111 structures, not only is there no
solved structure, but there is also no close homology to
proteins of solved structure. In particular, none have
BLAST hits in UniProt with solved structure and greater than 80\%
sequence identity, 18 have BLAST hits in UniProt with solved structure
and between 30\% and 80\% sequence identity, and 4 have BLAST hits in
UniProt with solved structure and less than 20\% sequence identity. As
an example, the gene Q9X087 shares only 20\% sequence identity with
its closest structurally-solved BLAST hit (Rhoptry protein from
\textit{Plasmodium yoelii yoelii}, which forms an $\alpha$-helical structure) but we predict it to belong in the
``beta-Galactosidase/glucuronidase domain'' SCOP superfamily with a
$p$-value of 0.0006.

All models predicted can be found at \url{http://smurf.cs.tufts.edu/smurflite/}

\section{Discussion}

We have presented SMURFLite, a method that combines long-range pairwise
$\beta$-strand interactions via a simplified Markov random field with simulated
evolution, a method that augments training data to capture pairwise
$\beta$-strand interactions as well. SMURFLite in most cases performs
considerably better than HMMER and RAPTOR;
however, we examine those structures for which this is not so. We postulate
that RAPTOR performs best in the case when there is significant structural
conservation across families, whereas HMMER excels when there is a small but
highly conserved sequence signature in members of a superfamily. In all four 
$\beta$-barrel superfamilies on which RAPTOR achieves an AUC of less than 0.5, 
we see considerable structural variation in the protein backbones within each 
superfamily, according to the metric discussed in
Chapter~\ref{chapter:c2_touring}, as compared to the other barrel 
superfamilies. In contrast, the barrels on which RAPTOR performed best 
exhibited little structural variation. 

The cases in which SMURFLite performs poorly exhibit an interesting property: 
the structural alignment of the protein chains used in the training set 
preserves few, or sometimes none, of the $\beta$-strands as ``consensus'' 
$\beta$-strands. 
When a significant number of $\beta$-strands are missing in this manner from 
the training data, SMURFLite exhibits poor specificity, scoring some 
non-homologous sequences comparably to homologous ones. 
The ``Translation Proteins SH3-Like Domain,'' a superfamily in which HMMER 
significantly outperforms SMURFLite, is one in which the consensus alignment 
obtained from Matt retains zero $\beta$-strands, even though each individual
structure has four strands. Thus, SMURFLite behaves like HMMER, except without
HMMER's heuristic for quickly failing bad alignments, leading SMURFLite to
report more false positives. 

The very premise of SMURFLite rests on the
conservation of $\beta$-strands, and this finding emphasizes the importance of
evolutionarily faithful structural alignments.
In future work, we will also consider alternative structural aligners, such as 
TMalign~(\cite{Zhang:2005do}), in cases where they produce alignments that 
better conserve secondary structure.

We also compared SMURFLite to HHPred, though in a sense this is not an 
entirely fair comparison, because HHPred uses \textit{all} of protein 
sequence space to build profiles for training; thus it can leverage a much 
larger training set than HMMER, RAPTOR, or SMURF or SMURFLite. Thus it is 
somewhat surprising that SMURFLite outperforms HHPred in median AUC on the 
propellers and barrels. We expect HHPred to excel in particular on 
superfamilies and folds with a high HHPred NEFF~(\cite{Soding:2005ff}), where 
NEFF is the ``number of effective families'' available for training the HHPred 
HMM.
NEFF is a measure of the information-theoretic entropy among a set of sequences;
the greater the sequence diversity of such a set, the greater the NEFF.

In contrast, simulated evolution seems to help SMURFLite most on those
structural motifs where the HHPred NEFF is lowest; i.e. it can generate diverse
training data when diverse training data is lacking. A profile version of
SMURFLite would be close in spirit to HHPred, and based on the previous
discussions we would expect profiles might improve performance; this will be a
subject for future investigation. We observed that simulated evolution either
improves or does not affect AUC for $\beta$-barrel superfamilies and
$\beta$-propeller folds with a HHPred NEFF of 20 or lower. The only cases in
which we observed simulated evolution decreasing AUC were those cases where the
NEFF was greater than 20.

While the intent of using simulated evolution in conjunction with simplified
MRFs is to compensate for the removal of highly-interleaved $\beta$-strand
pairs required for computational feasibility, we find that simulated
evolution can still improve full-fledged SMURF in cases of sparse training
data. For instance, the 5-bladed $\beta$-propellers have only three 
superfamilies in SCOP, two of which contain only one family.  We find that for 
the 5-bladed $\beta$-propeller fold, combining SMURF and simulated evolution 
improves AUC from 0.73 for full SMURF alone to 0.89. 

It is worth noting that simulated evolution on a simple \emph{pointwise}
basis, as implemented by Kumar and Cowen~\cite{Kumar:2009tp}, could likely be
incorporated into the hidden Markov itself as a set of Dirichlet mixture priors.
However, it is not clear how the \emph{pairwise} model could be incorporated.
In addition, we determine $\beta$-strand paired residues on the \emph{full}
Markov random field, before removing any pairing information.
Thus, in this case, simulated evolution may be mitigating the loss of this
$\beta$-strand pairing information.

We have demonstrated that SMURFLite is a powerful MRF methodology for 
$\beta$-structural motif recognition that is computationally tractable enough 
to scale to whole genomes, requiring approximately three hours to scan the 
\textit{Thermotoga maritima} genome on a small compute cluster. 
We have also shown that increasing the interleave number for SMURFLite can have 
dramatic effects on performance, but at a great computational cost. Methods 
that allow us to retain all $\beta$-strand pairs, such as loopy belief 
propagation\cite{Pearl:1988wz} or stochastic search, merit investigation. 
As our dependency graph 
is not a tree, loopy belief propagation may present difficulties with 
convergence and inexact inference. 
Nonetheless, looking at heuristic methods (\cite{Smyth:1997ty, Weiss:1999uu}) 
that approximately compute the SMURF score more efficiently may add even more 
power to our approach in practice.

\chapter{Protein Remote Homology Detection Using Markov Random Fields and
Stochastic Search}

\label{chapter:c4_mrfy}

\section{Introduction}

In Chapter~\ref{chapter:c3_smurflite}, we explored a method for simplifying the
computational complexity of the Markov random field, as well as an approach,
called ``simulated evolution,'' for mitigating the loss in accuracy resulting
from this simplification.
We showed that this approach, called SMURFLite, outperformed several existing
methods at remote homology detection in $\beta$-barrels and propellers.

In this work, we demonstrate another approach for computing the SMURF energy 
function for remote homology detection.
Building upon the $\beta$-structural Markov random field templates from SMURF 
and
SMURFLite, we demonstrate a method for remote homology detection that does not
discard any $\beta$-structural information, and yet remains computationally 
tractable on any protein structure.

We have developed MRFy, an algorithm that relies on stochastic search to find
a near-optimal parse of a query sequence onto the SMURF Markov random field.
We also provide an implementation of MRFy, written in the Haskell functional
programming language; this implementation is discussed in our ``experience 
report'' on computational biology software in 
Haskell~\cite{Daniels:2012cm}, which is not part of this dissertation.

We test MRFy on the same set of barrel folds in the mainly-$\beta$
class of the SCOP hierarchy as was used to test SMURFLite in 
Chapter~\ref{chapter:c3_smurflite}, in stringent cross-validation
experiments. 
We show a mean 0.4\% (median 1.7\%) improvement in Area Under 
Curve (AUC) for
$\beta$-structural motif recognition as compared to the SMURFLite results in
Chapter~\ref{chapter:c3_smurflite} and~\cite{Daniels:2012dg}.
By these same benchmarks, we show a mean 5.5\% (median 16\%) improvement over
HMMER (\cite{Eddy:1998ut}) (a popular HMM method), a mean 29\% (median
16\%) improvement as compared to RAPTOR (\cite{Xu:2003p3417}) (a
well-known threading method), and a mean 13\% (median 14\%) improvement in AUC 
over HHPred (\cite{Soding:2005ff}) (a profile-profile HMM method).

\section{Methods}

\subsection{Markov random field model}

MRFy builds on the SMURF and SMURFLite Markov random field model, as 
discussed in Chapter~\ref{chapter:c3_smurflite}, which uses multidimensional 
dynamic programming to simultaneously capture both standard HMM models and the
pairwise interactions between amino acid residues bonded together in
$\beta$-sheets. 

In particular, the ``Plan7'' hidden Markov model is modified to represent
hydrogen-bonded $\beta$-strands with additional, non-local edges.
Because the $\beta$-strands in a SMURF or MRFy template represent 
\emph{consensus} 
$\beta$-strands, those present in at least some fraction (in our experiments, 
at least
half) of the sequences participating in the training alignment, we prohibit
insertions and deletions in those strands.
Thus, we collapse those nodes of the ``Plan7'' model to be just match states;
the transitions to insertion and deletion states are removed.
Figure~\ref{mrfy_model} illustrates this architecture.

\begin{figure}[htb!]
\begin{center}
  \fbox{\includegraphics[width=5in]{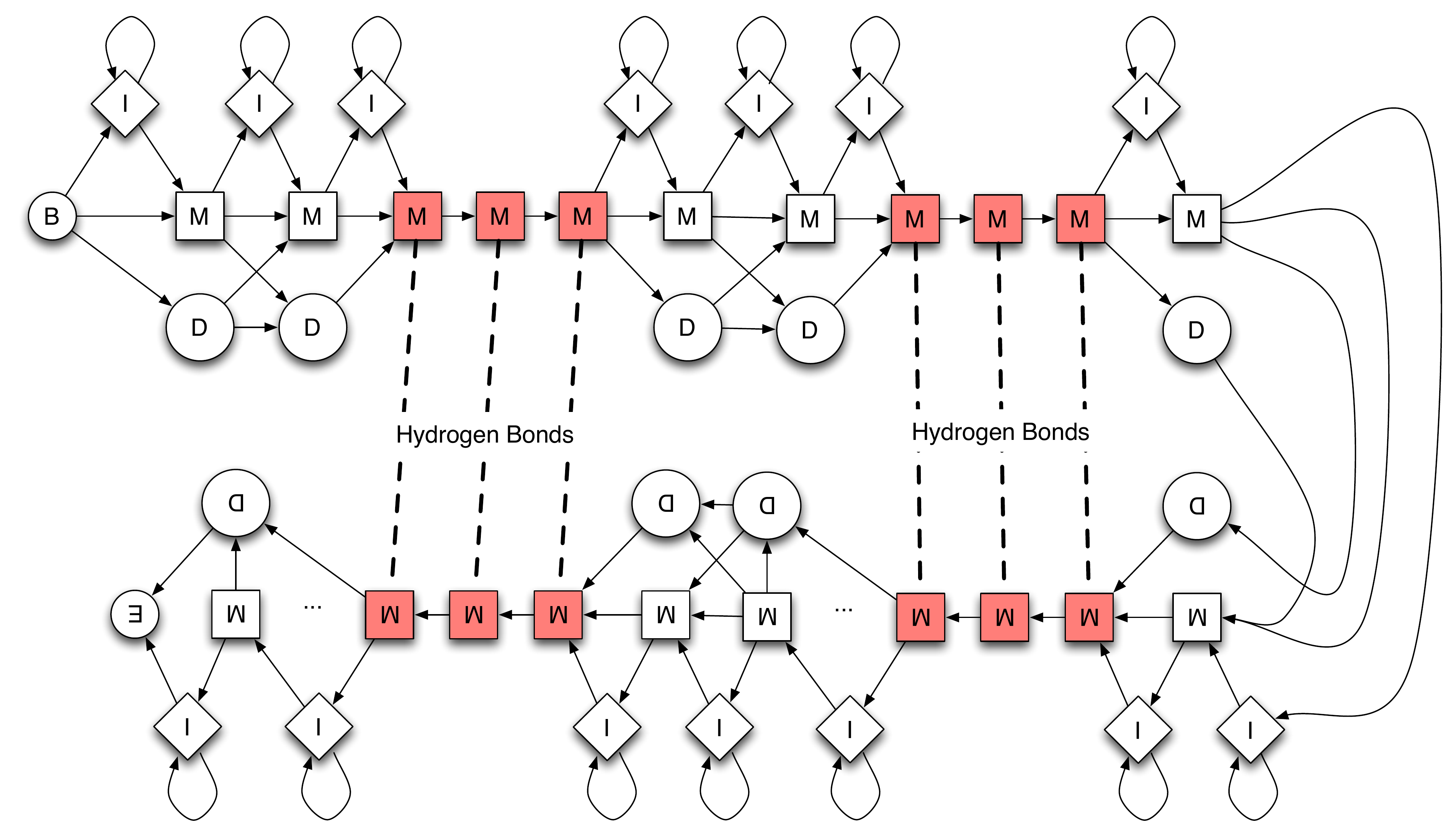}}
   \caption{A Markov random field with two $\beta$-strand pairs}
   \label{mrfy_model}
 \end{center}
\end{figure}

Recall from~\ref{chapter:introduction} that the standard form of the Viterbi
recurrence relations for computing the most likely path of a sequence through
a hidden Markov model is:

\def\maxiquad{\hskip 1.2em\relax}
\begin{equation}
\begin{array}{@{}l@{}c@{}l}

V_{j}^{M}(i) &{}={}& \frac{e_{M_{j}}(x_{i})}{q_{x_{i}}} \times \max \left\{
  \begin{array}{l@{}c@{}l}
  V_{j-1}^{M}(i - 1) \times a_{M_{j-1}M_{j}}\\
  V_{j-1}^{I}(i - 1) \times a_{I_{j-1}M_{j}}\\
  V_{j-1}^{D}(i - 1) \times a_{D_{j-1}M_{j}}\\
  \end{array} \right.\\\\
V_{j}^{I}(i) &=& \frac{e_{I_{j}}(x_{i})}{q_{x_{i}}} \times \max \left\{
  \begin{array}{l@{}c@{}l}
  V_{j}^{M}(i - 1) \times a_{M_{j}I_{j}}\\
  V_{j}^{I}(i - 1) \times a_{I_{j}I_{j}}\\
  \end{array} \right.\\\\
V_{j}^{D}(i) &=& \max \left\{
  \begin{array}{l@{}c@{}l}
  V_{j-1}^{M}(i) \times a_{M_{j-1}D_{j}}\\
  V_{j-1}^{D}(i) \times a_{D_{j-1}D_{j}}\\
  \end{array} \right.\\

\end{array}
\end{equation}

In the SMURF or MRFy Markov random field model, we add non-local interactions 
to these
probabilities, resulting in conditional probabilities.
When column~$j$ of an alignment is part of a $\beta$-strand and is paired
with another column  $\pairedwith j$,
the probability of finding amino acid~$x_i$ in column~$j$ 
depends on whatever amino acid~$x'$  is in column~${\pairedwith i}$.
If~$x'$ is in position~$i'$ in the query sequence, Viterbi's
equations are altered; for example,
$V_{j}^{\prime M}(i)$ depends not only on
$V_{j-1}^{\prime M}(i-1)$ but also on
$V_{\pairedwith j}^{\prime M}(i')$.
The distance between $j$~and~$\pairedwith j$ can be as small as a few
columns or as large as a few hundreds of columns.
Because $V_j^{\prime M}(i)$~depends not only on nearby values but also on
$V_{\pairedwith j}^{\prime M}(i')$,
we must modify the Viterbi recurrence relations.

Note that hydrogen-bonded $\beta$-strand residues may only occupy match states 
in the Markov random field, so only the corresponding terms of the recurrence
relation need be modified.
The revised Viterbi recurrence relation for the Markov random field is:

\begin{equation}
\begin{array}{@{}l@{}c@{}l}

V_{j}^{M}(i) &{}={}& \frac{e_{M_{j}}(x_{i})}{q_{x_{i}}} \times \max \left\{
  \begin{array}{l@{}c@{}l}
  V_{j-1}^{M}(i - 1) \times a_{M_{j-1}M_{j}} \times P(x_{i}|x_{\pi j})\\
  V_{j-1}^{I}(i - 1) \times a_{I_{j-1}M_{j}} \times P(x_{i}|x_{\pi j})\\
  V_{j-1}^{D}(i - 1) \times a_{D_{j-1}M_{j}} \times P(x_{i}|x_{\pi j})\\
  \end{array} \right.\\[\goo]

\end{array}
\end{equation}

where $x_{\pi j}$ represents the amino acid in column $\pi j$, which is 
hydrogen-bonded to the amino acid $x_{i}$ in column $j$.

For reasons of convenience, as well as avoiding floating-point underflow due to
exceedingly small numbers, we typically work in negative log space.
Since a probability can range from 0 to 1, the log of a probability must be a
negative number, and thus the negative log of that probability is a (small)
positive number.
Each probability is transformed into its negative log, resulting in the final
form:

\begin{equation}\label{viterbi-final}
\begin{array}{l@{}c@{}l@{}}
V_{j}^{\prime M}(i) &=& e^{\prime}_{M_{j}}(x_{i}) + \min \left\{
  \begin{array}{l@{}c@{}l}
  \vsum{V_{j-1}^{\prime M}(i-1)} {a^{\prime}_{M_{j-1}M_{j}}} + P'(x_{i}|x_{\pi j})\\
  \vsum{V_{j-1}^{\prime I}(i-1)} {a^{\prime}_{I_{j-1}M_{j}}} + P'(x_{i}|x_{\pi j})\\
  \vsum{V_{j-1}^{\prime D}(i-1)} {a^{\prime}_{D_{j-1}M_{j}}} + P'(x_{i}|x_{\pi j})\\
  \end{array} \right.\\\\
V_{j}^{\prime I}(i) &=& e^{\prime}_{I_{j}}(x_{i}) + \min \left\{
  \begin{array}{l@{}c@{}l}
  \vsum{V_{j}^{\prime M}(i - 1)} {a^{\prime}_{M_{j}I_{j}}}\\
  \vsum{V_{j}^{\prime I}(i - 1)} {a^{\prime}_{I_{j}I_{j}}}\\
  \end{array} \right.\\\\
V_{j}^{\prime D}(i) &=& \min \left\{
  \begin{array}{l@{}c@{}l}
  \vsum{V_{j-1}^{\prime M}(i)} {a^{\prime}_{M_{j-1}D_{j}}}\\
  \vsum{V_{j-1}^{\prime D}(i)} {a^{\prime}_{D_{j-1}D_{j}}}\\
  \end{array} \right.\\
\end{array}
\end{equation}

given the transformations:

\begin{equation}
  \begin{array}{l@{}c@{}l@{}}
  a'_{s \hat{s}} = - \log a_{s \hat{s}} \\
  e^{\prime}_{s}(x) = - \log\frac{e_{s}(x)}{q_{x}}\\
  V_j^{\prime M}(i) = - \log V_j^{M}(i)\\
  P^{\prime}(x_{i}|x_{\pi j}) = - \log P(x_{i}|x_{\pi j})
  \end{array}\\
\end{equation}

This is exactly the recurrence relation that SMURF~\cite{Menke:2010ti} and 
SMURFLite~\cite{Daniels:2012dg} solve using multidimensional dynamic 
programming.
As demonstrated in Chapter~\ref{chapter:c3_smurflite}, as the \emph{interleave} 
of the $\beta$-strands increases, the computational complexity grows 
exponentially.

As an alternative to solving these more complex recurrence relations, we might
consider a divide-and-conquer approach.
Each $\beta$-strand can be thought of as breaking the larger model into two 
smaller
models; collectively, all the $\beta$-strands divide the Markov random field 
into many small, \emph{independent} hidden Markov models.
Thus, for any particular path through the Markov random field, corresponding to
a particular placement of query sequence residues onto the nodes of the model,
we could compute the augmented Viterbi score by summing the Viterbi scores of
each smaller hidden Markov model, along with the contribution to the Viterbi
score from the $\beta$-strands.

Since only match states are allowed for $\beta$-strand residues, the 
contribution of each such residue is only:

\begin{equation}
  \begin{array}{@{}l@{}c@{}l}
  V_{j}^{\prime M}(i) = e^{\prime}_{M_{j}}(x_{i}) + 
  \vsum{V_{j-1}^{\prime M}(i - 1)} {a^{\prime}_{M_{j-1}M_{j}}} + P(x_{i}|x_{\pi j})
  \end{array}
\end{equation}

The asymptotic complexity of the Viterbi algorithm is $O(mn)$, where $m$ is the
length of the model and $n$ is the length of the query sequence.
Furthermore, the asymptotic complexity of the beta-strand contribution to the
Viterbi score for a particular placement of residues is just $O(b)$, where $b$
is the combined length of the $\beta$-strands.

Thus, a new algorithm for computing the optimal path through a Markov random 
field for a given query sequence presents itself.
Since we require that every $\beta$-strand position be occupied by a residue 
(as we force those positions into match states), we could simply consider every
possible assignment of a residue to a $\beta$-strand, computing the score for 
each one, and choose the best-scoring placement.

Metaphorically, we can picture the residues of the query sequence as beads,
and the Markov random field as the string of a necklace.
The $\beta$-strands can be thought of as particular substrings of the string 
that
must be covered by beads, while non-$\beta$ regions may be exposed (resulting in
delete states in the model).
To continue the metaphor, we may force extra beads onto non-$\beta$ regions of 
the string, resulting in insert states in the model.
Given that the beads already have a specified order, we must consider all the
ways to slide the beads up and down the string such that all of the 
$\beta$-regions
are covered.
Since the regions between $\beta$-strands can have their contribution to the
score computed according to the Viterbi recurrence relations, we need only
consider all the unique ways to assign residues to the $\beta$-strand nodes.

\subsection{Proof that the model is exponential in complexity}\label{mrfy-proof}

Here, we prove that there are an exponential number of possible $\beta$-strand
placements that must be considered.

\begin{definition}
Let a Markov random field model $(N,B)$ be defined as a sequence $N$ of nodes 
$n_{i}, i \in (1..m)$, and a sequence $B$ of $\beta$-strands 
$b_{i}, i \in (1..k)$.
Each $\beta$-strand has length $l_{i}$, and contains a subsequence of the nodes
$N$.
This subsequence is determined by the
specifics of the model, which can be referred to as $b_{ij}, i \in (1..m), j \in
(1..l_{i})$.
Let a query sequence be defined as a sequence $R$ of residues
$r_{i}, i \in (1..n)$.
\end{definition}

\begin{definition}
  Let $L = \displaystyle \sum \limits_{i, i <= k} l_{i}$.
\end{definition}

\begin{lemma}
Given a model $(N,B)$ and a query sequence $R$, 
$\displaystyle L$ residues are placed in 
$\beta$-strands.
\end{lemma}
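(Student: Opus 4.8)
The plan is to prove this by a direct counting argument resting on two structural facts already built into the model. The first is the defining constraint of the SMURF/MRFy random field, recalled in the description of the MRFy architecture (Figure~\ref{mrfy_model}): every node lying in a $\beta$-strand is collapsed to a match state, with its transitions into insertion and deletion states removed. The second is the semantics of a match state established in Chapter~\ref{chapter:introduction}: a match state consumes exactly one residue of the query sequence, a deletion state consumes none, and (by the collapse just mentioned) an insertion state is unavailable inside a strand.

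First I would fix an arbitrary strand $b_i$ and analyze how any valid parse of $R$ must behave on its $l_i$ nodes. Because deletions are forbidden on strand nodes, no node of $b_i$ may be skipped, so all $l_i$ of them are necessarily occupied by match states; because insertions are forbidden on strand nodes, no extra residues can be threaded between them. Invoking the match-state semantics, each of these $l_i$ match states consumes exactly one residue of $R$, so the strand $b_i$ accounts for precisely $l_i$ residues of the query sequence --- no fewer (there is no deletion route available to skip a forced position) and no more (there is no insertion route available to add one).

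To conclude I would sum this per-strand count over the $k$ strands $b_1, \dots, b_k$. The only point needing care is that no residue is counted twice: since each $b_i$ is specified as a subsequence of the nodes $N$ and the strands occupy pairwise-disjoint sets of nodes (the hydrogen-bond pairing relates columns of distinct strands rather than identifying their nodes), the residues assigned to different strands are distinct. Hence the total number of residues placed in $\beta$-strands equals $\sum_{i=1}^{k} l_i$, which is $L$ by definition.

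I expect the counting itself to be routine; the only genuine obstacle is stating with precision that the ``match-only'' collapse removes both the deletion route (which would under-count by skipping a forced position) and the insertion route (which would over-count), so that the per-strand tally is \emph{exactly} $l_i$ rather than merely a bound. Establishing this exact equality for a single strand is the crux, after which disjointness and summation finish the argument immediately.
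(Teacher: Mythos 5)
Your proof is correct and follows essentially the same route as the paper's: a direct per-strand count showing each $b_i$ consumes exactly $l_i$ residues (because strand nodes are match-only states), followed by summation over the $k$ strands to obtain $L$. The paper's own proof is terser---it simply asserts that each strand must be populated by exactly $l_i$ residues and that positions $b_{ij}$ for $j>1$ are then determined by $R$---whereas you spell out the justification (no deletion route, no insertion route, one residue per match state) and the disjointness needed for the final sum, but the underlying argument is the same.
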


\begin{proof}
Because each $\beta$-strand $b_{i}$ must be populated by exactly $l_{i}$ 
residues,
$\forall j, j > 1$, $b_{ij}$ is uniquely determined by the sequence $R$.
For each $\beta$-strand position $b_{ij}$, one residue is placed.
Thus, $\displaystyle \sum \limits_{i, i<=k} l_{i}$ residues are placed in 
$\beta$-strands.
\end{proof}

\begin{theorem}
For a Markov random field $(N,B)$ with $k$ $\beta$-strands $b_{i}$, each of 
length $l_{i}$, and thus containing positions for residues $b_{ij}$ and a query 
sequence $r{i}$ of length $n$, there are $O(n^{k})$ ways
to assign residues to the $\beta$-strands.
\end{theorem}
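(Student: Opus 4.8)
The plan is to show that each of the $k$ $\beta$-strands contributes only a single independent degree of freedom to the placement, so that the total number of assignments is bounded by a product of $k$ factors, each at most $n$. First I would invoke the preceding Lemma, which establishes that once the first residue $b_{i1}$ of a $\beta$-strand $b_i$ is fixed, every subsequent position $b_{ij}$ for $j > 1$ is uniquely determined by the query sequence $R$: because the strand consists of $l_i$ consecutive match states that consume $l_i$ consecutive residues in their given order, knowing where the strand begins fixes the entire strand. Consequently, an assignment of residues to a single strand $b_i$ is specified completely by the choice of the index in $R$ at which $b_{i1}$ is placed.

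Next I would observe that this starting index can take at most $n$ distinct values, since there are only $n$ residues $r_1, \ldots, r_n$ in the query sequence. An assignment of residues to all $k$ strands simultaneously is therefore encoded by a $k$-tuple $(c_1, \ldots, c_k)$, where $c_i \in \{1, \ldots, n\}$ is the residue index placed at $b_{i1}$. The number of such tuples is at most $n^k$, which yields the claimed $O(n^k)$ bound.

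The one point requiring care is that not every $k$-tuple corresponds to a legal placement: the strands must respect the model's order, must not overlap, and must leave enough residues to cover the intervening non-$\beta$ regions. However, these additional constraints can only \emph{remove} tuples from consideration, never add them, so the count of valid placements stays bounded above by $n^k$; the crude product bound therefore suffices and I would not need to enumerate the legal configurations precisely. I expect the main work to be a conceptual step rather than an obstacle, namely the reduction from the $L$ naive degrees of freedom (one residue per strand position) down to the $k$ genuine ones, which is exactly what the Lemma licenses. Once that collapse is stated, the counting is immediate. To underscore the section's claim that the complexity is genuinely exponential, I would additionally remark that in configurations with ample slack between strands the number of legal tuples is on the order of $\binom{n}{k}$, so the $n^k$ upper bound is essentially tight in the exponent $k$.
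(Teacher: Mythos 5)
Your proof is correct, but it runs in the opposite counting direction from the paper's. You establish the literal statement---an \emph{upper} bound---by invoking the Lemma to collapse the $L$ per-position degrees of freedom down to $k$ starting indices, injecting every placement into a $k$-tuple in $\{1,\ldots,n\}^k$, and noting that the ordering/non-overlap constraints can only shrink the count below $n^k$. The paper instead works toward a \emph{lower} bound: it counts the legal choices remaining after each strand is placed (roughly $n - L - i\cdot l_{max}$ options for the $(i{+}1)$-st strand, after uniformizing all strand lengths to $l_{max}$, which it notes only undercounts), and multiplies these factors to argue the number of placements still grows like a power of $n$ whose exponent scales with $k$. That direction is what the section actually needs---its title is ``Proof that the model is exponential in complexity,'' and the point is that exhaustive enumeration of placements is intractable, which an upper bound alone does not show. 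Your closing remark that the number of legal tuples is on the order of $\binom{n}{k} = \Theta(n^k)$ for fixed $k$ supplies exactly this missing lower-bound content, and does so more cleanly than the paper's product formula (whose stated equality $\prod_{i}(n - L - i\, l_{max}) = (n - 2L - k\, l_{max})^{\lceil k/2 \rceil}$ is at best a loose pairing bound, and whose conclusion conflates ``dominated by $n^k$'' with matching $n^k$). So: same key structural reduction (placement determined by the $k$ start indices), but your proposal cleanly separates the upper bound from the tightness claim, whereas the paper fuses them into a single, somewhat muddier counting argument.
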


\begin{proof}
From the $n$ residues in the query sequence $R$, we need to place $L$ residues
across all $B$ $\beta$-strands.
We represent this as choosing  
an index $i \in (1..n)$ for the first position $b_{i1}$ of each $\beta$-strand.
Since each $\beta$-strand $b_{i}$ consumes $l_{i}$ residues, this choice
for the first $\beta$-strand, $b_{11}$, leaves $n - L - l_{i}$ possible 
placements for $b_{21}$.
In practice, $\beta$-strands range from two to twelve residues, so to simplify 
counting, we assume each $l_{i}$ is simply a
maximum length $l_{max}$.
This only decreases the number of possible assignments, yielding a lower bound
on the number of placements.
Then choosing an index to place on $b_{i1}$, in general, leaves $n - L - 
(i \times l_{max})$ choices for $b_{(i+1)1}$.
Thus, there are:
\begin{equation}
  \prod_{i \in (1..k)}{n - L - (i \times l_{max})} = 
  (n - 2L - k\times(l_{max}))^{\lceil \frac{k}{2} \rceil}
\end{equation}
possible placements of $R$ onto $(N,B)$.
Asymptotically, as $n$ grows, this is dominated by $n^k$, leading to an
asymptotic complexity of $O(n^k)$.
\qed
\end{proof}

A typical Markov random field might have 10 or 20 $\beta$-strands, and a typical
protein query sequence might have between 300 and 600 residues.
Thus, if we wish to consider all possible paths through a Markov random field
for a protein sequence, we must consider as many as 
$600^{10} \approx 6 \times 10^{27}$ possible paths through the model.
Clearly, this computation can be broken into many parallel parts, but this still
poses an intractable problem in many cases.

\subsection{Stochastic search}

Since an exhaustive search for an optimal alignment of a protein sequence to a
Markov random field is exponential in complexity, we turn to stochastic search
to mitigate this complexity.

Stochastic search encompasses a family of approaches for finding optimal or
near-optimal solutions to optimization problems.
Stochastic search approaches are promising when a search space is large, so that
exhaustive search is prohibitive, and when an optimization problem does not lend
itself to analytic solutions.
The generic form of stochastic search is that a solution is guessed at and 
evaluated, and then subsequent guesses are made as refinements to this initial
guess, until some termination condition is met.
The function used for evaluation is called the \emph{objective function}.

Framed as an optimization problem, MRFy, like SMURF, seeks to minimize the 
augmented Viterbi
score (see Equation~\ref{viterbi-final}), which equates to maximizing 
probability (recall that this score is the negative log of a probability).
SMURF finds this minimum exactly, using multi-dimensional dynamic programming,
which is exponential in the interleave number of beta strands (see 
Chapter~\ref{chapter:c3_smurflite}).
MRFy, in contrast, uses stochastic search, as described next.

Given a placement of query-sequence residues into $\beta$-strand nodes of the
Markov random field, the score can be computed exactly.
Thus, the search space is the set of all possible ways to place residues on
these nodes, as discussed in Section~\ref{mrfy-proof}.
Many stochastic search techniques rely on a gradient ascent (or descent)
approach, which makes moves (or refines guesses) along the steepest gradient,
leading quickly to local optima; various heuristics such as simulated 
annealing~\cite{Kirkpatrick:1983wa} can then help avoid getting stuck in poor
local optima.

However, we know of no way to compute a gradient on the search space of
$\beta$-strand placements, and so we must take approaches that do not rely
on this gradient.
Instead, we must rely on a random-mutation model of search, which generates one
or more candidate solutions (guesses) from a previous solution, and then
evaluates the cost function (in our case, the augmented Viterbi score) to
determine whether those guesses are better or worse than the previous step.
This can be likened to climbing a hill in the dark, feeling one's way with one 
foot before committing to a step.
This approach is referred to as \emph{random-mutation hill
 climbing}\cite{Davis:1991vx}.

In our representation, a particular solution is represented by an ordered list
of integers, one integer per $\beta$-strand in the Markov random field.
The value of each integer indicates the index, in the query sequence, of the
residue assigned to the first position of that $\beta$-strand.
Since the alignments to the regions of the Markov random field are solved
exactly by the Viterbi algorithm, this ordered list of integers uniquely
represents a solution to a Markov random field.

While the picture we have presented for our Markov random field model is most
precisely explained by assigning residue indices to the positions of 
$\beta$-strands, it may be more intuitive to consider the equivalent problem
of ``sliding'' these $\beta$-strands along the query sequence.
We will use this analogy in the following description of initial guesses.

We explored three models for generating initial guesses for our search
techniques:

\begin{itemize}
  \item \emph{Random-placement model}.
  First, we implemented a model that uniformly positions the $\beta$-strands
  along the query sequence, under the constraint that only legal placements
  may be generated, and thus the placement of any $\beta$-strand must leave 
  room for all the other $\beta$-strands in the model.
  
  \item \emph{Placement based on secondary structure prediction}.
  Next, we implemented a model that uses the PSIPRED~\cite{McGuffin:2000wx} secondary-structure
  prediction program to determine the positions of $\beta$-strands.
  Given a PSIPRED prediction for the secondary structure of a query sequence,
  we place $\beta$-strands at the most likely locations according to this
  prediction profile, randomized by a small amount of noise.
  The difficulty with this approach is that, while PSIPRED is reasonably 
  accurate when it is allowed to perform PSI-BLAST\cite{Altschul:1997tl} 
  queries to build a sequence
  profile, this comes at a run-time cost that completely dominates the running
  time of MRFy.
  However, while non-profile-based PSIPRED predictions are computationally 
  cheap, they provide poor accuracy.
  
  \item \emph{Placement based on scaling the template}.
  Finally, we implemented a model based on the observation that true homologs
  to a structurally-derived template should have their $\beta$-strands in very
  roughly similar places, in sequence, to the proteins that made up that
  template.
  This will not always hold, but appears to provide for reasonable initial
  guesses.
  Given the position of each $\beta$-strand within a template Markov random 
  field, we scale the query sequence linearly (as it may be shorter or longer 
  than the model) and place the $\beta$-strands in scaled positions.
  Note that we do not scale the $\beta$-strands themselves; their lengths are
  preserved.
  We scale only the distances between $\beta$-strands.
  We inject a small amount of noise into the placements, so that 
  population-based models, such as multi-start simulated annealing and genetic
  algorithms, start with heterogenous solutions.
  
\end{itemize}

Since we do not know how to determine when a stochastic search process has found
a \emph{global} optimum (as opposed to a good local optimum), we must also have
some termination criterion for the search.
We implemented three alternative termination criteria:

\begin{itemize}
  \item A simple generation-counting approach, where the search terminates after
  a user-specified number of generations
  \item A time-based approach, where the search terminates after a 
  user-specified amount of time has elapsed
  \item A \emph{convergence} model, where the search terminates after the search
  has failed to improve after a user-specified number of generations
\end{itemize}
  
In practice, these criteria are easily combined, with a convergence approach
often halting searches early with good results, while the generation- or 
time-based limit ensuring that the search does not take longer than a user is
willing to wait.
We next describe the alternative heuristics that MRFy implements for stochastic
search: simulated annealing, a genetic algorithm, and a local search strategy.

\subsubsection{Simulated Annealing}

Simulated annealing~\cite{Kirkpatrick:1983wa} is a heuristic for stochastic search,
inspired by the physical process of annealing in metals.
Whereas a simple hill-climbing approach will always move downhill (if the task
is minimization) or uphill (if the task is maximization), if the search begins
near to a poor local optimum, the search will terminate at that local optimum.
Simulated annealing introduces an \emph{acceptance probability function}:
\begin{equation}
  \begin{split}
  P(e,e',T) = \begin{cases}
    1, & \text{ if } e' < e\\
    \exp(-(e'-e)/T),& \text{ otherwise }\\
\end{cases}\\
  \text{ where }
     e = E(s) \\
     e' = E(s')\\
  \end{split}
\end{equation}
which relies on some energy function $E(s)$ of the current state $s$ and
a candidate state $s'$, and a temperature function $T$ that tends towards zero
as the search progresses.
In our implementation, we used an exponentially-decaying temperature function:
\begin{equation}
  T(t) = k^{t}\times T_{0}
\end{equation}
given time $t$, initial temperature $T_{0}$, and a constant $k$.
The motivation for this decaying temperature function is that, as time 
progresses, the likelihood of being in a \emph{poor} local optimum lessens, and
thus, the closer to random hill-climbing we would like the search to behave.

Our energy function $E(s)$ is, naturally, the augmented Viterbi score of a
placement:
\begin{equation}
  E(s) = V_{m}^{\prime M}(n)
\end{equation}
where $m$ is the final residue in the query sequence and $n$ is the final
node in the Markov random field, and the $\beta$-strand placements are 
determined by $s$.

We implemented simulated annealing in MRFy according to this model.
We also implemented a \emph{multi-start} version of simulated annealing in MRFy,
where a set of independently-generated guesses is subject to simulated-annealing
random descent, in parallel.
At the termination of the search, the best solution from among all the 
candidates is chosen.

\subsubsection{Genetic Algorithm}

A genetic algorithm~\cite{Holland:1977hl} is a search heuristic inspired by 
biological evolution.
A genetic algorithm relies on the idea of \emph{selection} among a population of
varied solutions to an optimization problem.
At each of many generations, the fitter individuals in the population--those
solutions which exhibit more optimal scores--are allowed to continue into the
next generation.
Not only do they continue into the next generation, but they are allowed to
``reproduce,'' or recombine, to produce new solutions.
A particular solution to a problem, within the context of a genetic algorithm,
is called a \emph{chromosome}.
At each generation, some fraction of the fittest solutions are selected and
randomly paired with one another.
Each pair of solutions produces one or more offspring; each offspring is the
result of two steps: \emph{crossover} of the two chromosomes, followed by
random \emph{mutation} of the offspring.
The mutation is nondeterministic; the crossover may be deterministic or
nondeterministic.
The resulting offspring, along with their parents, are then evaluated according
to the objective function, and this process iterates until some termination
condition.

MRFy's genetic algorithm implementation uses the same representation for a
placement as simulated annealing: an ordered list of integers.

Let a \emph{placement} $p$ on a model with $k$ $\beta$-strands be an ordered 
set of integers $p_{i}, i \in (1..k)$.
Given two placements, $p$ and $q$, MRFy implements crossover of two chromosomes 
using the following algorithm:

\begin{enumerate}
  \item Set the new placement, $p'$, to the empty set.
  \item Repeat until all placements have been chosen:
  \begin{enumerate}
    \item Let $p'_{0} = p_{0}$
    \item Let $p'_{k} = q_{k}$
    \item Remove $p_{0}$ and $p_{k}$ from $p$
    \item Remove $q_{0}$ and $q_{k}$ from $q$
    \item $p = <p_1,...,p_{k-1}>$
    \item $q = <q_1,...,q_{k-1}>$
  \end{enumerate}
\end{enumerate}

Our actual implementation is purely functional, and simply consumes elements 
from lists.
In effect, though, this algorithm simply chooses the `left-most' elements from 
one parent and the `right-most' elements from another.
After crossover, the mutation step simply moves each element $p_{i}$ of the
placement $p$ by a small, random amount, within the constraints imposed by the
neighboring $\beta$-strands.
The motivation behind this approach is to take two solutions that are of high
fitness (recall that the worst solutions at every generation are not allowed
to contribute to the next generation), and produce a new solution that combines
one ``half'' (roughly) of one solution with one ``half'' of the other.
See Figure~\ref{crossover} for an illustration of this procedure.

\begin{figure}[htb!]
\begin{center}
  \fbox{\includegraphics[width=5in]{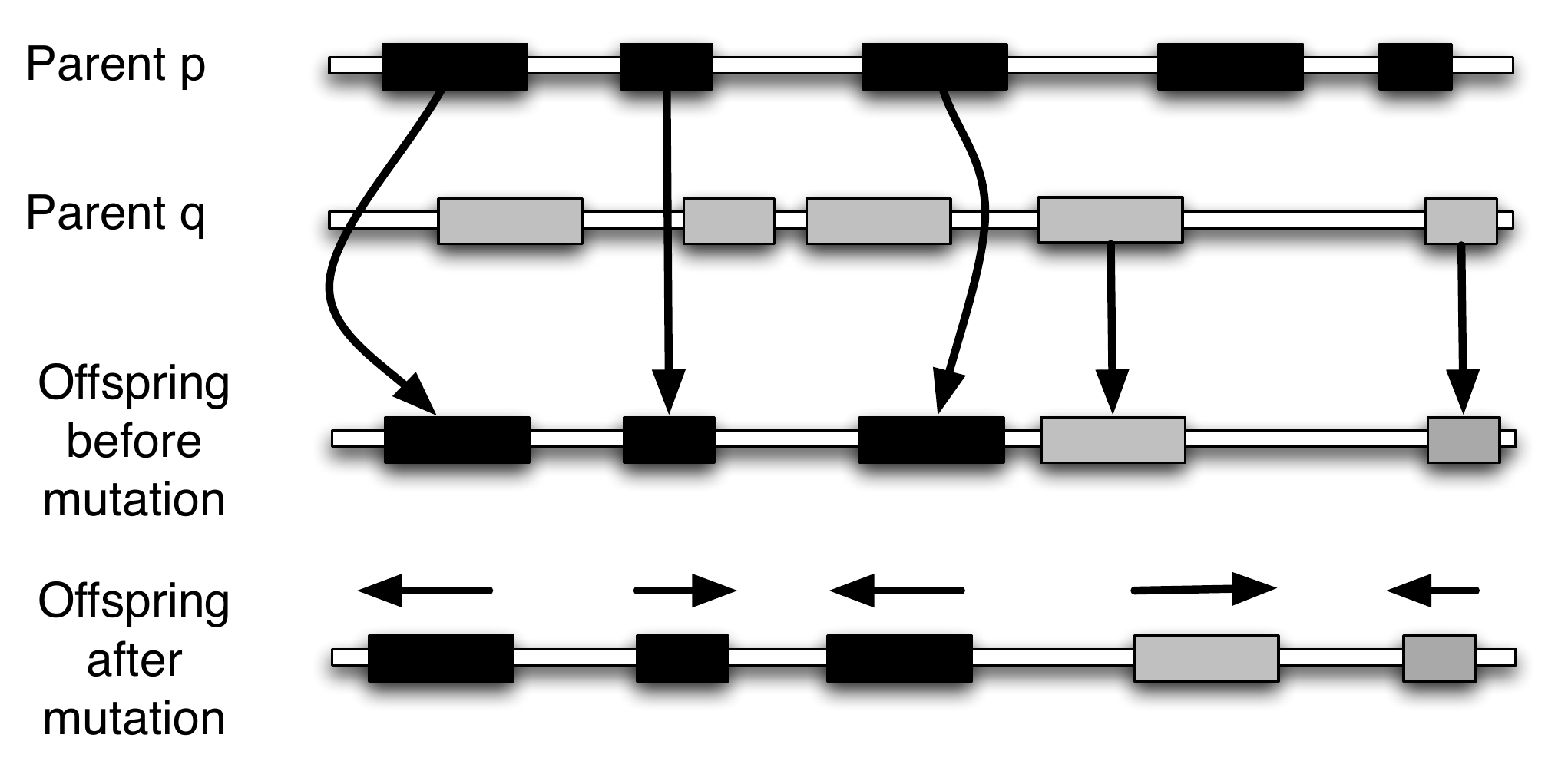}}
   \caption[The crossover and mutation process in MRFy's genetic algorithm 
   implementation.]{The crossover and mutation process in MRFy's genetic 
   algorithm implementation.
   Given parent $p$ (black) and parent $q$ (gray), alternate left and right
    placements from $p$ and $q$.
    Then, apply small random mutations to the resulting placement $p'$.}
   \label{crossover}
 \end{center}
\end{figure}

Given these operations for crossover and mutation, MRFy's genetic algorithm
implementation initializes a population of a user-specified size $P$ (typically 
one thousand placements, though we experimented with as many as ten thousand).
In parallel, each placement is scored according to the objective function.
Since scoring is far more computationally expensive than crossover and mutation,
we allow them all to reproduce, paired at random.
We then score them, and choose the $P$ best-scoring placements for the next
generation.
This process repeats until a termination condition is met, at which point the
single best placement is returned.
We note that a future enhancement to MRFy could return the $k$ best placements
for some user-specified threshold $k$, if multiple high-scoring alignments were
to be considered.

\subsubsection{Local Search}\label{localsearch}

Constraint-based local search~\cite{Hentenryck:2009vn} is a family of approaches
for exploring ``neighborhoods'' in feature space in a randomized manner, 
subject to the constraints of that solution space.
In the context of MRFy, the constraints are the previously-discussed 
restrictions that $\beta$-strands cannot overlap, and every residue must be
placed in a $\beta$-strand.
The motivation for local search is, in a particularly uneven
fitness landscape, hill climbing will often reach nearby local optima. Thus,
given a single candidate solution, local search explores the immediate 
neighborhood in great detail (perhaps, but not necessarily exhaustively).
When the local search cannot escape a local optimum, then some sort of 
\emph{non-local} move may be attempted.

This non-local move may rely on a population-based diversification approach,
in which parts of the solution may change dramatically.
In a sense, local search bears some resemblance to a genetic algorithm,
except that a population of solutions is created only when the search is stuck
in a local optima, and the best solution in that population is chosen for a new
search.

In MRFy's implementation, each step in the search consists of two phases: 
\emph{diversification} (See Figure~\ref{diversification}) and 
\emph{intensification}.
The diversification algorithm is as follows:
\begin{itemize}
\item Begin with a candidate solution $s$ (a placement), which is just
an ordered list of integers.
\item Given $s$, break the list into three sub-lists $s_{0}, s_{1}$, $s_{2}$, 
at randomly-chosen boundaries.
\item Choose one of the sub-lists $s_{i}$ at random, and mutate it into $k$ 
copies $s_{i1}$ through $s_{ik}$ at random, for some user-defined value of $k$ 
(we used $k=10$), within
the constraints imposed by the other sub-lists and the lengths of the 
$\beta$-strands.
\item Re-combine each set of lists, $(s_{1j}, s_{2j}, s_{3j})$ into a new 
placement $s^{\prime}_{j}, j \in (1..k)$.
\item Score each placement $s^{\prime}_{j}$, return the best-scoring of the $k$ new placements as a new solution.
\end{itemize}

Once diversification produces a new candidate solution, intensification brings
it toward a local minimum.
The intensification algorithm is as follows:
\begin{itemize}
\item Begin with a candidate solution $s$.
\item Repeat until no better-scoring placements are generated.
\begin{itemize}
  \item For each element $e \in s$, generate four new placements $s'_{i1}$ 
  through $s'_{i4}$ by moving $e$ up and down by 1 and two, as long as those 
  moves do not violate the constraints.
  \item Score each candidate placement $s'_{ij}$.
  \item Set $s$ to the best-scoring candidate placement $s'_{ij}$
\end{itemize}
\item Return $s$ as a new solution.
\end{itemize}

\begin{figure}[htb!]
\begin{center}
  \fbox{\includegraphics[width=4.5in]{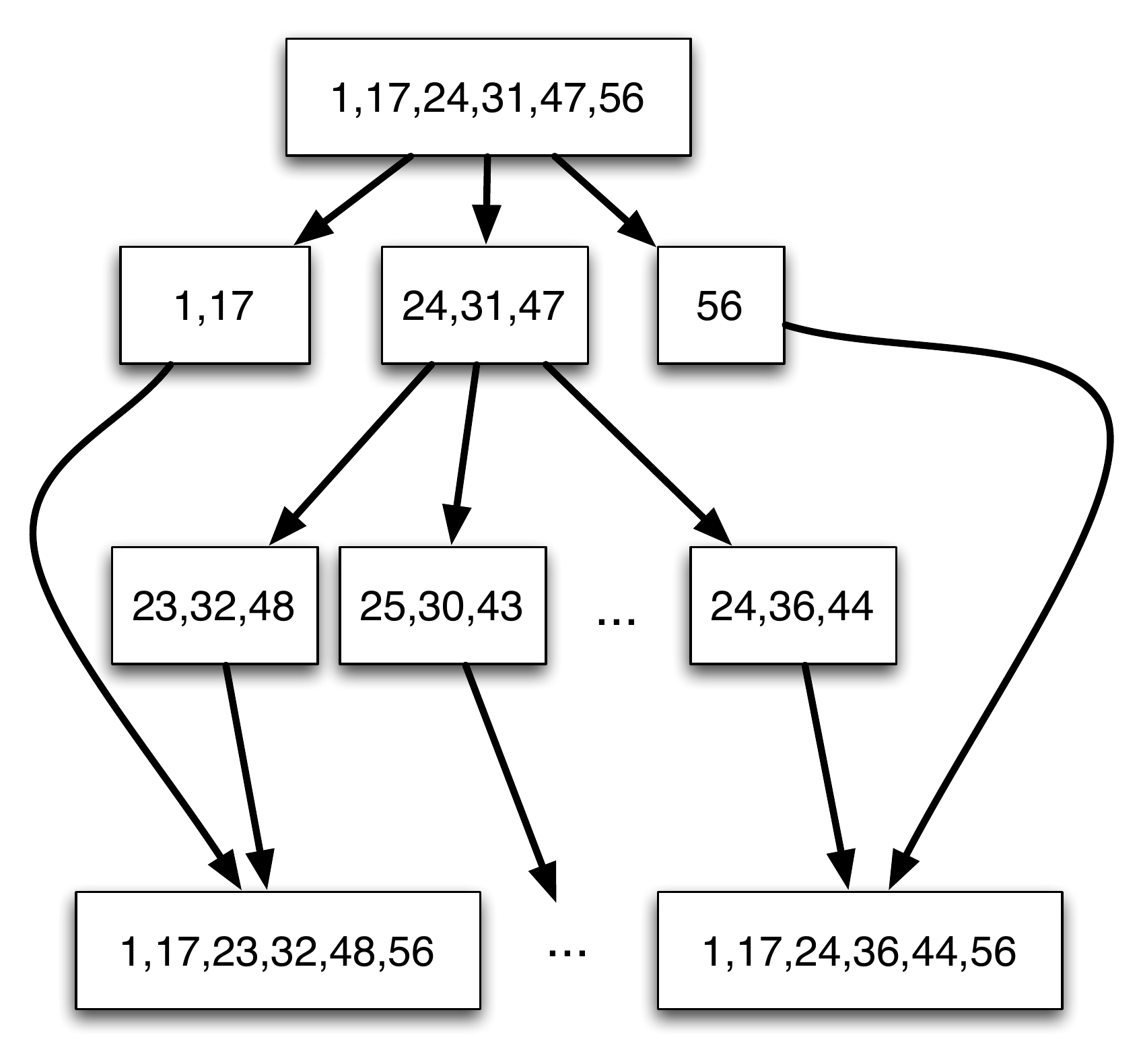}}
   \caption{The diversification step in local search.}
   \label{diversification}
 \end{center}
\end{figure}

%
%
%


\subsection{Evaluating search strategies}\label{searchstrats}

As MRFy supports three significantly different stochastic search strategies, and
a number of tunable parameters such as termination conditions and (for 
simulated annealing) the cooling schedule, we conducted a search over parameter
space using a small data set.
We built Markov random field templates from the fold ``8-bladed Beta-Propellers'', and the superfamilies ``Barwin-like endoglucanases'' (a 
$\beta$-barrel superfamily) and ``Concanavalin A-like lectins/glucanases'' (a 
$\beta$-sandwich superfamily).
We were interested in the speed of convergence for a true-positive test case,
so we tested each template with a protein sequence chosen from that fold or
superfamily: for the 8-bladed propeller, we chose ASTRAL chain d1lrwa\_
(Methanol dehydrogenase, heavy chain from  \emph{Paracoccus denitrificans}).
For the barwin-like endoglucanases, we chose ASTRAL chain d2pica1 
(Membrane-bound lytic murein transglycosylase A, MLTA from \emph{E. coli}).
For the lectins/glucanases, we chose ASTRAL chain d2sbaa\_ 
(Legume lectin from soy bean (\emph{Glycine max})).

We tested simulated annealing with a population size of 10, a maximum number of
generations of 10000, convergence periods of 200, 500, and 1000 generations, 
and a cooling factor of 0.99 (preliminary tests showed little impact from
varying the cooling factor among 0.9, 0.99, and 0.999).

We tested the genetic algorithm implementation with a population size of 1000 
and 10000, a maximum number of generations of 500, and convergence
periods of 10, 50, and 100.

Since the local search distinguishes between diversification and 
intensification, counting the number of generations is ambiguous; we used a time
limit of 10 seconds, 30 seconds and 5 minutes.
All tests were conducted on a 12-core AMD Opteron 2427 with 32GB RAM, devoting 
all 12 cores to MRFy.
For each test, we report statistics based on ten runs for each set of 
parameters.

\subsection{Simulated Evolution} \label{mrfy-simev}

In MRFy, we incorporated precisely the same ``simulated evolution'' 
implementation, as first proposed by Kumar and 
Cowen~\cite{Kumar:2009tp, Kumar:2010wv}, as we did for SMURFLite in 
Chapter~\ref{chapter:c3_smurflite}.
We added pairwise mutations based on $\beta$-strand pairings.
Unlike in Chapter~\ref{chapter:c3_smurflite}, 
here we were not attempting to mitigate
the loss of information due to simplifying the Markov random field, but rather
attempting to compensate for sparse training data.
This was motivated in part by the observation that SMURFLite benefited most
from simulated evolution when the ``number of effective families'' was low.
We use the same mutation frequencies as in Chapter~\ref{chapter:c3_smurflite}.
For each artificial sequence, we mutate at a 50\% mutation rate per length of 
the $\beta$-strands. 

\subsection{Datasets} \label{mrfy-datasets}

From SCOP (\cite{Murzin:1995uh}) version 1.75, we chose the same
$\beta$-structural superfamiles as for SMURFLite 
(Chapter~\ref{chapter:c3_smurflite}).
These superfamilies were: ``Nucleic acid-binding proteins'' (50249), 
``Translation proteins'' (50447), 
``Barwin-like endoglucanases'' (50685), ``Cyclophilin-like'' (50891), ``Sm-like 
ribonucleoproteins'' (50182), ``PDZ domain-like'' (50156), ``Prokaryotic 
SH3-related domain'' (82057), ``Tudor/PWWP/MBT'' (63748), ``Electron Transport 
accessory proteins'' (50090), ``Translation proteins SH3-like domain'' (50104), 
and ``FMN-binding split barrel'' (50475). 

\subsection{Training and testing process} \label{mrfy-training}

For the $\beta$-barrel superfamilies, we performed strict leave-family-out 
cross-validation. 
We built training templates at the superfamily level. 
For each superfamily, its constituent families were identified. 
Each family was left out, and a training set was established from the protein 
chains in the remaining families, with duplicate sequences removed. 
We built an MRF on the training set, both with and without training-data
augmentation using the same ``simulated evolution'' implementation as 
in Chapter~\ref{chapter:c3_smurflite}: we generate 150 new artificial training 
sequences from each original training sequence. 
For each artificial sequence, we mutate at a 50\% mutation rate per length of 
the $\beta$-strands.
We chose protein chains from the left-out family as positive test examples. 
Negative test examples were protein chains from all other superfamilies in SCOP 
classes 1, 2, 3 and 4 (including other barrel superfamilies), indicated as 
representatives from the nr-PDB (\cite{Berman:2000hl}) database with 
non-redundancy set to a BLAST E-value of $10^{-7}$.

We used MRFy's local search mode (see Section~\ref{localsearch}) to align each
test example to the trained MRF.
The score reported for MRFy was the combined HMM and pairwise score from the 
MRF, which is identical to the SMURF energy function.
For each training set, the scores for both methods (MRFy with and without
simulated evolution) were collected and a ROC curve (a plot of true positive 
rate versus false positive rate) computed. We report the area under the curve 
(AUC statistic) from this ROC curve (\cite{Sonego:2008uy}).

\section{Results}

\subsection{Search strategies}

For the three stochastic search approaches, we compared the raw score achieved
by each approach under a variety of conditions, as discussed in 
Section~\ref{searchstrats}.
The raw score is simply the negative log of the probability of the best path
found through the model.
Thus, raw scores are not comparable between models, but they are comparable
between query sequences for a given model.

Table~\ref{ss-propeller} indicates the performance of different stochastic
search techniques on the 8-bladed $\beta$-propeller fold.
While the simulated annealing and genetic algorithm approaches exhibit less
variance (a smaller standard deviation) from run to run, they do not approach
the minimum score of the local search approaches. Multi-start simulated 
annealing with a population of 10 and a convergence threshold of 200 generations
averages 29.3 seconds per search, but only achieves a minimum score of 2112, 
though it converged in all cases.

In contrast, local search, given 30 seconds, achieves a minimum score of 1982, 
and even in only 10 seconds achieves a minimum score of 1992.
However, the global minimum score of 1781, which is achieved by SMURF on the
8-bladed $\beta$-propeller template, is only reached by MRFy with local search
two out of ten times, and this result required local search be allowed to run
for twenty minutes.
Thus, for this problem domain, local search seems to outperform our simulated
annealing and genetic algorithm implementations.

Table~\ref{ss-barwin} indicates the performance of the stochastic
search techniques on the ``Barwin-like endoglucanases'' $\beta$-barrel 
superfamily.
These structures are less complex than the propellers, even though they are
\emph{more} computationally complex for SMURFLite (Chapter 
\ref{chapter:c3_smurflite}) if an interleave threshold greater than 2 is used. 
We see less variance than with the propellers, but once again, the local search
technique achieves a lower minimum score than simulated annealing or the genetic
algorithm.

Notably, local search achieves a minimum score of 978, which an 
\emph{exhaustive} search indicates to be a global minimum for this sequence on
this template.
With a time limit of 10 seconds, local search found this global minimum in one 
out of ten runs. 
With a time limit of 30 seconds, local search found it in two out of
ten runs, and with a time limit of 5 minutes, in four out of ten runs.

\rowcolors{2}{gray!25}{white}

\begin{small}
\begin{center}
\begin{table*}[htb]
\caption{Stochastic search performance on 8-bladed $\beta$-propeller \label{ss-propeller}}
\begin{tabular}{lllll}
\hline
             & Min Score & Mean Score & Std Score & Mean Time (s) \\
\hline
SA 200       & 2112      & 2139       & 12.2      & 29.3      \\
\hline                                                        
SA 500       & 2129      & 2146       & 9.3       & 1020      \\
\hline                                                        
SA 1000      & 2112      & 2130       & 7.8      & 3314      \\
\hline                                                        
GA 1000/10   & 2105      & 2126       & 6.6      & 285      \\
\hline                                                        
GA 1000/50   & 2094      & 2118       & 7.7      & 1239      \\
\hline                                                        
GA 1000/100  & 2107      & 2120       & \textbf{3.8}      & 548      \\
\hline                                                        
GA 10000/10  & 2087      & 2111       & 7.2      & 5809      \\
\hline                                                        
GA 10000/50  & 2094      & 2112       & 7.1      & 5174      \\
\hline                                                        
GA 10000/100 & 2079      & 2114       & 9.0      & 10226      \\
\hline                                                        
LS 10s       & 1992     & 2015      & 19.4     & \textbf{10}    \\
\hline                                                        
LS 30s       & 1982     & 1991      & 10.9     & 30    \\
\hline                                                        
LS 5m        & \textbf{1818}     & \textbf{1876}      & 37.2     & 300    \\
\hline
\end{tabular}\\
{Performance of stochastic search techniques on an 8-bladed $\beta$-propeller
template. SA is Simulated Annealing, GA is Genetic Algorithm, and LS is Local
Search. For Simulated Annealing, we show results for convergence thresholds of
200, 500, and 1000 generations. 
For the Genetic Algorithm, we show results for convergence thresholds of 10, 
50, and 100 generations, and for population sizes of 1000 and 10000.
For Local Search, we show results for time limits of 10 seconds, 30 seconds and 
five minutes, on a 12-core AMD Opteron.
MRFy never achieved the global optimum score of 1781, achieved by SMURF, on this
template, except when local search was given 20 minutes of compute time, in
which case it found the global optimum two out of ten times.}
\end{table*}
\end{center}
\end{small}

\begin{small}
\begin{center}
\begin{table*}[htb]
\caption{Stochastic search performance on ``Barwin-like'' $\beta$-barrel \label{ss-barwin}}
\begin{tabular}{llllll}
\hline
             & Min Score & Mean Score & Std Score & Mean Time (s) & Optimal \\
\hline
SA 200       & 1064      & 1071       & 3.8      & 79.5  & 0    \\
\hline                                                        
SA 500       & 1047      & 1063       & 7.6      & 104   & 0   \\
\hline                                                        
SA 1000      & 1024      & 1047       & 14.0      & 523  & 0    \\
\hline                                                        
GA 1000/10   & 1061      & 1069       & 3.6      & 232   & 0   \\
\hline                                                        
GA 1000/50   & 1059      & 1066       & 3.1      & 442    & 0  \\
\hline                                                        
GA 1000/100  & 1058      & 1069       & 4.0      & 1382   & 0   \\
\hline                                                        
GA 10000/10  & 1058      & 1063       & 2.5      & 8205   & 0   \\
\hline                                                        
GA 10000/50  & 1059      & 1061       & \textbf{2.2}      & 10306   & 0   \\
\hline                                                        
GA 10000/100 & 1057      & 1061       & \textbf{2.2}      & 16395   & 0   \\
\hline                                                        
LS 10s       & \textbf{978}     & 995      & 16.2     & \textbf{10} & 0.1   \\
\hline                                                        
LS 30s       & \textbf{978}     & 987      & 6.9     & 30  & 0.2  \\
\hline                                                        
LS 5m        & \textbf{978}     & \textbf{981}      & 2.9     & 300  & 0.4  \\
\hline
\end{tabular}\\
{Performance of stochastic search techniques on the ``Barwin-like 
endoglucanases'' $\beta$-barrel template. 
SA is Simulated Annealing, GA is Genetic Algorithm, and LS is Local Search. 
For Simulated Annealing, we show results for convergence thresholds of
200, 500, and 1000 generations. 
For the Genetic Algorithm, we show results for convergence thresholds of 10, 
50, and 100 generations, and for population sizes of 1000 and 10000.
For Local Search, we show results for time limits of 10 seconds, 30 seconds and 
five minutes, on a 12-core AMD Opteron.
The ``Optimal'' column indicates the fraction of runs for each search method
that achieved the global optimum.}
\end{table*}
\end{center}
\end{small}

Table~\ref{ss-sandwich} indicates the performance of the stochastic
search techniques on the ``Concanavalin A-like lectins/glucanases'' 
$\beta$-sandwich superfamily.
These structures are also more complex than the propellers, even though they 
are also more computationally complex for SMURFLite with an interleave 
threshold greater than 2. 
On this superfamily, there is a closer overlap between the minimum score 
achieved by simulated annealing, at 790, and the range seen by local search;
local search with a time limit of 30 seconds achieves a \emph{mean} minimum
score of 791, though its best was 740.

Notably, when given a time limit of 5 minutes, local search achieved the 
\emph{global minimum} of 554 (as determined by exhaustive search) ten out of 
ten times. Local search never found this score when given only 10 seconds or
30 seconds as a time limit.

\begin{small}
\begin{center}
\begin{table*}[!t]
\caption{Stochastic search performance on $\beta$-sandwich\label{ss-sandwich}}
\begin{tabular}{llllll}
\hline
             & Min Score & Mean Score & Std Score & Mean Time (s) & Optimal \\
\hline
SA 200       & 795      & 834       & 18.6      & 84.7    & 0  \\
\hline                                                        
SA 500       & 790      & 820       & 17.3      & 192    & 0  \\
\hline                                                        
SA 1000      & 791      & 811       & 14.7      & 493    & 0  \\
\hline                                                        
GA 1000/10   & 874      & 888       & 4.1      & 1869    & 0  \\
\hline                                                        
GA 1000/50   & 878      & 883       & \textbf{2.5}      & 1305   & 0   \\
\hline                                                        
GA 1000/100  & 865      & 878       & 5.6      & 4309    & 0  \\
\hline                                                        
GA 10000/10  & 872      & 877       & \textbf{2.5}      & 6999   & 0   \\
\hline                                                        
GA 10000/50  & 875      & 879       & 3.1      & 5317    & 0  \\
\hline                                                        
GA 10000/100 & 869      & 875       & 4.5      & 10733   & 0   \\
\hline                                                        
LS 10s       & 771     & 826      & 31.7     & \textbf{10}  & 0  \\
\hline                                                  
LS 30s       & 740     & 791      & 47.0     & 30  & 0  \\
\hline                                                  
LS 5m        & \textbf{554}     & \textbf{554}      & \textbf{0.0}     & 300 & 1.0   \\
\hline
\end{tabular}\\
{Performance of stochastic search techniques on a ``Concanavalin A-like 
lectins/glucanases'', a 12-stranded $\beta$-sandwich template. 
SA is Simulated Annealing, GA is Genetic Algorithm, and LS is Local
Search. 
For Simulated Annealing, we show results for convergence thresholds of
200, 500, and 1000 generations. 
For the Genetic Algorithm, we show results for convergence thresholds of 10, 
50, and 100 generations, and for population sizes of 1000 and 10000.
For Local Search, we show results for time limits of 10 seconds, 30 seconds and 
five minutes, on a 12-core AMD Opteron.
The ``Optimal'' column indicates the fraction of runs for each search method
that achieved the global optimum.}
\end{table*}
\end{center}
\end{small}

\rowcolors{2}{white}{white}

Our Haskell implementation made it exceedingly easy to parallelize MRFy across
multiple processing cores.
By default, MRFy will take advantage of all processing cores on a system; we
tested the parallel speedup on a system with 48 processing cores.
We measured the run-time performance of MRFy's genetic algorithm implementation
(with a fixed random seed) on the ``8-bladed $\beta$-propeller'' template.
The model has 343~nodes, of which 178 appear in 40~$\beta$-strands.
The segments between $\beta$-strands typically have at most 10~nodes.
We~used a query sequence of 592 amino acids, but each placement breaks
the sequence into 41~pieces, each of which typically has at most 20 amino
acids.
Because MRFy can solve the models between the $\beta$-strands independently,
this benchmark has a lot of parallelism.
\enlargethispage{\baselineskip}

Figure~\ref{speedup} shows speedups when using from 
 1 to 48 of the cores 
on a 48-core, 2.3GHz AMD Opteron 6176 system.
Errors are estimated from 5 runs.
After about 12 cores, where MRFy runs 6 times as fast as sequential code, 
speedup rolls off.
We note that by running 4 instances of MRFy in parallel on different searches,
we would expect to be able to use all 48 cores with about 50\% efficiency.

\begin{figure}[htb!]
\begin{center}
  \fbox{\includegraphics[width=5.25in]{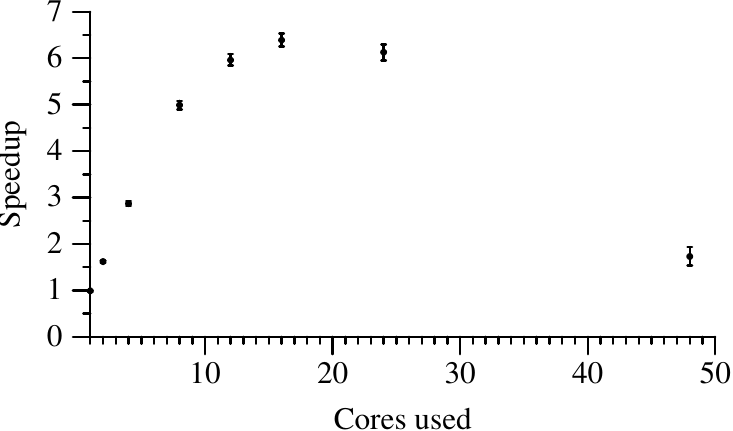}}
   \caption[MRFy's parallel speedup]{MRFy's parallel speedup on an 8-bladed 
   $\beta$-propeller, using a 48-core system.
   After about 12 cores, speedup falls off.}
   \label{speedup}
 \end{center}
\end{figure}

\subsection{Remote homology detection accuracy}

We performed cross-validation testing on 11 $\beta$-barrel superfamilies, both
with and without simulated evolution.
For MRFy, the balance between accuracy and computational efficiency is 
determined by the termination conditions, as well as the search technique 
chosen.
Because local search so dramatically outperformed simulated annealing and the
genetic algorithm, we conducted these cross-validation tests only on local 
search.
We chose 30 seconds as a balance between speed and accuracy; a 5 minute time
limit might result in better accuracy, but for high-throughput, whole-genome 
scans, 5 minutes per alignment is excessive.

We compared MRFy's performance, both with and without simulated evolution, to 
the results from Chapter~\ref{chapter:c3_smurflite}.
Table~\ref{mrfy-auc} shows the area (AUC) under the Receiver Operator 
Characteristic (ROC) curve for MRFy, the very best result from SMURFLite, and
HMMER, RAPTOR, and HHPred.
Importantly, we are choosing the best SMURFLite parameters for each superfamily,
which could not be known in advance; thus, we demonstrate improvements over the
\emph{very best} SMURFLite can perform, rather than just an average case.

We first note the ``Barwin-like endoglucanases'' superfamily highlighted in
Chapter~\ref{chapter:c3_smurflite}.
SMURFLite performed better as the interleave threshold was increased on this
superfamily, and also when simulated evolution was added.
Since MRFy discards no $\beta$-strands, we were curious how it would perform
on this superfamily.
Notably, this superfamily has exceedingly little training data; during 
cross-validation, there are at most 4 training sequences and as few as 3
when filtered at a BLAST E-value of $10^{-7}$ and the family under test is left
out.
Without simulated evolution, MRFy achieves an AUC of 0.86, outperforming 
SMURFLite
without simulated evolution (SMURFLite achieved an AUC of 0.77 with an 
interleave threshold of 2, and 0.81 with an interleave threshold of 4).
When simulated evolution is added, MRFy achieves an AUC of 0.92, outperforming
SMURFLite with an interleave threshold of 2, but falling just short of the 0.94
AUC SMURFLite demonstrates with an interleave threshold of 4 and simulated 
evolution.

MRFy outperforms SMURFLite in terms of AUC on four of the $\beta$-barrel 
superfamilies, while SMURFLite outperforms MRFy on three.
There was only one superfamily, the ``Prokaryotic SH3-related domain,'' where
SMURFLite outperformed HMMER, RAPTOR, and HHPred while MRFy did not (MRFy, with 
an AUC of 0.73, fell behind HMMER's 0.81 AUC).
Unfortunately, MRFy never produced the best performance on a superfamily that 
SMURFLite had not performed best on in our previous work.
Thus, with the exception of the ``Barwin-like endoglucanase'' superfamily, the
added $\beta$-strand information does not seem to help MRFy significantly in 
the cases where HMMER, RAPTOR, or HHPred performed best.


\rowcolors{2}{gray!25}{white}

\newcolumntype{H}[1] {%
>{\raggedright}%
p{#1}}

\begin{small}

\begin{center}
\begin{table*}[!t]
\caption{AUC on Beta-Barrel superfamilies\label{mrfy-auc}}
{\footnotesize \begin{tabular*}{\textwidth}{@{\extracolsep{\fill}}H{3.3cm}p{1.0cm}p{1.1cm}p{1.0cm}p{2.7cm}p{0.7cm}p{1.6cm}}\hline
 & HMMER & RAPTOR & HHPred & SMURF\-Lite (best) & MRFy & MRFy, SE\\
 \hline
{\bf MRFy performs best}  & & & & & & \\
\hline
Translation proteins & - & - & 0.66 & 0.93 & {\bf 0.95} & 0.91\\
Barwin-like endoglucanases & - & - & 0.75 & 0.77 & 0.86 & {\bf 0.92} \\
Tudor/PWWP/MBT & 0.78 & 0.74 & 0.67 & 0.83 & {\bf 0.86} & {\bf 0.86}\\
Nucleic acid-binding proteins & 0.75 & - & 0.67 & 0.89 & 0.75 & {\bf 0.95} \\
 \hline
 {\bf SMURFLite performs best}  & & & & & & \\
\hline
Cyclophilin-like & 0.67 & 0.61 & 0.7 & {\bf 0.85} & 0.82 & 0.80 \\ 
Sm-like ribonucleoproteins & 0.73 & 0.71 & 0.77 & {\bf 0.85} & 0.77 & 0.77 \\
Prokaryotic SH3-related domain & 0.81 & -  & - & {\bf 0.83} & 0.73 & 0.72 \\
 \hline
{\bf HHPred performs best}  & & & & & & \\
\hline
Translation proteins SH3-like & 0.83 & 0.81 & {\bf 0.86} & 0.62 & - & 0.63\\
 \hline
 {\bf RAPTOR performs best} & & & & & & \\
\hline
PDZ domain-like & 0.96 & {\bf 1.0} & 0.99 & 0.97 & 0.95 & 0.95 \\
FMN-binding split barrel & 0.62 & {\bf 0.82} & 0.61 & - & - & - \\
 \hline
{\bf HMMER performs best}  & & & & & & \\
\hline
Electron Transport accessory proteins & {\bf 0.84} & - & 0.77 & 0.66 & - & 0.68 \\
\hline
\end{tabular*}}\\{Note: for SMURFLite, value indicated is the best of all
values in Table~\ref{barreltable}. For MRFy, SimEv is simulated evolution. A dash ('-') in a result entry indicates the method failed on these structures, i.e. an AUC of less than 0.6}
\end{table*}
\end{center}

\end{small}

\rowcolors{0}{white}{white}

\section{Discussion}

We have presented MRFy, a method that uses stochastic search to find 
alignments of protein sequences to Markov random field models.
MRFy in most cases outperforms SMURFLite, but we should consider several 
possible enhancements to MRFy that might improve its performance.
As demonstrated on the $\beta$-sandwich superfamily, MRFy with local search 
achieves a globally optimal alignment when given 5 minutes of run-time, but 
fails to find a score close to this when given only 30 seconds.
It was not immediately clear how to bring convergence testing into the local
search model, but doing so might achieve results comparable to the 5 minute
results in less time.

We hope that MRFy will be useful for whole-genome annotation of newly-sequenced
organisms.
The tradeoff of time versus accuracy suggests a two-phase approach to this task:
a scan with relatively strict run-time performance requirements (perhaps no more
than ten seconds per alignment) coupled with a relatively loose $p$-value 
threshold would produce a number of candidates, many of which would likely be
false positives.
Then, MRFy could be re-run on these candidates with more computationally
demanding settings, and with a more strict $p$-value threshold.
MRFy computes $p$-values identically to SMURFLite: an extreme value 
distribution~\cite{Eddy:1998ut} is fitted to a distribution of raw scores,
and then a $p$-value is computed as $1-cdf\left( x \right)$ for any raw MRFy 
score $x$.
Computing the $p$-value accurately in the face of different search intensities
might require fitting multiple distributions, each for a different level of
search intensity.
Otherwise, if the distribution is obtained with an intensive search, then at
less-intensive search parameters, true positives may result in poor $p$-values;
similarly, if the distribution is obtained with a quick search, then 
more-intensive search parameters might result in false positives scoring 
comparatively well, and appearing to have good $p$-values.

As in Chapter~\ref{chapter:c3_smurflite}, we compared MRFy to 
HHPred~\cite{Soding:2005ff}.
As discussed, HHPred has an advantage in that it builds profiles based on all
of protein sequence space.
As a future enhancement to MRFy, we plan to introduce query profiles, so that
the MRFy alignment is to a sequence profile built from the query sequence,
rather than just the query sequence.
However, this will introduce a run-time performance hit in two ways.
First, the time to run a sequence homology search using the BLAST~\cite{Altschul:1997tl} 
family of tools can be significant, though the work on 
compressively-accelerated algorithms by Loh, et al.\cite{Loh:2012br} may reduce this 
impact.
Second, computing the Viterbi and $\beta$-pairing scores na\"{i}vely will
require time directly proportional to the number of sequences in the query
profile.
Representing these query sequences as sets of residue frequency vectors should
help, and there may be other approaches to consider.

We have demonstrated that MRFy is an improvement to SMURFLite, one that brings
the full power of a Markov random field to bear.
Thus far, only $\beta$-strand interactions lead to non-local interactions in the
MRFy Markov random field.
In the future, we will investigate fitting other secondary structural elements
(the $\alpha$-helices) into this model.
In addition, disulfide bonds, which can occur between cysteine residues and
have been shown to be highly conserved~\cite{Naamati:2009eg, Tirosh:2012iq}, 
would appear to fit easily into this model.

\chapter{Conclusion and Future Work} \label{conclusion}

\label{chapter:c5_conclusion}

\section{Contrasting Markov random field approaches}

We have explored two approaches to making the SMURF~\cite{Menke:2009, Menke:2010ti} Markov
random field model computationally tractable on all protein folds.
SMURF used multidimensional dynamic programming to exactly compute the optimal
energy function on a $\beta$-structural Markov random field, which was
computationally intractable when $\beta$-strands were highly interleaved.
In Chapter~\ref{chapter:c3_smurflite}, we demonstrated a method, SMURFLite, for 
simplifying the Markov random field itself, by removing only those nonlocal 
interactions that caused the computational complexity to grow beyond reasonable
bounds.
In addition, SMURFLite uses ``simulated evolution'' to mitigate, at least in 
part, the information loss that this simplification poses.

In Chapter~\ref{chapter:c4_mrfy}, we demonstrated an alternative method, MRFy,
for approximating a solution to the full SMURF Markov random field, without
discarding any $\beta$-strand information.
This method, too, benefits from simulated evolution, though this benefit seems
primarily confined to the protein superfamilies that have barely-adequate 
training data.

In essence, SMURFLite \emph{exactly} computes the solution to an 
\emph{approximation} of the SMURF Markov random field, while MRFy 
\emph{approximately} computes a solution to the \emph{exact} SMURF Markov random
field.

A natural extension of this work would be to combine the methodologies from
Chapters~\ref{chapter:c3_smurflite} and~\ref{chapter:c4_mrfy}.
One approach to this would be to use SMURFLite at a low interleave threshold,
such as 2, to produce an alignment that could serve as an initial guess for
MRFy's placement of $\beta$-strand residues.
Such an alignment would, at times, need to be modified to fit the 
$\beta$-strands that had been ignored by SMURFLite.

We also propose to evaluate this combined approach in comparison to the
newer threading approach, RaptorX~\cite{Peng:2011wx}, which incorporates
multiple-template alignments and solvent-accessibility information.

\section{Structurally consistent superfamilies}

The results from 
Chapters~\ref{chapter:c2_touring}~and~\ref{chapter:c3_smurflite}
suggest that a purely structural basis for remote homology detection may result
in, in some sense, an unfair test.
Some SCOP superfamilies exhibit structural inconsistency; this, as well as 
historical artifacts such as ``dustbin families'' as suggested  
by~\cite{Pethica:2012ds}, pose a considerable challenge.
In addition, the structural \emph{consistency} of the $\beta$-propeller folds 
appears to be relatively unusual at the fold level.
Given that the methods explored in this work rely upon high-quality structural
alignments that, ideally, preserve highly-conserved secondary-structural 
regions, efforts to improve these alignments would be beneficial.
Recent work in the field of structural alignment, including our 
work~\cite{Daniels:2012jr} and that of Wang, et al.~\cite{Wang:2012wq} may preserve more
sequence and secondary-structural similarity, occasionally at the expense of
small amounts of structural alignment quality.

We may also consider the task of remote homology detection when freed from the
occasional inconsistencies of SCOP; evaluating SMURFLite and MRFy on a purely
structurally-derived hierarchy, as described in 
Chapter~\ref{chapter:c2_touring}, may be worth exploring.

\section{MRFy with sequence profiles}

MRFy, along with SMURFLite, relies on computing an alignment of a single protein
sequence to a Markov random field.
As has been demonstrated by approaches such as PSI-BLAST~\cite{Altschul:1997tl} and 
BetaWrapPro~\cite{McDonnell:2005p497}, incorporating homologous profile data can improve both 
close and remote homology detection.
Given MRFy's stochastic search approach, we expect that the
less-discretized residue composition of each column of a query profile, as
compared with a single query sequence, would smooth out the fitness landscape
of MRFy's objective function (namely, the SMURF energy function) and thus
enable the local-improvement feature of MRFy's local search to be more 
efficient.
As discussed in Chapter~\ref{chapter:c4_mrfy}, two computational challenges 
arise:
how to quickly find close sequence homologs to build a profile, and how to
quickly score a profile against a Markov random field.
The work of Loh, et al.~\cite{Loh:2012br}, as well as our recent but as-yet 
unpublished work on compressively-accelerated genomic and protein sequence 
search algorithms, provides a partial solution to the first performance concern.

Regarding the second performance concern, how to quickly score a profile 
against a Markov random field, we note that HHPred~\cite{Soding:2005ff} and its relative, 
HHBlits~\cite{Remmert:2012cj} perform HMM-HMM alignment.
They solve the slightly simpler problem of aligning a hidden Markov model built
from a \emph{query} profile with a hidden Markov model built from training data.
This raises the question: could we perform MRF-MRF alignment, or at least
HMM-MRF alignment?
Our current model of Markov random fields requires a structural alignment in
order to annotate $\beta$-strands, but a hidden Markov model can be built from
a sequence alignment.
Could we build a hidden Markov model from a query sequence (and resulting 
profile), and align it to our Markov random field model?
This appears to bear further consideration.

\section{Extension to Other Protein Classes}

Both SMURFLite and MRFy differ from hidden Markov models, such as those employed
by HMMER~\cite{Eddy:1998ut}, only in that they incorporate non-local interactions between
residues participating in hydrogen bonds between $\beta$-strands.
This means that SMURFLite and MRFy are most at home in the SCOP class of 
``all beta proteins,'' and while we may also be able to show benefits in the
classes of ``alpha/beta'' and ``alpha+beta'' proteins, we would expect to
contribute little to, and in fact over-train on, the ``all alpha proteins,''
given that there are occasional $\beta$-strands in the $\alpha$-helical 
proteins, just as there are the converse (for example, the ``Barwin'' protein
discussed in Chapters~\ref{chapter:c3_smurflite}~and~\ref{chapter:c4_mrfy} has
four $\alpha$-helices in addition to its eight $\beta$-strands).

We intend to extend MRFy to incorporate $\alpha$-helix conditional 
probabilities, as explored by Cao, et al.~\cite{Cao:2012ex} within the context of 
HMMER.
At the simplest level, highly-conserved $\alpha$-helices in between 
$\beta$-strands should prevent those $\beta$-strands from occluding the helices;
incorporating the $\alpha$-helical residue propensities is an obvious extension
of the model.

\section{More Generalized Contact Maps}

Beyond specific secondary-structural elements, evolution conserves other 
non-local interactions.
Disulfide bonds, which occur between cysteine residues in proteins, anchor
certain protein structures, and are thus highly conserved.
MRFy's model of non-local interactions could easily incorporate these pairwise
bonds; the probability of a disulfide bond between two cysteine residues would
be close to 1, while the probability for any other pairing would be zero, or
close to zero.
Other, less common interactions such as peroxide and diselenide bonds may also
lend themselves to this model.

Beyond specific chemical bonds, we may consider any structural core that appears
to be highly conserved within a homologous group of proteins to be a candidate 
for our model of a Markov random field, encompassing non-local interactions.
The main prerequisite for such an extension would be the availability of 
adequate training data to build a model of conditional probabilities for the
non-local interactions in question.
%
%


\addcontentsline {toc}{chapter}{Bibliography}


\bibliographystyle{alpha}
\bibliography{thesis}

\newcommand{\etalchar}[1]{$^{#1}$}
\begin{thebibliography}{FMSB{\etalchar{+}}06}

\bibitem[AGM{\etalchar{+}}90]{Altschul:1990dw}
S~F Altschul, W~Gish, W~Miller, E~W Myers, and D~J Lipman.
\newblock {Basic local alignment search tool.}
\newblock {\em Journal of Molecular Biology}, 215(3):403--410, October 1990.

\bibitem[AHB{\etalchar{+}}04]{Andreeva:2004ic}
Antonina Andreeva, Dave Howorth, Steven~E Brenner, Tim J~P Hubbard, Cyrus
  Chothia, and Alexey~G Murzin.
\newblock {SCOP database in 2004: refinements integrate structure and sequence
  family data.}
\newblock {\em Nucleic Acids Research}, 32(Database issue):D226--9, January
  2004.

\bibitem[AMS{\etalchar{+}}97]{Altschul:1997tl}
S~F Altschul, T~L Madden, A~A Sch{\"a}ffer, J~Zhang, Z~Zhang, W~Miller, and D~J
  Lipman.
\newblock {Gapped BLAST and PSI-BLAST: a new generation of protein database
  search programs.}
\newblock {\em Nucleic Acids Research}, 25(17):3389--3402, September 1997.

\bibitem[BAD03]{Beck:2003p1743}
D~Beck, R~Armen, and V~Daggett.
\newblock {A consensus view of fold space: Combining SCOP, CATH, and the Dali
  Domain Dictionary}.
\newblock {\em Protein Science}, 2003.

\bibitem[BBB{\etalchar{+}}00]{Berman:2000hl}
H~M Berman, T~N Bhat, P~E Bourne, Z~Feng, G~Gilliland, H~Weissig, and
  J~Westbrook.
\newblock {The Protein Data Bank and the challenge of structural genomics.}
\newblock {\em Nature structural biology}, 7 Suppl:957--959, November 2000.

\bibitem[BBC99]{BornbergBauer:1999wu}
E~Bornberg-Bauer and H~S Chan.
\newblock {Modeling evolutionary landscapes: mutational stability, topology,
  and superfunnels in sequence space}.
\newblock In {\em Proceedings of the National {\ldots}}, 1999.

\bibitem[BCHM94]{Baldi:1994vc}
P~Baldi, Y~Chauvin, T~Hunkapiller, and M~A McClure.
\newblock {Hidden Markov models of biological primary sequence information.}
\newblock {\em Proceedings of the National Academy of Science},
  91(3):1059--1063, February 1994.

\bibitem[BCM{\etalchar{+}}01]{Bradley:2001tj}
P~Bradley, L~Cowen, M~Menke, J~King, and B~Berger.
\newblock {BETAWRAP: successful prediction of parallel β-helices from primary
  sequence reveals an association with many microbial pathogens}.
\newblock {\em Proceedings of the National Academy of Science},
  98(26):14819--14824, 2001.

\bibitem[BKW{\etalchar{+}}77]{Bernstein:1977un}
F~C Bernstein, T~F Koetzle, G~J Williams, EE~Meyer, M~D Brice, J~R Rodgers,
  O~Kennard, T~Shimanouchi, and M~Tasumi.
\newblock {The Protein Data Bank: a computer-based archival file for
  macromolecular structures}.
\newblock {\em Journal of Molecular Biology}, 112:535, 1977.

\bibitem[BL98]{Berger:1998vv}
B~Berger and T~Leighton.
\newblock {Protein folding in the hydrophobic-hydrophilic (HP) model is
  NP-complete}.
\newblock {\em Journal of Computational Biology}, 1998.

\bibitem[BPSW70]{Baum:1970tv}
L~E Baum, T~Petrie, G~Soules, and N~Weiss.
\newblock {A maximization technique occurring in the statistical analysis of
  probabilistic functions of Markov chains}.
\newblock {\em The annals of mathematical statistics}, 1970.

\bibitem[BSA{\etalchar{+}}12]{Boratyn:2012er}
Grzegorz~M Boratyn, Alejandro~A Sch{\"a}ffer, Richa Agarwala, Stephen~F
  Altschul, David~J Lipman, and Thomas~L Madden.
\newblock {Domain enhanced lookup time accelerated BLAST.}
\newblock {\em Biology direct}, 7:12, 2012.

\bibitem[BSL09]{Berbalk:2009tn}
C~Berbalk, C~S Schwaiger, and P~Lackner.
\newblock {Accuracy analysis of multiple structure alignments}.
\newblock {\em Protein Science}, 2009.

\bibitem[CBM{\etalchar{+}}02]{Cowen:2002p588}
L~Cowen, P~Bradley, M~Menke, J~King, and B~Berger.
\newblock {Predicting the beta-helix fold from protein sequence data}.
\newblock {\em Journal of Computational Biology}, 9(2):261--276, 2002.

\bibitem[CC12]{Cao:2012ex}
Mengfei Cao and Lenore~J Cowen.
\newblock {Remote homology detection on alpha-structural proteins using
  simulated evolution}.
\newblock In {\em BCB '12: Proceedings of the ACM Conference on Bioinformatics,
  Computational Biology and Biomedicine}. ACM Request Permissions, October
  2012.

\bibitem[Chu89]{Churchill:1989wj}
G~A Churchill.
\newblock {Stochastic models for heterogeneous DNA sequences}.
\newblock {\em Bulletin of mathematical biology}, 1989.

\bibitem[CK06]{Choi:2006cv}
In-Geol Choi and Sung-Hou Kim.
\newblock {Evolution of protein structural classes and protein sequence
  families.}
\newblock {\em Proceedings of the National Academy of Science},
  103(38):14056--14061, September 2006.

\bibitem[CL86]{Chothia:1986tm}
C~Chothia and A~M Lesk.
\newblock {The relation between the divergence of sequence and structure in
  proteins.}
\newblock {\em The EMBO journal}, 5(4):823--826, April 1986.

\bibitem[CQKK04]{Cheek:2004vw}
S~Cheek, Y~Qi, S~S Krishna, and L~N Kinch.
\newblock {SCOPmap: automated assignment of protein structures to evolutionary
  superfamilies}.
\newblock {\em BMC Bioinformatics}, 2004.

\bibitem[CSL{\etalchar{+}}09]{Cuff:2009cz}
A~L Cuff, I~Sillitoe, T~Lewis, O~C Redfern, R~Garratt, J~Thornton, and C~A
  Orengo.
\newblock {The CATH classification revisited--architectures reviewed and new
  ways to characterize structural divergence in superfamilies}.
\newblock {\em Nucleic Acids Research}, 37(Database):D310--D314, January 2009.

\bibitem[CSX06]{Chi:2006bh}
Pin-Hao Chi, Chi-Ren Shyu, and Dong Xu.
\newblock {A fast SCOP fold classification system using content-based E-Predict
  algorithm.}
\newblock {\em BMC Bioinformatics}, 7:362, 2006.

\bibitem[Dav91]{Davis:1991vx}
L~Davis.
\newblock {Bit-climbing, representational bias, and test suite design}.
\newblock In R~K Belew and L~B Booker, editors, {\em Proceedings of the Fourth
  International Conference on Genetic Algorithms}, pages 18--23, San Mateo, CA,
  1991.

\bibitem[DBR97]{Dalal:1997wl}
S~Dalal, S~Balasubramanian, and L~Regan.
\newblock {Protein alchemy: changing beta-sheet into alpha-helix.}
\newblock {\em Nature structural biology}, 4(7):548--552, July 1997.

\bibitem[DC97]{Dill:1997vn}
K~A Dill and H~S Chan.
\newblock {From Levinthal to pathways to funnels.}
\newblock {\em Nature structural biology}, 4(1):10--19, January 1997.

\bibitem[DGR12]{Daniels:2012cm}
Noah~M Daniels, Andrew Gallant, and Norman Ramsey.
\newblock {Experience report: Haskell in computational biology}.
\newblock In {\em Proceedings of the 17th ACM {\ldots}}. ACM Request
  Permissions, September 2012.

\bibitem[DHBC12]{Daniels:2012dg}
Noah~M Daniels, Raghavendra Hosur, Bonnie Berger, and Lenore~J Cowen.
\newblock {SMURFLite: combining simplified Markov random fields with simulated
  evolution improves remote homology detection for beta-structural proteins
  into the twilight zone.}
\newblock {\em Bioinformatics}, 28(9):1216--1222, May 2012.

\bibitem[DKCM11]{Daniels:2011dc}
Noah Daniels, Anoop Kumar, Lenore Cowen, and Matt Menke.
\newblock {Touring Protein Space with Matt.}
\newblock {\em IEEE/ACM transactions on computational biology and
  bioinformatics / IEEE, ACM}, April 2011.

\bibitem[DLC02]{Dasgupta:2002wj}
N~Dasgupta, S~Lin, and L~Carin.
\newblock {Sequential modeling for identifying CpG island locations in human
  genome}.
\newblock {\em Signal Processing Letters}, 2002.

\bibitem[DNC12]{Daniels:2012jr}
Noah~M Daniels, Shilpa Nadimpalli, and Lenore~J Cowen.
\newblock {Formatt: Correcting protein multiple structural alignments by
  incorporating sequence alignment.}
\newblock {\em BMC Bioinformatics}, 13(1):259, 2012.

\bibitem[DSSD00]{Dinner:2000vu}
A~R Dinner, A~Sali, L~J Smith, and C~M Dobson.
\newblock {Understanding protein folding via free-energy surfaces from theory
  and experiment}.
\newblock {\em Trends in Biochemical Sciences}, 2000.

\bibitem[Dun06]{DunbrackJr:2006dx}
Roland~L Dunbrack, Jr.
\newblock {Sequence comparison and protein structure prediction}.
\newblock {\em Current Opinion in Structural Biology}, 16(3):374--384, June
  2006.

\bibitem[Edd98]{Eddy:1998ut}
S~R Eddy.
\newblock {Profile hidden Markov models}.
\newblock {\em Bioinformatics}, 14(9):755--763, October 1998.

\bibitem[ES99]{Elofsson:1999tj}
A~Elofsson and E~L Sonnhammer.
\newblock {A comparison of sequence and structure protein domain families as a
  basis for structural genomics.}
\newblock {\em Bioinformatics}, 15(6):480--500, June 1999.

\bibitem[FMSB{\etalchar{+}}06]{Finn:2006ez}
Robert~D Finn, Jaina Mistry, Benjamin Schuster-B{\"o}ckler, Sam
  Griffiths-Jones, Volker Hollich, Timo Lassmann, Simon Moxon, Mhairi Marshall,
  Ajay Khanna, Richard Durbin, Sean~R Eddy, Erik L~L Sonnhammer, and Alex
  Bateman.
\newblock {Pfam: clans, web tools and services.}
\newblock {\em Nucleic Acids Research}, 34(Database issue):D247--51, January
  2006.

\bibitem[FTM{\etalchar{+}}08]{Finn:2008iw}
Robert~D Finn, John Tate, Jaina Mistry, Penny~C Coggill, Stephen~John Sammut,
  Hans-Rudolf Hotz, Goran Ceric, Kristoffer Forslund, Sean~R Eddy, Erik L~L
  Sonnhammer, and Alex Bateman.
\newblock {The Pfam protein families database.}
\newblock {\em Nucleic Acids Research}, 36(Database issue):D281--8, January
  2008.

\bibitem[GL98]{Gerstein:1998p1487}
M~Gerstein and M~Levitt.
\newblock {Comprehensive assessment of automatic structural alignment against a
  manual standard, the Scop Classification of Proteins}.
\newblock {\em Protein Science}, 1998.

\bibitem[GLA{\etalchar{+}}07]{Greene:2007iu}
Lesley~H Greene, Tony~E Lewis, Sarah Addou, Alison Cuff, Tim Dallman, Mark
  Dibley, Oliver Redfern, Frances Pearl, Rekha Nambudiry, Adam Reid, Ian
  Sillitoe, Corin Yeats, Janet~M Thornton, and Christine~A Orengo.
\newblock {The CATH domain structure database: new protocols and classification
  levels give a more comprehensive resource for exploring evolution.}
\newblock {\em Nucleic Acids Research}, 35(Database issue):D291--7, January
  2007.

\bibitem[GMB96]{Gibrat:1996uy}
J~F Gibrat, T~Madej, and S~H Bryant.
\newblock {Surprising similarities in structure comparison.}
\newblock {\em Current Opinion in Structural Biology}, 6(3):377--385, June
  1996.

\bibitem[GVSD02]{Getz:2002we}
Gad Getz, Michele Vendruscolo, David Sachs, and Eytan Domany.
\newblock {Automated assignment of SCOP and CATH protein structure
  classifications from FSSP scores.}
\newblock {\em Proteins: Structure, Function, and Bioinformatics},
  46(4):405--415, March 2002.

\bibitem[GW09]{Gopalakrishnan:2009wx}
V~Gopalakrishnan and P~Weigele.
\newblock {Conditional graphical models for protein structural motif
  recognition}.
\newblock {\em Journal of Computational Biology}, 2009.

\bibitem[HBB{\etalchar{+}}06]{Hulo:2006du}
Nicolas Hulo, Amos Bairoch, Virginie Bulliard, Lorenzo Cerutti, Edouard
  De~Castro, Petra~S Langendijk-Genevaux, Marco Pagni, and Christian J~A
  Sigrist.
\newblock {The PROSITE database.}
\newblock {\em Nucleic Acids Research}, 34(Database issue):D227--30, January
  2006.

\bibitem[Heu99]{Heun:1999wr}
V~Heun.
\newblock {Approximate protein folding in the HP side chain model on extended
  cubic lattices}.
\newblock {\em Algorithms-ESA'99}, 1999.

\bibitem[HH92]{Henikoff:1992tk}
S~Henikoff and J~G Henikoff.
\newblock {Amino acid substitution matrices from protein blocks.}
\newblock {\em Proceedings of the National Academy of Science},
  89(22):10915--10919, November 1992.

\bibitem[HJ99]{Hadley:1999p1580}
C~Hadley and D~Jones.
\newblock {A systematic comparison of protein structure classifications: SCOP,
  CATH and FSSP}.
\newblock {\em Structure}, 1999.

\bibitem[HK96]{Hughey:1996ub}
R~Hughey and A~Krogh.
\newblock {Hidden Markov models for sequence analysis: extension and analysis
  of the basic method.}
\newblock {\em Computer applications in the biosciences : CABIOS},
  12(2):95--107, April 1996.

\bibitem[HLKT86]{Huber:1986ws}
R~Huber, T~A Langworthy, H~K{\"o}nig, and M~Thomm.
\newblock {\emph{Thermotoga maritima} sp. nov. represents a new genus of unique
  extremely thermophilic eubacteria growing up to 90 C}.
\newblock {\em Archives of Microbiology}, 144:324--333, 1986.

\bibitem[HM05]{Hentenryck:2009vn}
P~V Hentenryck and L~Michel.
\newblock {\em {Constraint-based local search}}.
\newblock The MIT Press, Cambridge, MA, 2005.

\bibitem[HP00]{Holm:2000vh}
L~Holm and J~Park.
\newblock {DaliLite workbench for protein structure comparison.}
\newblock {\em Bioinformatics}, 16(6):566--567, June 2000.

\bibitem[HPM{\etalchar{+}}02]{Harrison:2002wv}
Andrew Harrison, Frances Pearl, Richard Mott, Janet Thornton, and Christine
  Orengo.
\newblock {Quantifying the similarities within fold space.}
\newblock {\em Journal of Molecular Biology}, 323(5):909--926, November 2002.

\bibitem[HR77]{Holland:1977hl}
John~H Holland and Judith~S Reitman.
\newblock {Cognitive systems based on adaptive algorithms}.
\newblock {\em ACM SIGART Bulletin}, (63):49, June 1977.

\bibitem[HS96]{Holm:1996tv}
L~Holm and C~Sander.
\newblock {The FSSP database: fold classification based on structure-structure
  alignment of proteins.}
\newblock {\em Nucleic Acids Research}, 24(1):206--209, January 1996.

\bibitem[HS98]{Holm:1998um}
L~Holm and C~Sander.
\newblock {Touring protein fold space with Dali/FSSP.}
\newblock {\em Nucleic Acids Research}, 26(1):316--319, January 1998.

\bibitem[HVS06]{Holland:2006wi}
T~A Holland, S~Veretnik, and I~N Shindyalov.
\newblock {Partitioning protein structures into domains: why is it so
  difficult?}
\newblock {\em Journal of Molecular Biology}, 2006.

\bibitem[Joh06]{Johnson:2006tt}
Stephen Johnson.
\newblock {Remote Protein Homology Detection Using Hidden Markov Models}.
\newblock {\em PhD thesis, Washington University in St. Louis}, pages 1--106,
  October 2006.

\bibitem[Jon97]{Jones:1997ve}
D~T Jones.
\newblock {Progress in protein structure prediction.}
\newblock {\em Current Opinion in Structural Biology}, 7(3):377--387, June
  1997.

\bibitem[Jon99]{Jones:1999im}
D~T Jones.
\newblock {GenTHREADER: an efficient and reliable protein fold recognition
  method for genomic sequences.}
\newblock {\em Journal of Molecular Biology}, 287(4):797--815, April 1999.

\bibitem[JTT92]{Jones:1992wx}
D~T Jones, W~R Taylort, and J~M Thornton.
\newblock {A new approach to protein fold recognition}.
\newblock {\em Nature}, 1992.

\bibitem[JVP06]{Jayachandran:2006ur}
G~Jayachandran, V~Vishal, and V~S Pande.
\newblock {Using massively parallel simulation and Markovian models to study
  protein folding: Examining the dynamics of the villin headpiece}.
\newblock {\em The Journal of chemical physics}, 2006.

\bibitem[Kab76]{Kabsch:1976tp}
W~Kabsch.
\newblock {A solution for the best rotation to relate two sets of vectors}.
\newblock {\em Acta Crystallographica Section A: Crystal Physics}, 1976.

\bibitem[KBH98]{Karplus:1998ub}
K~Karplus, C~Barrett, and Richard Hughey.
\newblock {Hidden Markov models for detecting remote protein homologies}.
\newblock {\em Bioinformatics}, 14:846--856, 1998.

\bibitem[KBM{\etalchar{+}}94]{Krogh:1994hv}
Anders Krogh, Michael Brown, I~Saira Mian, Kimmen Sj{\"o}lander, and David
  Haussler.
\newblock {Hidden Markov Models in Computational Biology}.
\newblock {\em Journal of Molecular Biology}, 235(5):1501--1531, February 1994.

\bibitem[KC09]{Kumar:2009tp}
Anoop Kumar and Lenore Cowen.
\newblock {Augmented training of hidden Markov models to recognize remote
  homologs via simulated evolution}.
\newblock {\em Bioinformatics}, 25(13):1602--1608, 2009.

\bibitem[KC10]{Kumar:2010wv}
Anoop Kumar and Lenore Cowen.
\newblock {Recognition of beta-structural motifs using hidden Markov models
  trained with simulated evolution}.
\newblock {\em Bioinformatics}, 26(ISMB 2010):i287--i293, 2010.

\bibitem[KKL05]{Kolodny:2005p1181}
R~Kolodny, P~Koehl, and M~Levitt.
\newblock {Comprehensive evaluation of protein structure alignment methods:
  scoring by geometric measures}.
\newblock {\em Journal of Molecular Biology}, 2005.

\bibitem[KPH06]{Kolodny:2006jv}
Rachel Kolodny, Donald Petrey, and Barry Honig.
\newblock {Protein structure comparison: implications for the nature of 'fold
  space', and structure and function prediction.}
\newblock {\em Current Opinion in Structural Biology}, 16(3):393--398, June
  2006.

\bibitem[KV83]{Kirkpatrick:1983wa}
S~Kirkpatrick and M~P Vecchi.
\newblock {Optimization by simmulated annealing}.
\newblock {\em Science (New York, N.Y.)}, 1983.

\bibitem[LBB12]{Loh:2012br}
Po-Ru Loh, Michael Baym, and Bonnie Berger.
\newblock {Compressive genomics.}
\newblock {\em Nature biotechnology}, 30(7):627--630, July 2012.

\bibitem[Lev69]{Levinthal:1969vj}
C~Levinthal.
\newblock {How to Fold Gracefully}.
\newblock {\em Mossbauer Spectroscopy in Biological Systems: Proceedings of a
  meeting held at Allerton House, Monticello, Illinois}, pages 22--24, 1969.

\bibitem[LS80]{Lifson:1980vd}
S~Lifson and C~Sander.
\newblock {Specific recognition in the tertiary structure of [beta]-sheets of
  proteins}.
\newblock {\em Journal of Molecular Biology}, 139:627--629, 1980.

\bibitem[LS96]{Lathrop:1996gm}
R~H Lathrop and T~F Smith.
\newblock {Global optimum protein threading with gapped alignment and empirical
  pair score functions.}
\newblock {\em Journal of Molecular Biology}, 255(4):641--665, February 1996.

\bibitem[MBA05]{MarchlerBauer:2005uv}
A~Marchler-Bauer and J~B Anderson.
\newblock {CDD: a Conserved Domain Database for protein classification}.
\newblock {\em Nucleic Acids Research}, 2005.

\bibitem[MBB95]{Madej:1995wi}
T~Madej, M~S Boguski, and S~H Bryant.
\newblock {Threading analysis suggests that the obese gene product may be a
  helical cytokine.}
\newblock {\em FEBS letters}, 373(1):13--18, October 1995.

\bibitem[MBC08]{Menke:2008wu}
M~Menke, B~Berger, and L~Cowen.
\newblock {Matt: local flexibility aids protein multiple structure alignment}.
\newblock {\em PLoS Computational Biology}, 2008.

\bibitem[MBC10]{Menke:2010ti}
M~Menke, B~Berger, and L~Cowen.
\newblock {Markov random fields reveal an N-terminal double beta-propeller
  motif as part of a bacterial hybrid two-component sensor system}.
\newblock {\em Proceedings of the National Academy of Science}, 2010.

\bibitem[MBH95]{Murzin:1995uh}
A~Murzin, S~Brenner, and T~Hubbard.
\newblock {SCOP: a structural classification of proteins database for the
  investigation of sequences and structures}.
\newblock {\em Journal of Molecular Biology}, 247(4):536--540, 1995.

\bibitem[MBJ00]{McGuffin:2000wx}
L~J McGuffin, K~Bryson, and D~T Jones.
\newblock {The PSIPRED protein structure prediction server}.
\newblock {\em Bioinformatics}, 2000.

\bibitem[MCBR98]{Matagne:1998uu}
A~Matagne, E~W Chung, L~J Ball, and S~E Radford.
\newblock {The origin of the α-domain intermediate in the folding of hen
  lysozyme}.
\newblock {\em Journal of Molecular Biology}, 1998.

\bibitem[MDBO98]{Mizuguchi:1998fp}
K~Mizuguchi, C~M Deane, T~L Blundell, and J~P Overington.
\newblock {HOMSTRAD: a database of protein structure alignments for homologous
  families.}
\newblock {\em Protein Science}, 7(11):2469--2471, November 1998.

\bibitem[Men09]{Menke:2009}
M~Menke.
\newblock {Computational approaches to modeling the conserved structural core
  among distantly homolgous proteins}.
\newblock {\em Massachusetts. PhD Thesis, MIT}, 2009.

\bibitem[MMP{\etalchar{+}}05]{McDonnell:2005p497}
A~McDonnell, M~Menke, N~Palmer, J~King, and L~Cowen.
\newblock {Comparative modeling of mainly-beta proteins by profile wrapping}.
\newblock {\em broad.mit.edu}, 2005.

\bibitem[Mou06]{Moult:2006tn}
J~Moult.
\newblock {Rigorous performance evaluation in protein structure modelling and
  implications for computational biology}.
\newblock {\em Transactions of the Royal Society: Biological Sciences}, 2006.

\bibitem[MSE96]{McClure:1996uv}
M~A McClure, C~Smith, and P~Elton.
\newblock {Parameterization studies for the SAM and HMMER methods of hidden
  Markov model generation.}, 1996.

\bibitem[NAL09]{Naamati:2009eg}
Guy Naamati, Manor Askenazi, and Michal Linial.
\newblock {ClanTox: a classifier of short animal toxins.}
\newblock {\em Nucleic Acids Research}, 37(Web Server issue):W363--8, July
  2009.

\bibitem[OJT94]{Orengo:1994fh}
C~A Orengo, D~T Jones, and J~M Thornton.
\newblock {Protein superfamilies and domain superfolds.}
\newblock {\em Nature}, 372(6507):631--634, December 1994.

\bibitem[OMJ{\etalchar{+}}97]{Orengo:1997vy}
C~A Orengo, A~D Michie, S~Jones, D~T Jones, M~B Swindells, and J~M Thornton.
\newblock {CATH--a hierarchic classification of protein domain structures.}
\newblock {\em Structure}, 5(8):1093--1108, August 1997.

\bibitem[ORV99]{Olmea:1999ug}
O~Olmea, Burkhard Rost, and Alfonso Valencia.
\newblock {Effective use of sequence correlation and conservation in fold
  recognition}.
\newblock {\em Journal of Molecular Biology}, 293(5):1221--1239, 1999.

\bibitem[OSO09]{Ortiz:2009wx}
A~R Ortiz, CEM Strauss, and O~Olmea.
\newblock {MAMMOTH (matching molecular models obtained from theory): an
  automated method for model comparison}.
\newblock {\em Protein Science}, 2009.

\bibitem[PBB{\etalchar{+}}03]{Pearl:2003wb}
F~M~G Pearl, C~F Bennett, J~E Bray, A~P Harrison, N~Martin, A~Shepherd,
  I~Sillitoe, J~Thornton, and C~A Orengo.
\newblock {The CATH database: an extended protein family resource for
  structural and functional genomics.}
\newblock {\em Nucleic Acids Research}, 31(1):452--455, January 2003.

\bibitem[Pea88]{Pearl:1988wz}
Judea Pearl.
\newblock {\em {Probabilistic Reasoning in Intelligent Systems}}.
\newblock Networks of Plausble Inference. Morgan Kaufmann Pub, 1988.

\bibitem[PLG12]{Pethica:2012ds}
Ralph~B Pethica, Michael Levitt, and Julian Gough.
\newblock {Evolutionarily consistent families in SCOP: sequence, structure and
  function}.
\newblock {\em BMC Structural Biology}, 12(1):27, 2012.

\bibitem[PX11a]{Peng:2011vk}
J~Peng and J~Xu.
\newblock {A multiple‐template approach to protein threading}.
\newblock {\em Proteins: Structure, Function, and Bioinformatics}, 2011.

\bibitem[PX11b]{Peng:2011wx}
J~Peng and J~Xu.
\newblock {RaptorX: exploiting structure information for protein alignment by
  statistical inference}.
\newblock {\em Proteins: Structure, Function, and Bioinformatics}, 2011.

\bibitem[Rab89]{Rabiner:1989vx}
L~R Rabiner.
\newblock {A tutorial on hidden Markov models and selected applications in
  speech recognition}.
\newblock In {\em Proceedings of the IEEE}, 1989.

\bibitem[RBHS12]{Remmert:2012cj}
Michael Remmert, Andreas Biegert, Andreas Hauser, and Johannes S{\"o}ding.
\newblock {HHblits: lightning-fast iterative protein sequence searching by
  HMM-HMM alignment.}
\newblock {\em Nature methods}, 9(2):173--175, February 2012.

\bibitem[RHD07]{Redfern:2007p1836}
O~Redfern, A~Harrison, and T~Dallman.
\newblock {CATHEDRAL: a fast and effective algorithm to predict folds and
  domain boundaries from multidomain protein structures}.
\newblock {\em PLoS Computational Biology}, 2007.

\bibitem[Ros99]{Rost:1999taa}
B~Rost.
\newblock {Twilight zone of protein sequence alignments.}
\newblock {\em Protein Engineering}, 12(2):85--94, February 1999.

\bibitem[Ros02]{Rost:2002wx}
B~Rost.
\newblock {Did evolution leap to create the protein universe?}
\newblock {\em Current Opinion in Structural Biology}, 2002.

\bibitem[RSWD09]{Rocha:2009jm}
Jairo Rocha, Joan Segura, Richard~C Wilson, and Swagata Dasgupta.
\newblock {Flexible structural protein alignment by a sequence of local
  transformations.}
\newblock {\em Bioinformatics}, 25(13):1625--1631, July 2009.

\bibitem[SB98]{Shindyalov:1998wn}
I~N Shindyalov and P~E Bourne.
\newblock {Protein structure alignment by incremental combinatorial extension
  (CE) of the optimal path.}
\newblock {\em Protein Engineering}, 1998.

\bibitem[SB00]{Shindyalov:2000wg}
I~N Shindyalov and P~E Bourne.
\newblock {An alternative view of protein fold space}.
\newblock {\em Proteins: Structure, Function, and Bioinformatics}, 2000.

\bibitem[SBL05]{Soding:2005fa}
Johannes S{\"o}ding, Andreas Biegert, and Andrei~N Lupas.
\newblock {The HHpred interactive server for protein homology detection and
  structure prediction.}
\newblock {\em Nucleic Acids Research}, 33(Web Server issue):W244--8, July
  2005.

\bibitem[SDD{\etalchar{+}}07]{Shaw:2007cr}
David~E Shaw, Martin~M Deneroff, Ron~O Dror, Jeffrey~S Kuskin, Richard~H
  Larson, John~K Salmon, Cliff Young, Brannon Batson, Kevin~J Bowers, Jack~C
  Chao, Michael~P Eastwood, Joseph Gagliardo, J~P Grossman, C~Richard Ho,
  Douglas~J Ierardi, Istv{\'a}n Kolossv{\'a}ry, John~L Klepeis, Timothy Layman,
  Christine McLeavey, Mark~A Moraes, Rolf Mueller, Edward~C Priest, Yibing
  Shan, Jochen Spengler, Michael Theobald, Brian Towles, and Stanley~C Wang.
\newblock {Anton, a special-purpose machine for molecular dynamics simulation}.
\newblock In {\em ISCA '07: Proceedings of the 34th annual international
  symposium on Computer architecture}. ACM Request Permissions, June 2007.

\bibitem[SHJ97]{Smyth:1997ty}
P~Smyth, D~Heckerman, and M~I Jordan.
\newblock {Probabilistic independence networks for hidden Markov probability
  models.}
\newblock {\em Neural computation}, 9(2):227--269, February 1997.

\bibitem[SKG09]{Sadreyev:2009p2118}
Ruslan~I Sadreyev, Bong-Hyun Kim, and Nick~V Grishin.
\newblock {Discrete-continuous duality of protein structure space}.
\newblock {\em Current Opinion in Structural Biology}, 19(3):321--328, June
  2009.

\bibitem[SKP08]{Sonego:2008uy}
Paolo Sonego, Andr{\'a}s Kocsor, and S{\'a}ndor Pongor.
\newblock {ROC analysis: applications to the classification of biological
  sequences and 3D structures.}
\newblock {\em Briefings in bioinformatics}, 9(3):198--209, May 2008.

\bibitem[SMP08]{Simonsen:2008p946}
M~Simonsen, T~Mailund, and CNS Pedersen.
\newblock {Rapid Neighbour-Joining}.
\newblock {\em Lect Notes Comput Sci}, 5251:113:122, 2008.

\bibitem[SMW95]{Sayle:1995ux}
RA~Sayle and EJ~Milner-White.
\newblock {RASMOL: biomolecular graphics for all}.
\newblock {\em Trends in Biochemical Sciences}, 20(9):374, 1995.

\bibitem[S{\"o}d05]{Soding:2005ff}
Johannes S{\"o}ding.
\newblock {Protein homology detection by HMM-HMM comparison}.
\newblock {\em Bioinformatics}, 21(7):951--960, March 2005.

\bibitem[SSH{\etalchar{+}}92]{Svensson:1992}
Birte Svensson, Ib~Svendsen, Peter Hoejrup, Peter Roepstorff, Svend Ludvigsen,
  and Flemming~M Poulsen.
\newblock {Primary structure of barwin: a barley seed protein closely related
  to the C-terminal domain of proteins encoded by wound-induced plant genes}.
\newblock {\em Biochemistry}, 1992.

\bibitem[ST02]{Steward:2002wz}
R~E Steward and J~M Thorton.
\newblock {Prediction of strand pairing in antiparallel and parallel β-sheets
  using information theory}.
\newblock {\em Proteins: Structure, Function, and Bioinformatics}, 48:178--191,
  2002.

\bibitem[STG{\etalchar{+}}06]{Sam:2006te}
V~Sam, C~H Tai, J~Garnier, J~F Gibrat, and B~Lee.
\newblock {ROC and confusion analysis of structure comparison methods identify
  the main causes of divergence from manual protein classification}.
\newblock {\em BMC Bioinformatics}, 2006.

\bibitem[SW81]{Smith:1981up}
T~F Smith and M~S Waterman.
\newblock {Identification of common molecular subsequences.}
\newblock {\em Journal of Molecular Biology}, 147(1):195--197, March 1981.

\bibitem[SWS07]{Suhrer:2007hg}
Stefan~J Suhrer, Markus Wiederstein, and Manfred~J Sippl.
\newblock {QSCOP--SCOP quantified by structural relationships.}
\newblock {\em Bioinformatics}, 23(4):513--514, February 2007.

\bibitem[TGG{\etalchar{+}}08]{Tai:2008p1636}
C~Tai, J~Garnier, J~Gibrat, B~Lee, and P~Munson.
\newblock {Towards an automatic classification of protein structural domains
  based on {\ldots}}.
\newblock {\em BMC Bioinformatics}, 2008.

\bibitem[TKMN99]{Tsai:1999we}
C~J Tsai, S~Kumar, B~Ma, and R~Nussinov.
\newblock {Folding funnels, binding funnels, and protein function}.
\newblock {\em Protein Science}, 1999.

\bibitem[TMC{\etalchar{+}}12]{Tirosh:2012iq}
Y~Tirosh, N~Morpurgo, M~Cohen, M~Linial, and G~Bloch.
\newblock {Raalin, a transcript enriched in the honey bee brain, is a remnant
  of genomic rearrangement in \emph{Hymenoptera}.}
\newblock {\em Insect molecular biology}, 21(3):305--318, June 2012.

\bibitem[TRBK08]{Thomas:2008uw}
J~Thomas, N~Ramakrishnan, and C~Bailey-Kellogg.
\newblock {Graphical models of residue coupling in protein families}.
\newblock {\em IEEE/ACM transactions on computational biology and
  bioinformatics / IEEE, ACM}, 2008.

\bibitem[VBAS04]{Veretnik:2004bn}
Stella Veretnik, Philip~E Bourne, Nickolai~N Alexandrov, and Ilya~N Shindyalov.
\newblock {Toward consistent assignment of structural domains in proteins.}
\newblock {\em Journal of Molecular Biology}, 339(3):647--678, June 2004.

\bibitem[VC06]{Vuk:2006wv}
M~Vuk and T~Curk.
\newblock {ROC curve, lift chart and calibration plot}.
\newblock {\em Metodoloski zvezki}, 2006.

\bibitem[Vit67]{Viterbi:1967hq}
A~Viterbi.
\newblock {Error bounds for convolutional codes and an asymptotically optimum
  decoding algorithm}.
\newblock {\em IEEE Transactions on Information Theory}, 13(2):260--269, April
  1967.

\bibitem[VWLW05]{VanWalle:2005wp}
I~Van~Walle, I~Lasters, and L~Wyns.
\newblock {SABmark---a benchmark for sequence alignment that covers the entire
  known fold space}.
\newblock {\em Bioinformatics}, 2005.

\bibitem[VYB09]{Valas:2009p1477}
R~Valas, S~Yang, and P~Bourne.
\newblock {Nothing about protein structure classification makes sense except in
  the light of evolution}.
\newblock {\em Current Opinion in Structural Biology}, 2009.

\bibitem[Wan12]{Wang:2012wq}
Shen Wang.
\newblock {\emph{Protein structure alignment beyond spatial proximity}}.
\newblock In {\em Protein structure alignment beyond spatial proximity}, Long
  Beach, CA, July 2012.

\bibitem[WJ94]{WANG:1994jq}
LUSHENG WANG and TAO JIANG.
\newblock {On the Complexity of Multiple Sequence Alignment}.
\newblock {\em Journal of Computational Biology}, 1(4):337--348, January 1994.

\bibitem[WJ99]{Weiss:1999uu}
Y~Weiss and MI~Jordan.
\newblock {Loopy belief propagation for approximate inference: An empirical
  study}.
\newblock In {\em Proceedings of Uncertainty in AI}, 1999.

\bibitem[WKG00]{Wilson:2000ed}
C~A Wilson, J~Kreychman, and M~Gerstein.
\newblock {Assessing annotation transfer for genomics: quantifying the
  relations between protein sequence, structure and function through
  traditional and probabilistic scores.}
\newblock {\em Journal of Molecular Biology}, 297(1):233--249, March 2000.

\bibitem[WMS94]{White:1994ty}
J~V White, I~Muchnik, and T~F Smith.
\newblock {Modeling protein cores with Markov random fields.}
\newblock {\em Mathematical biosciences}, 124(2):149--179, December 1994.

\bibitem[WMV{\etalchar{+}}07]{Wilson:2007cm}
Derek Wilson, Martin Madera, Christine Vogel, Cyrus Chothia, and Julian Gough.
\newblock {The SUPERFAMILY database in 2007: families and functions.}
\newblock {\em Nucleic Acids Research}, 35(Database issue):D308--13, January
  2007.

\bibitem[WS04]{Wistrand:2004ia}
Markus Wistrand and Erik L~L Sonnhammer.
\newblock {Improving profile HMM discrimination by adapting transition
  probabilities.}
\newblock {\em Journal of Molecular Biology}, 338(4):847--854, May 2004.

\bibitem[WZ08]{Wu:2008vh}
S~Wu and Y~Zhang.
\newblock {MUSTER: improving protein sequence profile--profile alignments by
  using multiple sources of structure information}.
\newblock {\em Proteins: Structure, Function, and Bioinformatics}, 2008.

\bibitem[XLKX03]{Xu:2003p3417}
Jinbo Xu, Ming Li, Dongsup Kim, and Ying Xu.
\newblock {Raptor: Optimal Protein Threading By Linear Programming}.
\newblock {\em Journal of Bioinformatics and Computational Biology},
  1(1):95--117, 2003.

\bibitem[YFZZ11]{Yang:2011id}
Yuedong Yang, Eshel Faraggi, Huiying Zhao, and Yaoqi Zhou.
\newblock {Improving protein fold recognition and template-based modeling by
  employing probabilistic-based matching between predicted one-dimensional
  structural properties of query and corresponding native properties of
  templates.}
\newblock {\em Bioinformatics}, 27(15):2076--2082, August 2011.

\bibitem[ZB99]{Zhu:1999wr}
H~Zhu and W~Braun.
\newblock {Sequence specificity, statistical potentials, and three-dimensional
  structure prediction with self-correcting distance geometry calculations of
  beta-sheet formation in proteins.}
\newblock {\em Protein Science}, 8(2):326--342, 1999.

\bibitem[ZGS{\etalchar{+}}07]{Zemla:2007ds}
Adam Zemla, Brian Geisbrecht, Jason Smith, Marisa Lam, Bonnie Kirkpatrick, Mark
  Wagner, Tom Slezak, and Carol~Ecale Zhou.
\newblock {STRALCP--structure alignment-based clustering of proteins.}
\newblock {\em Nucleic Acids Research}, 35(22):e150, 2007.

\bibitem[ZS05]{Zhang:2005do}
Yang Zhang and Jeffrey Skolnick.
\newblock {TM-align: a protein structure alignment algorithm based on the
  TM-score.}
\newblock {\em Nucleic Acids Research}, 33(7):2302--2309, 2005.

\bibitem[ZTW{\etalchar{+}}09]{Zhang:2009ku}
Ying Zhang, Ines Thiele, Dana Weekes, Zhanwen Li, Lukasz Jaroszewski, Krzysztof
  Ginalski, Ashley~M Deacon, John Wooley, Scott~A Lesley, Ian~A Wilson,
  Bernhard Palsson, Andrei Osterman, and Adam Godzik.
\newblock {Three-dimensional structural view of the central metabolic network
  of Thermotoga maritima.}
\newblock {\em Science (New York, N.Y.)}, 325(5947):1544--1549, September 2009.

\end{thebibliography}

\begin{appendices}

\chapter*{Pairwise scores for $\beta$-structural proteins}

\rowcolors{2}{gray!25}{white}

\begin{small}

\begin{center}

\begin{sidewaystable*}[!t]
\caption{Pairwise scores (negative log of probability) for buried $\beta$-strands}
{\tiny 
\begin{tabular*}
{\textwidth}{@{\extracolsep{\fill}}llllllllllllllllllllll}\hline
 & A & C & D & E & F & G & H & I & K & L & M & N & P & Q & R & S & T & V & W & Y & X \\
A &   2.84 & 2.89 & 2.41 & 1.63 & 2.58 & 3.31 & 2.06 & 2.54 & 2.89 & 2.42 & 2.91 & 2.30 & 2.66 & 3.06 & 3.52 & 2.68 & 2.47 & 2.54 & 2.53 & 2.60 & 9.21 \\
C &   3.77 & 2.19 & 3.33 & 3.71 & 3.56 & 3.13 & 3.44 & 3.65 & 2.89 & 3.57 & 4.20 & 3.40 & 1.96 & 3.76 & 3.52 & 3.14 & 3.16 & 3.81 & 2.78 & 3.47 & 9.21 \\
D &   4.56 & 4.59 & 4.73 & 4.73 & 5.06 & 3.82 & 4.14 & 4.78 & 4.73 & 5.55 & 4.61 & 2.99 & 3.06 & 4.73 & 2.83 & 3.47 & 5.11 & 4.73 & 4.73 & 5.78 & 9.21 \\
E &   4.09 & 5.28 & 5.04 & 3.02 & 4.66 & 5.61 & 3.44 & 5.03 & 2.19 & 5.32 & 4.20 & 5.04 & 5.04 & 5.04 & 5.04 & 4.39 & 5.11 & 6.52 & 5.04 & 5.08 & 9.21 \\
F &   2.30 & 2.39 & 2.63 & 1.92 & 2.12 & 1.97 & 2.19 & 2.47 & 2.19 & 2.39 & 2.13 & 2.99 & 2.15 & 1.96 & 2.42 & 1.95 & 2.16 & 2.40 & 2.42 & 2.19 & 9.21 \\
G &   3.87 & 2.80 & 2.23 & 3.71 & 2.81 & 2.72 & 2.53 & 3.32 & 3.14 & 3.29 & 3.51 & 4.09 & 1.96 & 3.76 & 3.52 & 3.47 & 2.80 & 3.10 & 3.62 & 2.94 & 9.21 \\
H &   4.09 & 4.59 & 4.02 & 3.02 & 4.50 & 4.00 & 2.75 & 5.03 & 4.61 & 4.99 & 4.61 & 4.09 & 4.61 & 3.76 & 4.61 & 3.29 & 4.41 & 5.82 & 4.72 & 4.68 & 9.21 \\
I &   1.73 & 1.95 & 1.82 & 1.76 & 1.94 & 1.95 & 2.19 & 1.58 & 1.79 & 1.72 & 1.67 & 2.14 & 2.37 & 1.96 & 2.83 & 1.86 & 2.47 & 1.73 & 1.83 & 1.84 & 9.21 \\
K &   6.17 & 5.28 & 5.87 & 3.02 & 5.76 & 5.87 & 5.87 & 5.88 & 5.87 & 5.32 & 5.87 & 5.87 & 5.87 & 3.76 & 3.52 & 5.08 & 5.87 & 6.11 & 5.87 & 5.87 & 9.21 \\
L &   1.66 & 1.92 & 2.63 & 2.10 & 1.91 & 1.97 & 2.19 & 1.77 & 1.28 & 1.71 & 1.78 & 2.01 & 2.37 & 2.15 & 2.83 & 1.99 & 1.89 & 1.83 & 2.01 & 1.79 & 9.21 \\
M &   3.77 & 4.18 & 3.33 & 2.61 & 3.27 & 3.82 & 3.44 & 3.34 & 3.45 & 3.41 & 2.82 & 4.09 & 3.76 & 2.37 & 3.52 & 5.08 & 3.16 & 3.52 & 3.62 & 3.38 & 9.21 \\
N &   4.38 & 4.59 & 2.92 & 4.66 & 5.35 & 5.61 & 4.14 & 5.03 & 4.66 & 4.85 & 5.30 & 4.66 & 3.76 & 3.06 & 3.52 & 4.39 & 3.32 & 4.91 & 3.11 & 5.78 & 9.21 \\
P &   5.07 & 3.49 & 3.33 & 5.00 & 4.84 & 3.82 & 5.00 & 5.59 & 5.00 & 5.55 & 5.30 & 4.09 & 5.00 & 3.76 & 5.00 & 4.39 & 5.11 & 6.11 & 3.34 & 5.08 & 9.21 \\
Q &   5.48 & 5.28 & 5.00 & 5.00 & 4.66 & 5.61 & 4.14 & 5.19 & 2.89 & 5.32 & 3.92 & 3.40 & 3.76 & 5.00 & 3.52 & 3.98 & 5.11 & 5.01 & 4.03 & 5.78 & 9.21 \\
R &   6.17 & 5.28 & 3.33 & 5.23 & 5.35 & 5.61 & 5.23 & 6.29 & 2.89 & 6.24 & 5.30 & 4.09 & 5.23 & 3.76 & 5.23 & 5.08 & 3.72 & 5.01 & 4.03 & 4.17 & 9.21 \\
S &   3.77 & 3.34 & 2.41 & 3.02 & 3.31 & 4.00 & 2.35 & 3.76 & 2.89 & 3.84 & 5.30 & 3.40 & 3.06 & 2.66 & 3.52 & 2.44 & 5.11 & 4.12 & 3.11 & 3.83 & 9.21 \\
T &   3.53 & 3.34 & 4.02 & 3.71 & 3.51 & 3.31 & 3.44 & 4.34 & 3.64 & 3.71 & 3.36 & 2.30 & 3.76 & 3.76 & 2.14 & 5.08 & 3.72 & 3.47 & 4.03 & 3.38 & 9.21 \\
V &   1.50 & 1.88 & 1.54 & 3.02 & 1.64 & 1.50 & 2.75 & 1.50 & 1.79 & 1.55 & 1.61 & 1.79 & 2.66 & 1.56 & 1.32 & 1.99 & 1.37 & 1.36 & 2.08 & 1.62 & 9.21 \\
W &   3.97 & 3.34 & 4.03 & 4.03 & 4.15 & 4.51 & 4.14 & 4.09 & 4.03 & 4.22 & 4.20 & 2.48 & 2.37 & 3.06 & 2.83 & 3.47 & 4.41 & 4.57 & 3.34 & 3.58 & 9.21 \\
Y &   2.99 & 2.98 & 4.02 & 3.02 & 2.87 & 2.78 & 3.04 & 3.05 & 2.98 & 2.94 & 2.91 & 4.09 & 3.06 & 3.76 & 1.91 & 3.14 & 2.71 & 3.05 & 2.53 & 3.00 & 9.21 \\
X &   9.21 & 9.21 & 9.21 & 9.21 & 9.21 & 9.21 & 9.21 & 9.21 & 9.21 & 9.21 & 9.21 & 9.21 & 9.21 & 9.21 & 9.21 & 9.21 & 9.21 & 9.21 & 9.21 & 9.21 & 9.21 \\
\end{tabular*}}\\
\end{sidewaystable*}

\end{center}

\end{small}

\rowcolors{0}{white}{white}

\rowcolors{2}{gray!25}{white}

\begin{small}

\begin{center}

\begin{sidewaystable*}[!t]
\caption{Pairwise scores (negative log of probability) for exposed $\beta$-strands}
{\tiny 
\begin{tabular*}
{\textwidth}{@{\extracolsep{\fill}}llllllllllllllllllllll}\hline
 & A & C & D & E & F & G & H & I & K & L & M & N & P & Q & R & S & T & V & W & Y & X \\

A &  2.91 & 2.56 & 3.19 & 3.50 & 2.97 & 3.19 & 3.04 & 2.98 & 3.56 & 2.88 & 3.11 & 3.31 & 3.44 & 3.86 & 3.18 & 3.37 & 3.60 & 2.90 & 3.81 & 2.83 & 9.21 \\
C &  3.76 & 2.27 & 5.49 & 6.14 & 3.67 & 5.14 & 4.65 & 4.73 & 4.66 & 4.35 & 4.72 & 4.56 & 4.14 & 5.65 & 4.13 & 4.22 & 4.41 & 4.51 & 3.81 & 3.98 & 9.21 \\
D &  3.25 & 4.35 & 3.70 & 3.84 & 3.67 & 3.19 & 2.95 & 3.34 & 2.68 & 3.43 & 3.34 & 3.31 & 4.14 & 3.01 & 2.74 & 2.92 & 3.27 & 3.82 & 3.12 & 3.57 & 9.21 \\
E &  2.91 & 4.35 & 3.19 & 2.96 & 2.57 & 3.19 & 2.86 & 2.86 & 1.84 & 2.74 & 2.24 & 2.37 & 2.53 & 2.61 & 2.03 & 2.97 & 2.52 & 2.80 & 2.56 & 3.06 & 9.21 \\
F &  3.06 & 2.56 & 3.70 & 3.25 & 2.69 & 2.57 & 2.64 & 3.72 & 3.63 & 3.19 & 2.64 & 3.65 & 3.44 & 3.57 & 4.46 & 2.92 & 3.31 & 3.53 & 3.12 & 3.28 & 9.21 \\
G &  3.60 & 4.35 & 3.55 & 4.20 & 2.89 & 3.34 & 3.15 & 3.48 & 4.88 & 3.19 & 2.78 & 3.18 & 4.14 & 4.04 & 3.51 & 3.53 & 4.00 & 3.90 & 3.41 & 3.28 & 9.21 \\
H &  3.25 & 3.66 & 3.09 & 3.66 & 2.75 & 2.94 & 3.56 & 3.55 & 3.50 & 3.50 & 3.11 & 3.87 & 3.04 & 3.57 & 3.77 & 3.37 & 3.40 & 3.36 & 3.81 & 3.00 & 9.21 \\
I &  2.41 & 2.97 & 2.72 & 2.88 & 3.06 & 2.50 & 2.78 & 2.02 & 2.66 & 2.35 & 2.32 & 2.96 & 2.75 & 2.71 & 2.67 & 2.64 & 2.90 & 2.56 & 2.56 & 2.83 & 9.21 \\
K &  2.84 & 2.74 & 1.91 & 1.71 & 2.82 & 3.75 & 2.57 & 2.50 & 2.44 & 2.56 & 2.24 & 2.42 & 3.44 & 2.36 & 2.81 & 2.92 & 2.62 & 2.72 & 2.31 & 2.13 & 9.21 \\
L &  2.29 & 2.56 & 2.78 & 2.74 & 2.51 & 2.19 & 2.71 & 2.33 & 2.68 & 2.27 & 2.93 & 3.47 & 1.94 & 2.76 & 2.94 & 2.72 & 3.02 & 2.21 & 2.56 & 2.73 & 9.21 \\
M &  3.94 & 4.35 & 4.11 & 3.66 & 3.38 & 3.19 & 3.74 & 3.72 & 3.78 & 4.35 & 3.34 & 4.16 & 4.14 & 4.26 & 3.88 & 4.22 & 4.54 & 4.22 & 4.51 & 5.08 & 9.21 \\
N &  3.60 & 3.66 & 3.55 & 3.25 & 3.85 & 3.06 & 3.96 & 3.81 & 3.43 & 4.35 & 3.62 & 3.47 & 3.44 & 3.09 & 3.08 & 3.45 & 3.27 & 4.00 & 3.12 & 3.20 & 9.21 \\
P &  4.85 & 4.35 & 5.49 & 4.53 & 4.77 & 5.14 & 4.25 & 4.73 & 5.57 & 3.94 & 4.72 & 4.56 & 4.59 & 4.96 & 4.46 & 5.32 & 4.88 & 4.36 & 3.81 & 3.69 & 9.21 \\
Q &  3.76 & 4.35 & 2.85 & 3.10 & 3.38 & 3.53 & 3.27 & 3.17 & 2.97 & 3.25 & 3.34 & 2.69 & 3.44 & 2.47 & 3.36 & 3.24 & 2.82 & 3.08 & 2.71 & 3.06 & 9.21 \\
R &  2.66 & 2.41 & 2.16 & 2.10 & 3.85 & 2.57 & 3.04 & 2.71 & 3.01 & 3.00 & 2.53 & 2.26 & 2.53 & 2.94 & 3.59 & 2.68 & 2.46 & 2.49 & 2.20 & 2.83 & 9.21 \\
S &  2.91 & 2.56 & 2.40 & 3.10 & 2.37 & 2.65 & 2.71 & 2.75 & 3.18 & 2.84 & 2.93 & 2.69 & 3.44 & 2.88 & 2.74 & 2.43 & 2.27 & 3.08 & 2.90 & 2.68 & 9.21 \\
T &  2.66 & 2.27 & 2.27 & 2.17 & 2.28 & 2.65 & 2.26 & 2.53 & 2.40 & 2.67 & 2.78 & 2.04 & 2.53 & 1.99 & 2.05 & 1.79 & 1.84 & 2.17 & 3.81 & 3.13 & 9.21 \\
V &  2.15 & 2.56 & 3.01 & 2.64 & 2.69 & 2.74 & 2.40 & 2.38 & 2.68 & 2.04 & 2.64 & 2.96 & 2.19 & 2.43 & 2.27 & 2.79 & 2.36 & 2.17 & 2.31 & 2.37 & 9.21 \\
W &  4.85 & 3.66 & 4.11 & 4.20 & 4.07 & 4.04 & 4.65 & 4.17 & 4.07 & 4.19 & 4.72 & 3.87 & 3.44 & 3.86 & 3.77 & 4.40 & 5.79 & 4.10 & 3.81 & 4.38 & 9.21 \\
Y &  2.60 & 2.56 & 3.29 & 3.43 & 2.97 & 2.65 & 2.57 & 3.17 & 2.63 & 3.09 & 4.03 & 2.69 & 2.06 & 2.94 & 3.13 & 2.92 & 3.85 & 2.90 & 3.12 & 2.44 & 9.21 \\
X &  9.21 & 9.21 & 9.21 & 9.21 & 9.21 & 9.21 & 9.21 & 9.21 & 9.21 & 9.21 & 9.21 & 9.21 & 9.21 & 9.21 & 9.21 & 9.21 & 9.21 & 9.21 & 9.21 & 9.21 & 9.21
\end{tabular*}}\\
\end{sidewaystable*}

\end{center}

\end{small}

\rowcolors{0}{white}{white}

\end{appendices}


\end{document}